\spnewtheorem{obsv}[theorem]{Observation}{\bfseries}{\itshape}
\newcommand{\overbar}[1]{\mkern 1.5mu\overline{\mkern-1.5mu#1\mkern-1.5mu}\mkern 1.5mu}
\title{Playing Snake on a Graph} 
\author{Denise Graafsma \and Bodo Manthey \and Alexander Skopalik}
\institute{University of Twente\\ Mathematics of Operations Research \\ Enschede, The Netherlands}
\begin{document}
\maketitle

\begin{abstract}
Snake is a classic computer game, which has been around for decades. Based on this game, we study the game of Snake on arbitrary undirected graphs. A snake forms a simple path that has to move to an apple while avoiding colliding with itself. When the snake reaches the apple, it grows longer, and a new apple appears. A graph on which the snake has a strategy to keep eating apples until it covers all the vertices of the graph is called snake-winnable.

We prove that determining whether a graph is snake-winnable is $\mathsf{NP}$-hard, even when restricted to grid graphs. We fully characterize snake-winnable graphs for odd-sized bipartite graphs and graphs with vertex-connectivity 1. While Hamiltonian graphs are always snake-winnable, we show that non-Ham\-il\-tonian snake-winnable graphs have a girth of at most 6 and that this bound is tight. 
\end{abstract}

\section{Introduction}\label{sec: intro}
In the game of Snake, players control a snake and have to guide it to apples that appear on the screen. With each apple consumed, the snake grows longer. The challenge is to grow the snake as long as possible while avoiding collisions with its own body. Based on this game, we study the game of Snake on a graph.

\subsection{Related work}
Over the years, many puzzles and games have been studied through a mathematical lens. Demaine and Hearn provide an extensive overview of the area of combinatorial games \cite{overview_games}, and Kendall et. al. published a survey of $\mathsf{NP}$-complete puzzles \cite{kendall2008survey}. A number of combinatorial games are played on a graph and involve two players with different roles. In pursuit-evasion games, such as Cops and Robbers \cite{Bonato2011,NOWAKOWSKI1983235,AIGNER19841,BONATO20095588}, one plays the pursuer, while the other tries to avoid getting caught. In Maker-Breaker games \cite{Hefetz2014,DUCHENE2025502,CHVATAL1978221}, the players alternately claim previously unclaimed edges. The Maker tries to possess a winning set, while the Breaker wants to prevent this. An example of a Maker-Breaker game is the Hamiltonicity game, where the Maker tries to possess a set of edges that form a Hamiltonian cycle on a complete graph. For these types of games, one of the most important questions is: given a graph, does one of the two players have a winning strategy. Nowakowski and Winkler gave a characterization for the graphs with a winning cop strategy \cite{NOWAKOWSKI1983235}. For the Hamiltonicity game, Chvátal and Erd\H{o}s proved that for large enough complete graphs, the Maker has a winning strategy \cite{CHVATAL1978221}. While the game of Snake on a graph can neither be described as a pursuit-evasion game, nor as a Maker-Breaker game, it does have two players with different roles: the snake who wants to grow as long as possible and the apple placer who tries to prevent this.

The work by De Biasi and Ophelders \cite{Biasi2018} on the Nibbler food collection problem is inspired by a different variant of Snake known as Nibbler. This problem asks the following: given a graph, the food locations, growth rate, and starting position of the snake, can the snake collect all the food? The growth rate indicates how much the snake grows each time it eats a piece of food. The Nibbler food collection problem is $\mathsf{NP}$-hard, even when restricted to solid grid graphs. Furthermore, if the growth rate is at least 2, then it is also $\mathsf{NP}$-hard on rectangular grid graphs.

\subsection{Our contributions}
While the work by De Biasi and Ophelders is primarily inspired by the game Nibbler, the version of Snake we consider is closer to the Nokia version. Instead of having all the apples on the graph from the beginning, only one apple is present at a time. A new apple is placed only when the previous one is consumed, meaning the snake does not know the locations of future apples. Hence, it will have to adjust its strategy according to where the next apple appears. In contrast to the game described by De Biasi and Ophelders, which is framed as a motion planning problem, our game can be viewed as a two-player game where the apple placer acts as an adversary to the snake. We also generalize the game to be played on any connected graph, rather than strictly adhering to the original game of Snake and only considering grid graphs.

For Snake on a graph, we show that determining whether the snake has a winning strategy is $\mathsf{NP}$-hard, even when restricted to grid graphs (Section~\ref{sec: complexity}). We also prove that the snake can never win on non-Hamiltonian graphs with a girth greater than 6 (Section~\ref{sec: girth}). Due to space constraints, some proofs have been omitted and can be found in the appendix.

\section{The game of Snake on a graph}\label{sec: game_desc}
In this section, we formally define the game of Snake on a graph and introduce some notation. The game is played on a simple graph~$G=(V,E)$ that is connected and has~$|V|\geq~3$. From now on, we assume all our graphs have this property.

 \subsection{Snake position}
  During the game, the snake always occupies an ordered set of vertices, which must form a simple path. We define the \textit{length} of the snake as the number of vertices on this path. Since the length of a path is defined as the number of edges it contains, the length of the snake is always one more than the length of the path it forms. 
  
  Let~$\ell$ be the current length of the snake. We denote the position of the snake by~$S=(s_1,\ldots, s_{\ell})$. We refer to~$s_1$ as the head, and~$s_{\ell}$ as the tail of the snake. By~$\overline{S}$, we denote all vertices in~$V$ that are not on~$S$. We refer to~$\overline{S}$ as the \textit{unoccupied set}.

  In some cases, we index the snake's position by time to better describe the snake's movement. By~$S^t=(s_1^t,\ldots, s_{\ell}^t)$, we denote the position of the snake at time~$t$.

 \subsection{Snake movement}
 Let~$S^t=(s_1^t,\ldots, s_{\ell}^t)$ be the current position of the snake. For a vertex~$v\in V$, the neighbor set~$N(v)$ denotes the set of vertices in~$V$ that are adjacent to~$v$. The head of the snake must move to a vertex in~$N(s_1^t)$. Suppose the head moves from~$s_1^t$ to some vertex~$v\in N(s_1^t)$. Then the next position of the snake becomes~$S^{t+1}=(v,s_1^t,\ldots,s_{\ell-1}^t)$. In other words, we add~$v$ to the beginning of the path and remove~$s_{\ell}^t$ from the end. 
 
 By our rules,~$S^{t+1}$ should still form a simple path. It follows that the head can move to any adjacent vertex that is either unoccupied or the current tail vertex, as depicted in Figure~\ref{fig: game: legal moves grid}. More formally, the head must move to some vertex in~$N(s_1^t)\cap\left(\overline{S^t}\cup\{s_\ell^t \}\right)$. In Figure~\ref{fig: game: illegal move grid}, we can see that if the snake moves to an occupied vertex that is not the tail, then the snake will no longer form a simple path. Hence, we forbid this type of movement.

 \begin{figure}[t]
     \centering
     \subfigure[The snake can move to an unoccupied vertex or the tail vertex.]{\includegraphics[width = 0.4\textwidth]{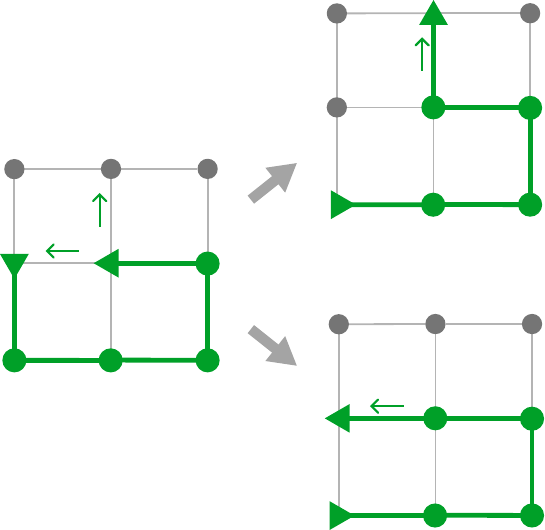}\label{fig: game: legal moves grid}}\hspace{1cm}
     \subfigure[After an illegal move, the snake no longer forms a simple path.]{\includegraphics[width = 0.4\textwidth]{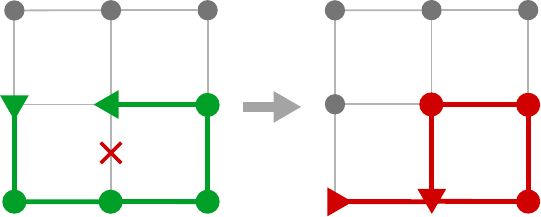}\label{fig: game: illegal move grid}}%
    \caption{The rules for the snake's movement are demonstrated on a grid graph.}
    \label{fig: game: moves on grid}
\end{figure}

 An exception to the rules is made when the snake has length~$\ell\leq 2$. For these shorter lengths, we do not allow the head to move to the tail vertex, since this would allow the snake to turn around, occupying the same edge twice.

 \subsection{Eating an apple}
 At the start of the game, the location of the first apple can be any vertex. Let~$a$ be the first apple location, then the snake automatically starts on~$a$ with~$S^0=(a)$. The next apple is placed on some vertex~$a'\neq a$. The game continues as follows. At all times, there is exactly one apple on the graph. The snake eats the apple by moving its head to this vertex, at which point a new apple is immediately placed on one of the unoccupied vertices. We sometimes refer to the apple location as the apple itself.
 
 Let~$S^t=(s_1^t,\ldots, s_{\ell}^t)$ be the current snake position and $a$ be the current apple location. Suppose the head moves from~$s_1^t$ to~$a$. Then the snake eats the apple on~$a$ and the next snake position will be~$S^{t+1}= (a,s_1^t,\ldots, s_{\ell}^t)$. So different to a ``normal'' movement, $s_{\ell}^t$ is not removed from the path and the snake grows one vertex longer.

 \begin{obsv}\label{obs: apple on snake}
Let~$a$ be the location of the apple at time~$t$, then~$a\notin S^t$.
\end{obsv}

\subsection{Winning and losing conditions}\label{sec: win conditions}
 If the snake manages to reach length~$|V|$, then we say that the snake wins. Note that when this happens, the snake will occupy all the vertices. Hence, there are no more vertices an apple could be placed on.

 If there is no vertex the snake can move to, then the snake loses. To be more precise, if~$\ell<|V|$, the snake is in the position~$S=(s_1,\ldots, s_{\ell})$, and 
 \begin{align*}
     N(s_1)\cap\left(\overline{S}\cup{s_{\ell}}\right)=\emptyset,
 \end{align*}
 then the snake loses. 
 
 We want to avoid strategies where the snake can keep moving around without making progress. Imagine, for example, a scenario where the snake will lose if it eats the apple, but it can keep moving in cycles, postponing its loss forever. To this end, the snake also loses if it repeats a previous position. More precisely, if, for the current position~$S^t$, there is some~$t'<t$ such that~$S^{t'}=S^t$, then the snake loses. Note that the snake can never repeat a position it was in at a shorter length. Furthermore, any strategy that repeats a position can be reformulated as one where the position is not repeated: we simply remove the set of moves between the two identical positions. Because of this, we will not be very careful with this rule when formulating winning snake strategies. If we find a winning strategy that violates it, we know there exists a winning strategy that does adhere to this rule.

 Our aim is to determine on which graphs the snake can always win, regardless of the apple placement. To this end, we view it as a two-player game: one player controls the snake, and the other places the apples. While the \textit{snake} tries to grow to length~$|V|$, the \textit{apple placer} tries to prevent the snake from doing so. We usually assume that both players play perfectly and always use a winning strategy if possible. When the snake has a winning strategy on a graph, we call the graph \textit{snake-winnable}. We refer to the problem of determining whether a graph is snake-winnable as the \textit{snake problem}.

To conclude our overview of the game, we make a few observations regarding snake-winnable graphs. A \textit{Hamiltonian path} is a simple path that visits every vertex in $V$. Similarly, a \textit{Hamiltonian cycle} is a simple cycle that visits every vertex, and we call a graph \textit{Hamiltonian} if it has a Hamiltonian cycle. Recall that we assume that $|V|\geq 3$.

\begin{obsv}\label{obs: no Ham path}
If~$G$ does not contain a Hamiltonian path, then~$G$ is not snake-winnable.
\end{obsv}

\begin{obsv}\label{obs: ham-win}
If~$G$ is Hamiltonian, then~$G$ is snake-winnable.
\end{obsv}

\begin{obsv}\label{obs: bipartite ham path}
    Let~$G=(X\cup Y, E)$ be a bipartite graph. If~$\left||X|-|Y|\right|>1$, then~$G$ is not snake-winnable.
\end{obsv}

\begin{obsv}\label{obs: degree_one}
For a graph~$G=(V,E)$, if there is some~$v\in V$ with degree~$d(v)=1$, then~$G$ is not snake-winnable.
\end{obsv}

\section{The complexity of Snake on a graph}\label{sec: complexity}
In this section, we show that the snake problem is $\mathsf{NP}$-hard, even when restricted to grid graphs. To do this, we first characterize the odd-sized bipartite graphs that are snake-winnable. We then use this characterization to formulate a reduction from the Hamiltonian cycle problem on grid graphs to the snake problem on odd-sized grid graphs.

\begin{definition}
    A \textbf{rectangular grid graph}~$G=(V,E)$ is a graph that has an embedding with~$V = [m] \times [n]$ with~$m,n\in \mathbb{N}$. For any~$u,v\in V$ we have ~$uv\in E$ if and only if~$\|u-v\|=1$.
\end{definition}

\begin{definition}
    A graph~$G=(V,E)$ is a \textbf{grid graph} if and only if it is a vertex-induced subgraph of some rectangular grid graph.
\end{definition}

\begin{definition}
    The \textbf{theta graph}~$\Theta(p,q,r)$ is constructed by taking two vertices~$u$ and~$v$ and connecting them by three internally disjoint paths of lengths~$p$,~$q$ and~$r$. If at least one of~$p$,~$q$ or~$r$ is 0, then~$u=v$.
\end{definition}

\begin{figure}[t]
    \centering
    \includegraphics[width=0.5\textwidth]{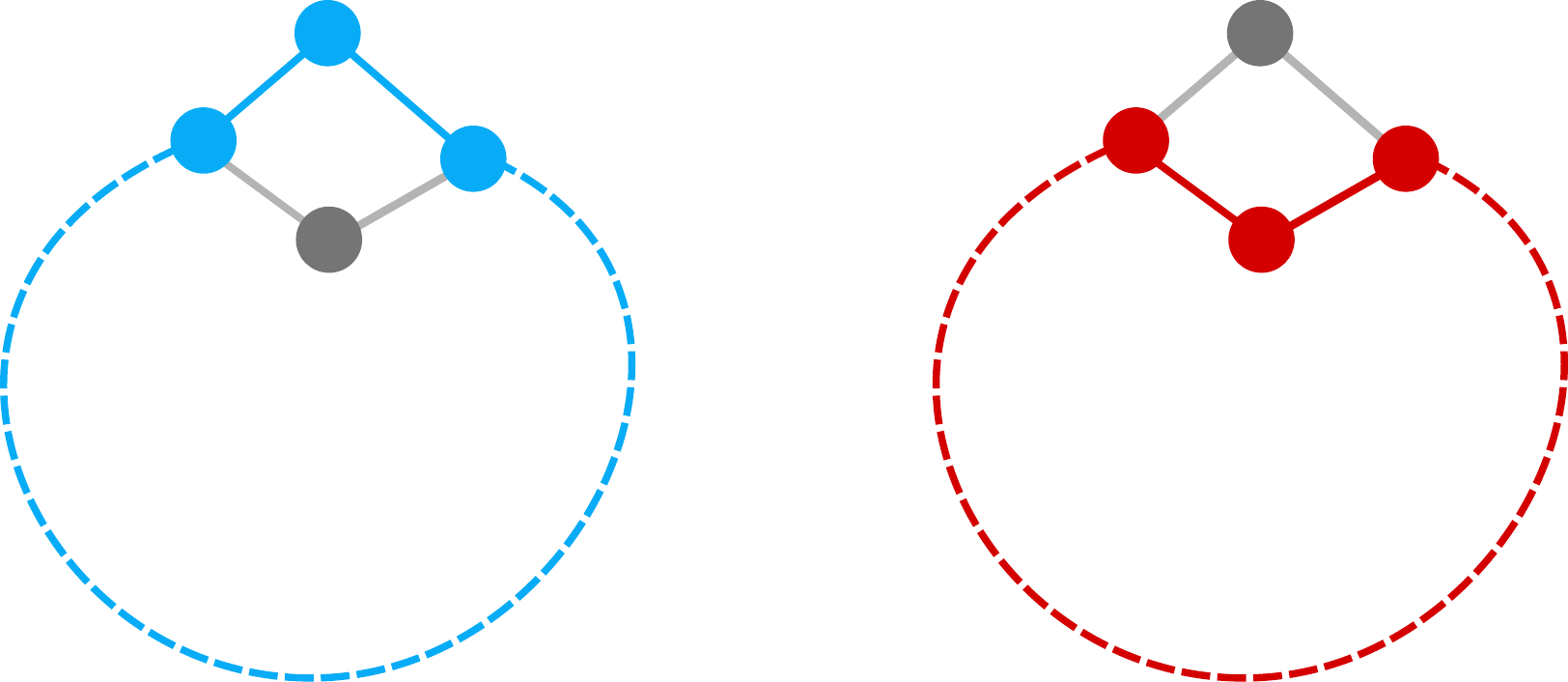}
    \caption{The snake always moves along one of the two cycles of length $|V|-1$ in the $\Theta(|V|-3,2,2)$ subgraph.}
    \label{fig: complex: theta two cycles}
\end{figure}

\begin{lemma}\label{lemma: theta win}
    If a graph~$G=(V,E)$ has~$\Theta(|V|-3,2,2)$ as a spanning subgraph, then it is snake-winnable.
\end{lemma}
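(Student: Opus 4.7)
The plan is to exhibit an explicit winning strategy for the snake using the two length-$(n-1)$ cycles of the spanning $\Theta$-subgraph, where $n=|V|$. Label these cycles $C_a = u - w_1 - \cdots - w_{n-4} - v - a - u$ and $C_b = u - w_1 - \cdots - w_{n-4} - v - b - u$; they share all vertices and the ``long'' edges along the $u$-to-$v$ path, differing only in whether they close through $a$ or through $b$. The strategy I propose keeps the snake's body as a contiguous arc of one of these two cycles at every time, switching from one to the other whenever the apple forces the issue.

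The move rule has two regimes. In Regime~I the apple lies on the cycle the snake currently forms an arc of, and the snake slides its head one step forward along that cycle each turn; since the arc is contiguous and its head leads, this slide is essentially the only legal non-tail move, and within at most $n-1$ slides the head arrives at the apple, after which the snake remains a sub-arc of the same cycle with the apple at the head. In Regime~II the apple sits on the vertex unique to the opposite cycle --- say, on $b$ while the snake is an arc of $C_a$. The snake slides forward along $C_a$ until its head reaches $v$; a direct length-window calculation shows that for every $\ell \le n - 2$ the $\ell$-vertex window of $C_a$ ending at $v$ avoids $a$, so the body is simultaneously a sub-arc of $C_b$, and the step $v \to b$ eats the apple and places the snake as a sub-arc of $C_b$ with $b$ at the head.

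The delicate moment is the immediate aftermath of the switch, since the head at $b$ has only $u$ and $v$ as graph-neighbors and $v$ is always the second vertex of the body. The rescue uses ``tail-moves''. A case analysis of the pre-switch tail shows that for pre-switch length $\ell = n - 2$ the tail is exactly $u$, so the tail-move $b \to u$ is legal and rotates the snake into an arc of $C_b$ with head at $u$; for pre-switch length $\ell \le n - 3$ the vertex $u$ is unoccupied and the head simply steps onto it, again producing an arc of $C_b$ with head at $u$. Either way the snake is ready to resume Regime~I on $C_b$, and a symmetric argument handles a switch from $C_b$ back to $C_a$ through the step $v \to a$. The endgame at $\ell = n - 1$ works similarly: the body fills one cycle and the apple is the one missing vertex on the other cycle; a sequence of at most $n - 2$ tail-moves rotates the head to $u$ or $v$, from which the final step wins.

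The main obstacle I anticipate is the combined verification of the switch step and the tail-move rescue, together with the no-repeat rule. The no-repeat condition is automatic because each slide or rotation shifts the snake's arc by exactly one position along an $(n-1)$-cycle, so no arc can repeat within $n - 2$ consecutive non-eating moves, and any eating step strictly increases the snake's length, so configurations at different lengths cannot coincide. Confirming the length-window calculation at head-position $v$ and the case analysis of the post-switch tail then ensures the snake always has a legal next move, which completes the proof.
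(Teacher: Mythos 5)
Your proposal is correct and follows essentially the same strategy as the paper: keep the snake as a contiguous arc of one of the two $(|V|-1)$-cycles of the theta subgraph and switch to the other cycle exactly when the apple lands on the single off-cycle vertex. The paper gives only this high-level description, while you additionally verify the legality of the switch and the post-switch tail-move, which the paper leaves implicit.
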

\begin{proof}
The snake can employ the following strategy, which is depicted in Figure~\ref{fig: complex: theta two cycles}. Until it has length~$|V|-1$, the snake always moves along one of the two cycles that consist of the path of length~$|V|-3$, and one of the paths of length~$2$. This means that there is always only one vertex that is not on this cycle, namely the vertex on the remaining path of length 2. If the apple is placed on this vertex, then the snake switches to the other cycle that does contain this vertex. \qed
\end{proof}

\begin{figure}[t]
    \centering
    \includegraphics[width=0.4\textwidth]{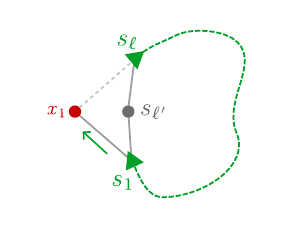}
    \caption{The snake is on a cycle of length $\ell +1$ right before it eats the second to last apple.}
    \label{fig: complex: odd bipartite theta}
\end{figure}

\begin{theorem}\label{thm: odd bipartite}
    Let~$G=(V, E)$ be a bipartite graph with partition~$V=X\cup Y$ and $|V|$ odd. Then~$G$ is snake-winnable if and only if~$\Theta(|V|-3,2,2)$ is a spanning subgraph of~$G$.
\end{theorem}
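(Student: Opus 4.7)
The reverse direction is immediate from Lemma~\ref{lemma: theta win}. For the forward direction, suppose $G$ is bipartite with $|V|$ odd and snake-winnable; write $V=X\cup Y$. By Observation~\ref{obs: bipartite ham path}, $\bigl||X|-|Y|\bigr|=1$, so with $|V|=2k+1$ we may assume $|X|=k+1$ and $|Y|=k$.

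I would first establish a \emph{winning criterion at length $|V|-1$}: if the snake is $S=(s_1,\ldots,s_{|V|-1})$ and the apple sits on the unique unoccupied vertex $a$, then the snake wins iff $s_1\in N(a)$ (direct eat) or $s_1\in N(s_{|V|-1})$. The second case works because the first head-to-tail move requires only the edge $s_1s_{|V|-1}$; every subsequent head-to-tail move uses an edge already present in the original snake path $s_1s_2\cdots s_{|V|-1}$, so the snake can rotate around the cycle $s_1s_2\cdots s_{|V|-1}s_1$ until the head lands on a neighbor of $a$ (some such vertex exists by connectivity of $G$). A parity check then shows $a\in X$ and $s_1,s_{|V|-1}$ lie in opposite colour classes.

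Next I would examine the state just before the snake eats the second-to-last apple, as depicted in Figure~\ref{fig: complex: odd bipartite theta}. Let the snake body be $P=(p_1,\ldots,p_{|V|-2})$ and let the two unoccupied vertices be $u$ and $v$. Since the apple placer freely picks whether to place the apple at $u$ or at $v$, the snake must win in both cases. In the parity-balanced case where $p_1,p_{|V|-2}\in Y$ and $u,v\in X$, bipartiteness forbids $u\in N(v)$; applying the length-$|V|-1$ criterion to the snake $(w,p_1,\ldots,p_{|V|-2})$ obtained after eating $w\in\{u,v\}$ therefore forces $w\in N(p_{|V|-2})$. Performing this analysis for both apple-placer choices gives that $u$ and $v$ are each adjacent to both $p_1$ and $p_{|V|-2}$; together with the path $P$ these four extra edges form exactly the spanning $\Theta(|V|-3,2,2)$ with poles $p_1,p_{|V|-2}$ and short arms through $u$ and $v$.

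The main obstacle is the parity-unbalanced case ($p_1,p_{|V|-2}\in X$, $u\in X$, $v\in Y$), where the snake cannot eat the $X$-coloured unoccupied vertex directly and must maneuver first; here I expect to show that the snake can, while still at length $|V|-2$, move (without eating) to a parity-balanced configuration before the decisive eat, reducing to the previous analysis. A complementary subtlety is guaranteeing that the snake's body is the same in the two apple-placer subtrees, so that the derived edges apply to the same pair of endpoints; this can be handled either by considering an apple placer that defers its final choice until after observing the snake's body, or by an exchange argument combining the two possible bodies into a single $\Theta(|V|-3,2,2)$. \qed
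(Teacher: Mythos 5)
Your reverse direction and your ``winning criterion at length $|V|-1$'' (the snake survives iff its head is adjacent to the apple or to its tail, since a head-to-tail move puts it on a $(|V|-1)$-cycle around which it can rotate to any neighbour of the last free vertex) are both correct, and the target structure you aim for --- the two unoccupied $X$-vertices each adjacent to both ends of the snake's body, closing up a spanning $\Theta(|V|-3,2,2)$ --- is exactly the right picture. However, the two issues you defer to the end are not loose ends; they are the substance of the proof, and your proposed patches do not close them. First, the parity-unbalanced case is not something the snake escapes by manoeuvring: the apple placer \emph{prevents} it one step earlier. At length $|V|-3$ the snake occupies equally many vertices of $X$ and $Y$, so exactly one unoccupied vertex lies in $Y$; placing the apple there forces the snake's head (and hence, since $|V|-2$ is odd, both endpoints of its body) into $Y$ when it grows to length $|V|-2$. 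Without this move the apple placer cannot guarantee your balanced configuration at all, and your reduction of the unbalanced case to the balanced one lands in a position where the apple is already committed to one vertex, so the ``win against both placements'' argument is no longer available.

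Second, and more seriously, your derivation of the four edges assumes the snake eats $w$ by a direct move from the body $P$. The snake can instead move to the \emph{other} unoccupied vertex first; since at length $|V|-2$ it cannot move to its tail (that would close an odd cycle in a bipartite graph), it is then locked into rotating around a $(|V|-1)$-cycle formed by its body and the successively freed tail vertices, and the decisive eat eventually happens from a rotated body whose endpoints are not $p_1,p_{|V|-2}$. Your ``deferred choice / exchange'' idea does not repair this, because the adjacencies extracted from the two subtrees then attach to different pole pairs and need not assemble into a single theta. The paper's proof avoids the two-subtree comparison entirely: since no spanning $\Theta(|V|-3,2,2)$ exists, at least one unoccupied $X$-vertex $x_1$ fails to be adjacent to both the head and the tail; the apple is placed on $x_1$, a direct eat is shown to leave no legal follow-up move, and the rotation line of play is shown to end, at the moment $x_1$ is finally eaten, in a position whose only escape edge would itself complete the forbidden theta (one arm through $x_1$, the other through the trailing previous-tail vertex). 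You need an argument of this single-line-of-play type to finish the forward direction.
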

\begin{proof}[Proof sketch.]
    If~$\Theta(|V|-3,2,2)$ is a spanning subgraph of~$G$, then the snake can use the strategy from Lemma~\ref{lemma: theta win} to win.
    
     It remains to show that if there is no~$\Theta(|V|-3,2,2)$ spanning subgraph, then the apple placer has a winning strategy. We give a brief overview of this strategy, the full proof can be found in Appendix \ref{appendix: complexity}. 
     
     By cleverly placing the apples, the apple placer can guarantee the snake is on a cycle of length $\ell +1$ right before it eats the second to last apple. This is depicted in Figure \ref{fig: complex: odd bipartite theta}. After eating this apple, both the head and the remaining unoccupied vertex are adjacent to the previous head position. Hence, they must be in the same part and cannot be adjacent. It follows that the head can only move to the tail. However, by doing so, the snake would create a~$\Theta(|V|-3,2,2)$ spanning subgraph. Thus, there is no vertex the head can move to after eating the second to last apple, and thus the snake will lose.\qed
\end{proof}

\begin{corollary}\label{cor: odd grid graph}
    Let~$G=(V, E)$ be a grid graph with $|V|$ odd. Then~$G$ is snake-winnable if and only if~$\Theta(|V|-3,2,2)$ is a spanning subgraph of~$G$.
\end{corollary}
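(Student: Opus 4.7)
The plan is to derive this directly from Theorem~\ref{thm: odd bipartite} by verifying that the class of grid graphs falls within the class of bipartite graphs to which that theorem applies. So the whole argument reduces to two observations plus a one-line citation.

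First I would note that every rectangular grid graph is bipartite: the 2-coloring $\chi(i,j) = (i+j) \bmod 2$ is proper, since adjacent vertices in a rectangular grid graph differ by $1$ in exactly one coordinate, and thus have opposite parity. Second, since any grid graph $G$ is by definition a vertex-induced subgraph of some rectangular grid graph, restricting this 2-coloring to $V(G)$ yields a proper 2-coloring of $G$. Hence $G$ is bipartite, with partition $V = X \cup Y$ given by the two color classes.

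Now, given the hypothesis that $|V|$ is odd, $G$ is an odd-sized bipartite graph, so Theorem~\ref{thm: odd bipartite} applies verbatim and gives the equivalence: $G$ is snake-winnable if and only if $\Theta(|V|-3,2,2)$ is a spanning subgraph of $G$. There is no real obstacle here; the only thing to be careful about is pointing out that the ``grid graph'' definition in this paper allows arbitrary vertex-induced subgraphs (not just solid grids), but bipartiteness is inherited by induced subgraphs, so this causes no issue. The corollary follows with essentially no additional work beyond recording bipartiteness.
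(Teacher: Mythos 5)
Your proof is correct and matches the paper's (implicit) argument: the corollary is stated as an immediate consequence of Theorem~\ref{thm: odd bipartite}, relying exactly on the standard parity 2-coloring showing grid graphs are bipartite. Nothing further is needed.
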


We will use the characterization from Corollary \ref{cor: odd grid graph} to show that the snake problem on grid graphs is $\mathsf{NP}$-hard. To do so, we reduce from the Hamiltonian cycle problem on grid graphs, which is $\mathsf{NP}$-complete \cite{Itai1982}.

\begin{theorem}\label{thm: NP-hard grid}
    The snake problem is $\mathsf{NP}$-hard, even restricted to grid graphs.
\end{theorem}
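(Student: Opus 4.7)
My plan is to reduce from the Hamiltonian cycle problem on grid graphs, which is $\mathsf{NP}$-complete \cite{Itai1982}. Given an input grid graph $H$, I construct in polynomial time a grid graph $G$ with $|V(G)|$ odd, such that $G$ is snake-winnable iff $H$ has a Hamiltonian cycle. By Corollary \ref{cor: odd grid graph} this reduces to arranging that $\Theta(|V(G)|-3,2,2)$ be a spanning subgraph of $G$ precisely when $H$ is Hamiltonian. Since grid graphs are bipartite, a Hamiltonian cycle requires equal bipartition and hence $|V(H)|$ even; when $|V(H)|$ is odd I simply output a fixed non-snake-winnable grid graph, for instance any grid graph with a pendant vertex (by Observation \ref{obs: degree_one}).

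The construction attaches a constant-size gadget to $H$ that plays the role of the 4-cycle part of the intended $\Theta$. Concretely, I locate a vertex $a \in V(H)$ of degree 2 whose two neighbors $u, v$ occupy diagonally opposite grid positions, such that the fourth corner $b$ of the unit square $\{a, u, v, b\}$ is absent from $V(H)$ and has no grid-neighbor in $V(H)$ besides $u$ and $v$. If no such $a$ exists in $H$, I first preprocess $H$ by attaching a constant-size extender that creates one without altering Hamiltonicity. Setting $G = H \cup \{b\}$ yields $|V(G)| = |V(H)|+1$ odd, and in $G$ the 4-cycle $u\text{-}a\text{-}v\text{-}b\text{-}u$ is present with $b$ adjacent only to $u$ and $v$. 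The forward direction is then immediate: any Hamiltonian cycle $C$ of $H$ uses both edges $au$ and $av$ (since $a$ has degree 2 with neighbors $u, v$), so $C$ minus the vertex $a$ is a Hamiltonian path in $H - a$ from $u$ to $v$. Combined with the two length-2 paths $u\text{-}a\text{-}v$ and $u\text{-}b\text{-}v$, this spans a $\Theta(|V(G)|-3,2,2)$ subgraph of $G$, so $G$ is snake-winnable by Corollary \ref{cor: odd grid graph}.

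The main obstacle will be the converse: showing that any spanning $\Theta(|V(G)|-3,2,2)$ in $G$ forces a Hamiltonian cycle in $H$. In a grid graph the only pairs of vertices with two common neighbors are diagonals of unit squares, so every candidate 4-cycle for the $\Theta$ is a unit square of $G$, and $G$ may well contain many such squares. The useful observation is that $b$ has only $u$ and $v$ as neighbors in $G$: any $\Theta$ whose two short-path internal vertices include both $u$ and $v$ leaves $b$ isolated in the subgraph hosting the long Hamiltonian path, a contradiction. Ruling out the remaining candidate unit squares---those lying entirely inside $H$---requires a case analysis leveraging the bipartite structure of $G$, possibly tightened by enriching the preprocessing so that the intended square $\{a, u, v, b\}$ is the only feasible 4-cycle. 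Once this is done, the long Hamiltonian path from $u$ to $v$ in $G - \{a, b\} = H - a$ closes via $a$ into a Hamiltonian cycle of $H$, completing the reduction.
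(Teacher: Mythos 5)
Your overall strategy coincides with the paper's: reduce from the Hamiltonian cycle problem on grid graphs, use the characterization of snake-winnable odd grid graphs via the $\Theta(|V|-3,2,2)$ spanning subgraph (Corollary~\ref{cor: odd grid graph}), and attach a small gadget supplying the two length-2 paths of the theta so that the long path exists iff the original graph is Hamiltonian. However, your write-up leaves the two places where the actual work lives unresolved, and these are genuine gaps rather than routine details.

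First, the converse direction is the heart of the reduction and you explicitly defer it (``requires a case analysis\dots possibly tightened by enriching the preprocessing''). It can in fact be closed without any extra preprocessing: since $a$ and $b$ each have degree~$2$ in $G$ with neighbor set $\{u,v\}$, neither can be a hub of the theta (hubs need degree~$3$), so each is either the internal vertex of a length-2 path or an internal vertex of the long path. If either is on a length-2 path, the hubs must be $u$ and $v$, whose only common neighbors are $a$ and $b$, so both short paths are $u\text{-}a\text{-}v$ and $u\text{-}b\text{-}v$ and the long path is a Hamiltonian $u v$-path of $H-a$, closing to a Hamiltonian cycle of $H$. Otherwise both $a$ and $b$ lie on the long path, forcing all four edges $au,av,bu,bv$ onto a simple path, which would contain the 4-cycle $u\text{-}a\text{-}v\text{-}b\text{-}u$ --- a contradiction. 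You should state this argument; as written, the reduction's correctness is unproven. Second, your construction presupposes a degree-2 vertex $a$ whose neighbors are diagonal corners of a unit square whose fourth corner is absent from $H$ and has no other neighbors in $H$; such a configuration need not exist, and the ``constant-size extender that creates one without altering Hamiltonicity'' is exactly the delicate part (grafting vertices onto a graph can destroy or create Hamiltonian cycles unless the attachment point is controlled). The paper sidesteps this by attaching its gadget at the rightmost vertex $v$ of the top row and its left neighbor $u$: there the edge $uv$ is forced into every Hamiltonian cycle because $v$ has degree at most~$2$, which is precisely what makes ``Hamiltonian cycle in $G$'' equivalent to ``Hamiltonian $uv$-path''. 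Without a comparably explicit and provably Hamiltonicity-preserving preprocessing step, your reduction is not yet a polynomial-time many-one reduction.
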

\begin{proof}[Proof sketch.]
    We give a general description of the reduction, further details can be found in Appendix \ref{appendix: complexity}. Given some instance~$G$ of the Hamiltonian cycle problem on grid graphs, we create a new graph~$G'$ by attaching the gadget depicted in Figure \ref{fig: complex: reduction grid graph}. By construction, $G'$ has a~$\Theta(|V|-3,2,2)$ spanning subgraph if and only if $G$ has a Hamiltonian cycle. \qed
\end{proof}

\begin{figure}[t]
     \centering
     \subfigure[The original graph~$G$.]{\includegraphics[width = 0.35\textwidth]{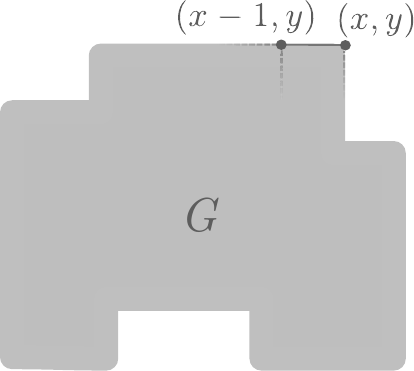}\label{fig: complex: top two grid}}\hspace{15mm}
     \subfigure[Attaching the gadget to~$G$ to create~$G'$.]{\includegraphics[width = 0.35\textwidth]{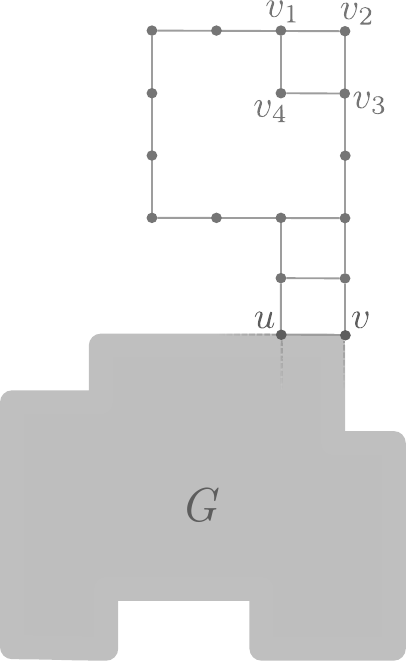}\label{fig: complex: attach gadget}}
    \caption{The reduction from a grid graph~$G$ to~$G'$ by using a gadget.}
    \label{fig: complex: reduction grid graph}
\end{figure}

We leave open whether the snake problem is in $\mathsf{NP}$,
as it seems unclear to us whether there is a compact and efficiently verifiable description of winning strategies.

\section{The girth of snake-winnable graphs}\label{sec: girth}
In this section, we prove that any non-Hamiltonian snake-winnable graph has a girth of at most 6. We also show this bound is tight by giving an example of a non-Hamiltonian snake-winnable graph that has a girth of 6.

\begin{definition}
    The \textbf{girth} of a graph~$G$, denoted by~$g(G)$, is the length of the shortest cycle in~$G$.
\end{definition}

Let~$G=(V,E)$ be a graph and~$C$ be some cycle in~$G$. We say a cycle~$C$ \textit{contains} the snake if the entire path formed by the snake lies on~$C$. To prove that any non-Hamiltonian graph with a girth greater than 6 is not snake-winnable, we use the following. When the snake reaches length~$|V|-3$, there are four possible scenarios: the snake can be contained in a cycle of length~$|V|-3$, a cycle of length~$|V|-2$, a cycle of length~$|V|-1$, or in no cycle. We will show that in each of these scenarios, the apple placer has a winning strategy.
Note that there can be no cycle of length~$|V|$ that contains the snake since this would make the graph Hamiltonian.

The following lemma shows that, as the snake grows longer, the girth required to make certain types of moves becomes smaller.

\begin{figure}[t]
\includegraphics[width = 8.5cm]{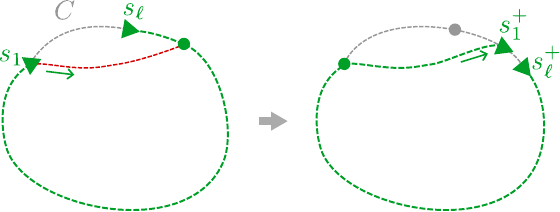}
\centering
\caption{The snake leaves~$C$ from~$s_1$ and returns to~$C$ at~$s_1^+$.}
\label{fig: girth: path outside C one}
\end{figure}

\begin{lemma}\label{lemma: leave cycle}
Let~$C$ be a cycle in~$G$ that contains the snake and let~$\ell$ be the current length of the snake. Suppose the head of the snake leaves~$C$ and returns to~$C$ after visiting~$m$ vertices in~$\overline{C}$. Then~$g(G)\leq|C|-\ell+2m+2$.
\end{lemma}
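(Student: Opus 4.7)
The plan is to construct an explicit short cycle in $G$ by splicing the $(m+1)$-edge path that the head traced through $\overline{C}$ with a short arc of $C$, and to verify that this arc can always be chosen short enough. To set up, I would relabel the vertices of $C$ cyclically as $c_0, c_1, \ldots, c_{|C|-1}$ so that the snake initially occupies $(c_{\ell-1}, c_{\ell-2}, \ldots, c_0)$ with head $s_1 = c_{\ell-1}$ and tail $s_\ell = c_0$, and call $c_{\ell-1} \to c_\ell \to c_{\ell+1} \to \cdots$ the \emph{forward} direction along $C$, i.e.\ the one that first leaves the snake's body.

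Let $v_1, \ldots, v_m \in \overline{C}$ be the excursion vertices visited in order and $s_1^+ \in C$ the return vertex, so that $P := s_1 v_1 v_2 \cdots v_m s_1^+$ is a walk of length $m+1$ in $G$ whose internal vertices all lie in $\overline{C}$. Then $P$ is internally disjoint from $C$, so combining $P$ with the forward arc from $s_1$ to $s_1^+$ on $C$ produces a cycle in $G$ (and $P$ is already a cycle in the degenerate case $s_1 = s_1^+$).

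The crux is to bound the length of the forward arc by $|C|-\ell+m+1$. I would do this by tracking the snake's configuration: when $m \leq \ell - 1$, the snake just before the return move is $(v_m, v_{m-1}, \ldots, v_1, c_{\ell-1}, c_{\ell-2}, \ldots, c_m)$ with tail $c_m$, so the vertices on $C$ that are legal targets for the next move---the unoccupied ones together with the current tail---are exactly
\[
\{c_\ell, c_{\ell+1}, \ldots, c_{|C|-1}, c_0, c_1, \ldots, c_m\},
\]
which are precisely the vertices at forward distances $1, 2, \ldots, |C|-\ell+m+1$ from $s_1$. In the remaining case $m \geq \ell$ the snake has moved off $C$ entirely and every vertex of $C$ is legal, but then the crude bound $|C|-1 \leq |C|-\ell+m+1$ follows from $\ell \leq m$, so the same estimate still holds.

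Concatenating the $m+1$ edges of $P$ with a forward arc of length at most $|C|-\ell+m+1$ produces a cycle in $G$ of length at most $|C|-\ell+2m+2$, yielding $g(G)\le|C|-\ell+2m+2$. I expect the only real obstacle to be the bookkeeping that identifies the legal-return set with a contiguous forward arc of the stated length; the geometric content is simply that the snake's body on $C$ obstructs returns only to the interior of the backward arc $c_{m+1}, \ldots, c_{\ell-1}$, so every legal $s_1^+$ sits within forward distance $|C|-\ell+m+1$ of the original head.
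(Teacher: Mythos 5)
Your proposal is correct and follows essentially the same route as the paper's proof: both bound the skipped arc of~$C$ by~$|C|-\ell+m+1$ using the fact that the return vertex must be unoccupied on~$C$ or the current tail, and then splice this arc with the length-$(m+1)$ excursion path to obtain the short cycle. Your explicit $c_i$-coordinates and your handling of the $m\ge\ell$ case via the crude bound replace the paper's dismissal of the trivial regime $\ell\le 2m+2$, but the substance is identical (one cosmetic caveat: if an apple is eaten during the excursion the tail lags behind $c_m$, which only shrinks the legal-return arc, so your bound still holds).
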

\begin{proof}
    The idea behind the proof is as follows. The path the snake takes outside of~$C$, together with the section of~$C$ it ``skips'', forms a cycle. But the snake cannot just return anywhere on~$C$, since it must move to an unoccupied vertex or the tail. Hence, the length of the section it can skip is bounded. We will show that the length of this skipped section is at most~$|C|-\ell+m+1$. Combined with the path the snake takes through~$\overline{C}$, this will result in a cycle of length at most~$|C|-\ell+2m+2$.
    
    We first note that if~$\ell \leq 2m+2$, then~$|C|-\ell+2m+2\geq |C|$. Since the existence of~$C$ implies~$g(G)\leq |C|$, the statement is trivially true. Hence, we may assume that~$\ell > 2m+2> m$, which means that while the head moves through~$\overline{C}$, the tail remains on~$C$.
    
    Suppose the head of the snake leaves~$C$ and returns to~$C$ after visiting~$m$ vertices in~$\overline{C}$, as depicted in Figure~\ref{fig: girth: path outside C one}. Let~$S=(s_1,\ldots, s_\ell)$ be the position of the snake right before the head leaves~$C$. Similarly, let~$S^-=(s_1^-,\ldots, s_\ell^-)$ be the position of the snake right before the head returns to~$C$ and~$S^+=(s_1^+,\ldots, s_\ell^+)$ the position right after. 
    By~$P_C$ we denote the section of~$C$ that the snake skips, as depicted in Figure~\ref{fig: girth: path outside C two}. More precisely,~$P_C$ is the path the head would have taken had it stayed on~$C$, with endpoints~$s_1$ and~$s_1^+$. When the snake is in position~$S^-=(s_1^-,\ldots, s_\ell^-)$, all the unoccupied vertices on~$C$ are between~$s_1$ (from where the snake left~$C$) and~$s_\ell^-$. From~$s_1^-$, the head will move to one of these unoccupied vertices or to~$s_\ell^-$. Hence, the length of~$P_C$ is maximized if the head returns to~$C$ by moving to its tail and~$s_1^+=s_\ell^-$. This is depicted in Figure~\ref{fig: girth: path outside C three}.
    
    Since the snake makes~$m$ moves before returning to~$C$, we know that~$s_\ell^- = s_{\ell -m}$. Thus, the length of~$P_C$ is maximized if~$s_1^+=s_{\ell -m}$. In this case,~$P_C$ consists of~$(s_\ell,\ldots, s_{\ell -m})$ and the path from~$s_1$ to~$s_\ell$ through the section of~$C$ that was unoccupied by~$S$. The section~$(s_\ell,\ldots, s_{\ell -m})$ has length~$m$. Before the head leaves~$C$, there are~$|C|-\ell$ unoccupied vertices between~$s_1$ and~$s_\ell$. This forms a path of length~$|C|-\ell+1$. Combined, we obtain that~$P_C$ has length at most~$|C|-\ell+1+m$.
    
   The head moves from~$s_1$ to~$s_1^+$ by visiting~$m$ vertices in~$\overline{C}$. It follows that there must also exist some~$s_1 s_1^+$-path of length at most~$m+1$ with all internal vertices in~$\overline{C}$, as depicted in Figure~\ref{fig: girth: path outside C three}. We will call this path~$P_{\overbar{C}}$, as depicted in Figure~\ref{fig: girth: path outside C two}. By combining~$P_C$ and~$P_{\overbar{C}}$, we obtain a cycle of length at most~$|C|-\ell+2m+2$. It follows that~$g(G)\leq|C|-\ell+2m+2$. \qed
\end{proof}

\begin{figure}[t]
     \centering
     \subfigure[The paths~$P_C$ on~$C$ and~$P_{\overbar{C}}$ through~$\overline{C}$.]{\includegraphics[width = 0.31\textwidth]{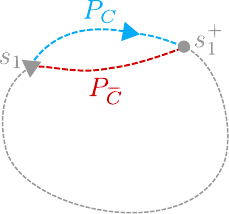}\label{fig: girth: path outside C two}}\hspace{20mm}
     \subfigure[The path $P_C$ has length at most~$|C|-\ell+1+m$ and~$P_{\overbar{C}}$ has length~$m+1$.]{\includegraphics[width = 0.32\textwidth]{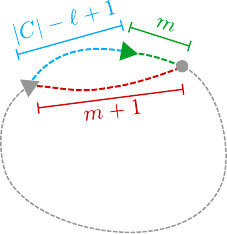}\label{fig: girth: path outside C three}}
    \caption{If the snake returns to~$C$ after visiting~$m$ vertices,~$G$ must contain a cycle of length~$|C|-\ell+2m+2$.}
    \label{fig: girth: path outside C}
\end{figure}

\begin{corollary}\label{cor: girth visit all}
Let~$G=(V,E)$ be a graph with~$g(G)> 2k$. Let~$\ell=|V|-k$ be the length of the snake, with~$k\geq 2$. Suppose~$G$ has a cycle~$C$ that contains the snake. If the head of the snake leaves~$C$, then it must visit all vertices in~$\overline{C}$ before returning to~$C$.
\end{corollary}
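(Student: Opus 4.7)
The plan is to argue by contradiction: assume the snake leaves $C$ and returns to $C$ after visiting only $m < |\overline{C}|$ vertices, and show that this forces a short cycle in $G$, violating $g(G) > 2k$. The proof should be a direct application of Lemma~\ref{lemma: leave cycle}, with the bookkeeping carried out in terms of $|C|$, $\ell$, and $k$.

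First, I would record the two structural inequalities that frame the problem. Since $C$ contains the snake, $|C| \geq \ell$; and since $C$ is a cycle in $G$ on at most $|V|$ vertices, $|\overline{C}| = |V|-|C| \leq |V|-\ell = k$. Under the assumption that not all vertices of $\overline{C}$ are visited before the head returns, we have $m \leq |\overline{C}|-1 = |V|-|C|-1$.

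Now I would feed this into Lemma~\ref{lemma: leave cycle}, which gives
\begin{align*}
g(G) \;\leq\; |C| - \ell + 2m + 2 \;\leq\; |C| - \ell + 2(|V|-|C|-1) + 2 \;=\; 2|V| - |C| - \ell.
\end{align*}
Substituting $|V| = \ell + k$ yields $g(G) \leq 2k - (|C|-\ell)$, and since $|C| \geq \ell$, this gives $g(G) \leq 2k$. This contradicts the hypothesis $g(G) > 2k$, so the head must in fact visit every vertex of $\overline{C}$ before returning to $C$.

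The only real subtlety is making sure Lemma~\ref{lemma: leave cycle} applies, which requires $\ell > 2m+2$; but if this fails, then $|C|-\ell+2m+2 \geq |C| \geq g(G)$ already, and the lemma's conclusion was observed to hold trivially in its proof, so the same chain of inequalities still goes through. I do not expect any real obstacle here — the corollary is essentially a restatement of Lemma~\ref{lemma: leave cycle} with $m$ maximal, and the role of $k \geq 2$ is just to ensure the statement is non-vacuous (so that $\overline{C}$ can be non-empty and the bound $g(G) > 2k \geq 4$ is meaningful).
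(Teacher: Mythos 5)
Your proof is correct and follows essentially the same route as the paper's: assume $m \leq |V|-|C|-1$, apply Lemma~\ref{lemma: leave cycle}, and use $|C| \geq \ell = |V|-k$ to derive $g(G) \leq 2k$, a contradiction. Your extra remark about the applicability of the lemma when $\ell \leq 2m+2$ is a harmless addition, since the lemma's statement is unconditional.
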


\begin{obsv} \label{obs: visit to win}
    Let~$\overline{S}$ be the current set of unoccupied vertices. To win, the snake will have to visit all vertices in~$\overline{S}$ at some point in the future.
\end{obsv}

Using Corollary~\ref{cor: girth visit all} and Observation \ref{obs: visit to win}, we can prove the following lemma in a similar way as
Lemma~\ref{lemma: leave cycle}.

\begin{lemma}\label{lemma: longer cycle lose}
    Let~$G=(V, E)$ be a graph with~$g(G)> 2k$. Let~$\ell=|V|-k$ be the length of the snake with~$k\geq 2$. Suppose~$G$ has a cycle~$C$ that contains the snake with~$|C| > \ell$. If the apple is on some vertex in~$\overline{C}$, then the snake will lose.
\end{lemma}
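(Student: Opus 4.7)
The plan is to argue by contradiction: assume the snake has a winning strategy from the given configuration. By Observation~\ref{obs: visit to win} the head must eventually visit every unoccupied vertex, in particular every vertex of $\overline{C}$ (which contains the apple) and every one of the $|C|-\ell\geq 1$ unoccupied vertices of $C$. Hence the head must leave $C$ at some point and later return to $C$.

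I would focus on the first such excursion: the head leaves $C$ at some $u\in C$, visits $m$ vertices of $\overline{C}$, and first returns to $C$ at some $u'$. The plan is to rerun the argument of Lemma~\ref{lemma: leave cycle} to construct a cycle in $G$ and contradict $g(G)>2k$. The one new ingredient is that the snake may eat the apple during the excursion, and an eating move does not shift the tail. If $e\in\{0,1\}$ apples are eaten during the excursion, then the tail shifts by only $m-e$ positions along $C$; propagating this correction through the proof of Lemma~\ref{lemma: leave cycle} yields a cycle in $G$ of length at most $|C|-\ell+2m+2-e$.

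Using $m\leq|\overline{C}|=|V|-|C|$, this bound simplifies to at most $2k-(|C|-\ell)+2-e$. When $|C|\geq\ell+2$, this is at most $2k$ regardless of $e$, immediately contradicting $g(G)>2k$. When $|C|=\ell+1$, Corollary~\ref{cor: girth visit all} forces $m=|\overline{C}|=k-1$, and since the snake visits every vertex of $\overline{C}$ during this excursion it necessarily eats the apple, so $e=1$ and the cycle length is at most $1+2(k-1)+2-1=2k$, again contradicting $g(G)>2k$.

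The main obstacle is the case $|C|=\ell+1$: a direct application of Lemma~\ref{lemma: leave cycle} only yields a cycle of length $2k+1$, which is compatible with the hypothesis. The resolution is the eating correction -- because eating prevents a tail shift, the body segment of $P_C$ in the proof of Lemma~\ref{lemma: leave cycle} is shortened by one vertex, tightening the cycle bound from $2k+1$ to $2k$. With this correction both cases yield a contradiction, so the snake cannot win.
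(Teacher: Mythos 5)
Your proposal is correct and follows essentially the same route as the paper: both rely on the excursion/cycle construction of Lemma~\ref{lemma: leave cycle}, invoke Corollary~\ref{cor: girth visit all} to pin down the excursion, and use the key observation that eating the apple suppresses one tail shift, tightening the cycle bound from $2k+1$ to $2k$. The only cosmetic differences are that you split into the cases $|C|\geq\ell+2$ and $|C|=\ell+1$ while the paper handles both uniformly with $m=|V|-|C|$, and that you phrase the conclusion as a direct contradiction with Observation~\ref{obs: visit to win} rather than first showing the head can never return to~$C$.
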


\begin{corollary}\label{cor: longer cycle lose three}
    Let~$G=(V, E)$ be a graph with~$g(G)>6$. Consider the moment the snake grows to length~$|V|-3$. If there is a cycle~$C$ of length~$|V|-1$ or~$|V|-2$ that contains the snake, then the snake will lose.
\end{corollary}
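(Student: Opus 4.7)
The plan is to derive this directly as a specialization of Lemma~\ref{lemma: longer cycle lose} with parameter $k=3$. The numerical hypotheses match up cleanly: the girth assumption $g(G)>6$ is exactly $g(G)>2k$ with $k=3$, the length assumption $\ell=|V|-3$ is $\ell=|V|-k$, and $k=3\geq 2$. For the cycle, the two permitted values $|C|\in\{|V|-1,|V|-2\}$ both satisfy $|C|>|V|-3=\ell$, which is the remaining hypothesis of the lemma.

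The only substantive thing to check is that the apple placer can actually force the configuration required by Lemma~\ref{lemma: longer cycle lose}, namely placing an apple in $\overline{C}$. At the moment the snake reaches length $|V|-3$, an apple has just been eaten and a fresh apple must be placed on some unoccupied vertex. Since the snake is entirely contained in $C$, every vertex of $V\setminus C$ is unoccupied; moreover $|V\setminus C|\in\{1,2\}$ is nonempty. The apple placer picks any vertex $v\in V\setminus C\subseteq\overline{C}$ and places the new apple there. Lemma~\ref{lemma: longer cycle lose} then yields that the snake loses from this position.

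Since the corollary is a direct instantiation rather than an independent argument, there is no genuine obstacle; the only care needed is the bookkeeping to ensure $\overline{C}$ contains an unoccupied vertex at the critical moment, which is immediate from the containment of the snake in $C$ and the fact that $|C|<|V|$ (this latter inequality being where non-Hamiltonicity implicitly enters — had $|C|=|V|$ been allowed, $\overline{C}$ could be empty and the lemma would not apply).
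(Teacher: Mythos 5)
Your proof is correct and matches the paper's own argument: both instantiate Lemma~\ref{lemma: longer cycle lose} with $k=3$ after observing that $|C|<|V|$ guarantees an unoccupied vertex in $\overline{C}$ for the apple placer to target. (Your closing aside about non-Hamiltonicity is unnecessary here, since the hypothesis $|C|\in\{|V|-1,|V|-2\}$ already gives $|C|<|V|$ directly.)
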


Corollary~\ref{cor: longer cycle lose three} excludes two out of the four possible scenarios for when the snake reaches length~$|V|-3$. The two that remain are: the snake is contained in a cycle of length~$|V|-3$, and there is no cycle that contains the snake. We only state the lemmas for these two cases. Full proofs can be found in Appendix \ref{appendix: girth}.

\begin{lemma}\label{lemma: three left no cycle}
    Let~$G=(V,E)$ be a non-Hamiltonian graph with~$g(G)>6$. Consider the moment the snake grows to length~$\ell = |V|-3$ and suppose there is no cycle in~$G$ that contains the snake. Then, the snake will lose.
\end{lemma}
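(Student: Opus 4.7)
The plan is to exhibit an apple-placer strategy that forces a loss within three more eating rounds. Since no cycle contains the snake, $s_1s_\ell\notin E$, and the head's only legal moves are to vertices in $N(s_1)\cap\overline{S}$; if this intersection is empty the snake has no legal move and loses immediately, so I may assume otherwise.

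The central structural claim I would establish first is the following invariant: whenever no cycle contains the current snake at length $|V|-k$ and the snake eats an apple, no cycle contains the resulting snake at length $|V|-k+1$. The proof is by contradiction. A cycle containing the longer snake would close a path from the freshly-eaten vertex to the unchanged tail through the reduced unoccupied set; prepending the edge from the previous head to the eaten vertex turns this into a closing $s_1$-to-$s_\ell$ path for the \emph{original} snake with all internal vertices in $\overline{S}$. If the closing path uses only a proper subset of $\overline{S}$ this contradicts the original no-cycle hypothesis; if it uses all of $\overline{S}$, the resulting cycle has length $|V|$, contradicting non-Hamiltonicity.

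Write $\overline{S}=\{a,b,c\}$. Because $g(G)>6$ rules out a triangle on $\overline{S}$, at most two edges are present among $\{a,b,c\}$, and the apple placer can always choose an ordering $x_1,x_2,x_3$ with $x_2x_3\notin E$. The strategy is to place the apples on $x_1$, then on $x_2$ in this order. Assuming the snake confines itself to eating moves, the invariant forces $x_2s_\ell\notin E$ at length $|V|-1$ (else a length-$(|V|-1)$ cycle would contain the snake), while $x_2x_3\notin E$ holds by choice of ordering; so the new head $x_2$ has no legal move and the snake loses. The main obstacle I foresee is controlling non-eating moves, in which the snake shifts its head to a non-apple unoccupied vertex and releases $s_\ell$ into the unoccupied set; such moves need not preserve the invariant directly and could open cycle-closing detours through $s_\ell$. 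Ruling these out is where the girth $>6$ hypothesis is essential: using Lemma~\ref{lemma: leave cycle} and Corollary~\ref{cor: girth visit all}, any such excursion must visit all the current unoccupied vertices before it can return to its starting cycle, and the three-vertex budget at length $|V|-3$ is too small to afford this without forcing the snake either into an eating move (covered by the invariant) or into a position with no legal continuation.
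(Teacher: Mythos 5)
Your proposal has a genuine gap, and it sits exactly where you flagged it: the non-eating moves. The tools you invoke to dispose of them, Lemma~\ref{lemma: leave cycle} and Corollary~\ref{cor: girth visit all}, both presuppose that the snake is contained in a cycle~$C$ which it leaves and re-enters; here the hypothesis is precisely that \emph{no} cycle contains the snake, so neither result applies and the ``three-vertex budget'' remark is not an argument. The paper closes this hole with a dedicated lemma (Lemma~\ref{appendix: lemma: girth no cycle}): if no cycle contains the snake at length $|V|-k$ and $g(G)>2k$, then at most one first move avoids an immediate forced loss. Its proof is not a routine application of the cycle-leaving machinery --- it supposes two safe exits $u,v$, shows the corresponding escape paths $P_u,P_v$ must return to $S$ disjointly, lets the snake traverse one of them so that it \emph{becomes} contained in a cycle built from the other, and only then derives a contradiction from Corollary~\ref{cor: girth visit all} because the old tail lies on neither the cycle nor the remaining path. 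Without something of this strength, the apple placer cannot force your intended eating order, and your pre-committed labelling $x_1,x_2,x_3$ does not survive a type-$\beta$ move anyway: the moment the snake steps onto a non-apple vertex, the old tail joins the unoccupied set and the triple you analysed is no longer the relevant one.

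There is a second, smaller gap at the endgame. Even granting that the snake reaches length $|V|-2$ with the two remaining unoccupied vertices non-adjacent, you still must show it loses, because it need not eat immediately: it can shuffle (the paper's Lemma~\ref{appendix: lemma: girth two left} does this case analysis, again splitting on whether a cycle now contains the snake). Your eating-only invariant is correct as far as it goes --- a cycle containing the grown snake literally contains the old snake as a subpath --- but it says nothing after a $\beta$-move, after which the snake may well become contained in a cycle (e.g.\ if the new head is adjacent to the released tail vertex), moving you into a different case of the overall theorem. The paper's actual strategy is also worth comparing to yours: rather than choosing an ordering with $x_2x_3\notin E$ up front, it uses the forced-move lemma to determine the unique order $u,v,w$ the snake \emph{can} visit, places the apple on the \emph{second} vertex $v$, and gets non-adjacency of the final two unoccupied vertices ($w$ and the released tail) from non-Hamiltonicity, since an edge between them would close a Hamiltonian cycle $(S,u,v,w)$. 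That use of non-Hamiltonicity is the same idea as in your invariant, but applied to the pair that actually remains after the forced $\beta$-move.
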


\begin{lemma}\label{lemma: girth head to tail}
    Let~$G=(V, E)$ be a non-Hamiltonian graph with~$g(G)>6$. Consider the moment the snake grows to length~$\ell = |V|-3$ and suppose there is a cycle~$C$ of length~$|V|-3$ that contains the snake. Then, the snake will lose.
\end{lemma}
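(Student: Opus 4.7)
Let $\overline{C} = \{u, v, w\}$ denote the three unoccupied vertices at the moment the snake reaches length $\ell = |V|-3$. Since $|C| \geq g(G) > 6$, we have $|V| \geq 10$. The plan is to split on the structure of the induced subgraph $G[\overline{C}]$, which contains no triangle (by the girth hypothesis) and therefore has at most two edges.

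First I would apply Lemma~\ref{lemma: leave cycle} with $|C| = \ell$: any excursion of the head off $C$ that visits $m \leq 2$ vertices in $\overline{C}$ would create a cycle of length at most $2m+2 \leq 6$ in $G$, contradicting $g(G) > 6$. Hence any successful round-trip off $C$ must visit all three of $u, v, w$, which requires $G[\overline{C}]$ to contain the path $u - v - w$ of length $2$. If no such $P_3$ exists, the apple placer simply puts the first apple on any vertex of $\overline{C}$. Staying on $C$ forces the snake through only finitely many positions before one repeats, so to avoid that loss the snake must eventually enter $\overline{C}$ at some $x$ from a neighbour $c_p \in C$. Every edge from $x$ to the tail $c_{p-2}$ or to the newly unoccupied $C$-vertex $c_{p-1}$ would close a cycle of length at most $4$, and in the case of a single $\overline{C}$-edge the head can extend one step further before the same girth argument (now with cycles of length up to $6$) blocks every remaining option. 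Either way the head becomes stuck and the snake loses.

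Suppose instead that $G[\overline{C}]$ induces the path $u - v - w$, with $v$ the middle vertex. The strategy for the apple placer is to play $a_1 = v$ and to choose $a_2$ as an endpoint of the $P_3$ that the snake did not use to enter $\overline{C}$. Assume the snake enters $\overline{C}$ at $u$ from a neighbour $c_p$ of $u$ in $C$; the case of entry at $w$ is symmetric. Since $u$ and $w$ are non-adjacent and every other $C$-neighbour of $u$ lies at distance at least $5$ from $c_p$ (any closer chord would close a cycle of length at most $6$), the head has only the single legal move $u \to v$, which eats the apple. The placer then plays $a_2 = w$; the same girth check forbids every potential $C$-neighbour of $v$ within distance $3$ of $c_p$ (via cycles through $v, u, c_p$), so $v \to w$ is forced and the second apple is eaten. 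The snake now has length $|V|-1$, head $w$, tail $c_{p-2}$, and the only unoccupied vertex is $c_{p-1}$; the edges $w c_{p-1}$ and $w c_{p-2}$ would close the cycles $w - c_{p-1} - c_p - u - v - w$ and $w - c_{p-2} - c_{p-1} - c_p - u - v - w$ of lengths $5$ and $6$, both forbidden, while every remaining $C$-neighbour of $w$ lies at distance at least $3$ from $c_p$ and hence in the body. The head has no legal move and the snake loses.

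The main obstacle will be carrying out this case analysis cleanly. The sub-case of a direct entry at $v$ needs a separate check: the head at $v$ has a choice between $u$ and $w$ for its next move, but placing $a_2 = u$ forces the snake into a stuck position after at most one further move, whether it eats $a_2$ directly (ending at head $u$ with no adjacent unoccupied vertex) or first moves to $w$ (ending at head $w$ with cycles of length at most $5$ through $v$ and $c_p$ blocking every escape). Reversals along $C$ before exiting merely shift the identity of $c_p$ without changing $\overline{C}$, so they do not introduce new cases. The hypothesis $g(G) > 6$ is used tightly throughout, since every forbidden adjacency corresponds to a concrete cycle of length at most $6$; this is also why the bound on the girth is sharp.
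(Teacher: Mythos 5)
Your proposal is correct and follows essentially the same strategy as the paper's proof: place the apple on the middle vertex of the (at most one) path through the three unoccupied vertices, use the girth bound to force the snake's exit from~$C$ into a unique sequence of moves, and show it ends up stuck. The only difference is in the endgame: the paper stops at length~$|V|-2$ with two non-adjacent unoccupied vertices and invokes an auxiliary lemma (and uses non-Hamiltonicity to rule out one adjacency), whereas you play the last two moves explicitly and exclude every escape edge by exhibiting a concrete cycle of length at most~6.
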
                                                                                                                   
\begin{theorem} \label{thm: girth bound}
Let~$G=(V, E)$ be a non-Hamiltonian graph with~$g(G)>6$. Then,~$G$ is not snake-winnable.
\end{theorem}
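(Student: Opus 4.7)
The plan is to assemble the preceding lemmas into a single case analysis, since they were evidently designed with this endpoint in mind. The apple placer's winning strategy will be a ``wait and respond'' one: place apples arbitrarily until the snake either loses early or first reaches length $\ell = |V|-3$, and then switch to the strategy prescribed by whichever of the preceding results applies.

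My first step is a trivial reduction: if, against some apple placement, the snake never reaches length $|V|-3$, then it certainly does not reach length $|V|$, and the apple placer wins by default. So I may assume the snake does eventually attain length $\ell = |V|-3$, at which point it occupies a simple path $S$ on $|V|-3$ vertices. Any cycle $C$ of $G$ that contains $S$ has $|V|-3 \le |C| \le |V|$, and the non-Hamiltonicity of $G$ rules out $|C|=|V|$. This leaves exactly four mutually exclusive configurations: no cycle contains $S$, or a containing cycle of length $|V|-3$, $|V|-2$, or $|V|-1$ exists.

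Next I would dispatch each configuration using the existing results: Corollary~\ref{cor: longer cycle lose three} handles containing cycles of length $|V|-1$ and $|V|-2$; Lemma~\ref{lemma: girth head to tail} handles length $|V|-3$; and Lemma~\ref{lemma: three left no cycle} handles the no-cycle case. In every situation the apple placer can force a loss from that moment on, so $G$ is not snake-winnable.

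I do not foresee any substantive obstacle, as the technical work lives inside the cited lemmas. The only items worth being careful about are verifying that the four cases are genuinely exhaustive (which uses $|C|\ge \ell$ for any cycle containing the snake, together with non-Hamiltonicity) and checking that the per-case strategies given by those lemmas can be invoked irrespective of how the snake reached its length-$(|V|-3)$ configuration, which is precisely what their statements guarantee.
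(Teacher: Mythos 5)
Your proposal is correct and follows essentially the same route as the paper: reduce to the moment the snake reaches length $|V|-3$, note that non-Hamiltonicity leaves exactly the four containing-cycle configurations, and dispatch them via Corollary~\ref{cor: longer cycle lose three}, Lemma~\ref{lemma: girth head to tail}, and Lemma~\ref{lemma: three left no cycle}. The only (harmless) addition is your explicit remark that a snake which never attains length $|V|-3$ has already lost, which the paper leaves implicit.
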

\begin{proof}
    Consider the moment the snake grows to length~$|V|-3$. Since~$G$ is non-Hamiltonian, there cannot be any cycle of length~$|V|$. Hence, the snake is either contained in a cycle of length~$|V|-1$, a cycle of length~$|V|-2$, a cycle of length~$|V|-3$, or in no cycle.

    If the snake is contained in a cycle of length~$|V|-1$ or~$|V|-2$, then the snake will lose by Corollary~\ref{cor: longer cycle lose three}. If the snake is contained in a cycle of length~$|V|-3$, then it will lose by Lemma~\ref{lemma: girth head to tail}. Finally, if the snake is not contained in any cycle, then it will lose by Lemma~\ref{lemma: three left no cycle}. It follows that once the snake reaches length~$|V|-3$, the apple placer can guarantee that the snake loses, and thus~$G$ is not snake-winnable. \qed
\end{proof}

The graph in Figure \ref{fig: girth: partial grid win} shows the bound from Theorem \ref{thm: girth bound} is tight, as it is non-Hamiltonian, has girth 6, and is snake-winnable. The winning strategy for the snake can be found in Appendix \ref{appendix: girth}.

\begin{figure}[t]
\includegraphics[width = 6.6cm]{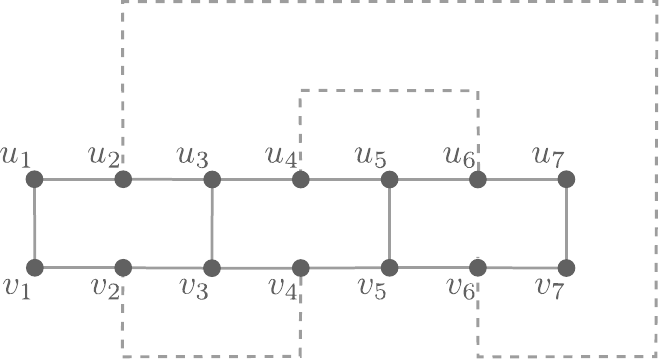}
\centering
\caption{A non-Hamiltonian graph with a girth of 6 that is snake-winnable.}
\label{fig: girth: partial grid win}
\end{figure}

\section{Snake-winnable graphs with vertex-connectivity 1}\label{sec: connectivity}
One of the challenges for the snake is that its body can block access to certain parts of the graph. Intuitively, this becomes more of an issue when the graph has low vertex-connectivity. In this section, we prove that for vertex-connectivity 1, there is only one type of graph that is snake-winnable.

\begin{definition}
    Let~$G=(V, E)$ be a connected simple graph. If there is a single vertex~$v$ such that~$G-v$ is disconnected, then~$G$ has has \textbf{vertex-connectivity}~1, which we denote as~$\kappa(G)=1$. We call~$v$ a \textbf{cut vertex} of~$G$.
\end{definition}

\begin{obsv}\label{obs:too_many_comp}
    Let~$v$ be a cut vertex of~$G$. If~$G-v$ has at least 3 components, then~$G$ is not snake-winnable.
\end{obsv}

\begin{figure}[t]
\includegraphics[width = 0.35\textwidth]{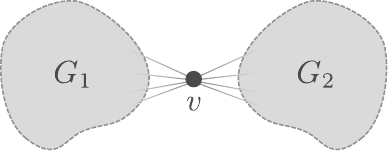}
\centering
\caption{A snake-winnable graph with cut vertex~$v$. The subgraphs~$G_1+v$ and~$G_2+v$ are both complete.}
\label{fig: connectivity: 1c}
\end{figure} 

\begin{lemma}\label{lemma: 1c win}
Let~$G=(V, E)$ be a graph with~$\kappa(G)=1$ and let~$v$ be a cut vertex of~$G$. Suppose that~$G_1$ and~$G_2$ are the only two connected components of~$G-v$ with~$|V(G_1)|=|V(G_2)|=m$ and~$m\geq2$. Furthermore, let~$G_1+v$ and~$G_2+v$ both be complete. Then~$G$ is snake-winnable.
\end{lemma}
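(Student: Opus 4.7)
The plan is to describe an explicit two-phase winning strategy for the snake, exploiting the completeness of both sides. Let $C_i := V(G_i) \cup \{v\}$, so that $C_1$ and $C_2$ are both copies of $K_{m+1}$ glued at $v$. In \emph{Phase~1} the snake maintains the invariant that, after every eaten apple, its vertex set lies entirely inside some $C_j$; since $|C_j| = m+1$ and the snake grows by one per apple eaten, this forces the snake, at the moment its length first reaches $m+1$, to occupy exactly one complete side $C_j$.

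In \emph{Phase~2}, $C_j$ is fully contained in the snake, so every subsequent apple must be placed in $G_{j'}$ (the other side). The snake rotates within the complete graph $C_j$ to bring $v$ to the head and then eats the apple, producing a straddling path $(u, v, a_1, \ldots, a_m)$ whose tail half is all of $G_j$. Every further apple lies in $G_{j'}$ and is adjacent to the head, since $G_{j'}+v$ is complete, so the snake eats directly, extending on the head side without disturbing the tail. After $m$ further apples the snake reaches length $2m+1$ and wins.

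The main work is in Phase~1. Suppose after the previous eaten apple the snake has length $\ell \le m$ and vertex set contained in $C_j$, and the adversary places the next apple $u$. If $u \in C_j$, the head is adjacent to $u$ inside the complete graph $C_j$ and eats in one move, keeping the snake in $C_j$. If $u \in G_{j'}$, the snake performs a \emph{cross}: it first brings $v$ to the head (by a single rotation when $v$ is the tail and $\ell \ge 3$, by a direct move when $v \notin S$, or by a short rearrangement inside $C_j$ when $v$ is interior), then steps the head into $G_{j'}$ and moves it through distinct unoccupied vertices of $G_{j'}\setminus\{u\}$, each non-eating move dropping one vertex from the tail; once the snake sits entirely inside $C_{j'}$, a final eating move onto $u$ completes the cross. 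The cross requires enough unoccupied vertices in $G_{j'}\setminus\{u\}$, which is guaranteed by $\ell \le m$ and $|G_{j'}\setminus\{u\}| = m-1$.

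The main obstacle will be verifying that the cross step produces a sequence of distinct snake positions, so that the no-repeat rule is never triggered, and handling the delicate sub-case where $v$ starts at an interior position of the snake (together with the further edge case where $\ell = 2$ and the rotation rule forbids moving the head onto the tail). Both I would handle by induction on $\ell$ together with an explicit case analysis on the initial position of $v$ within $S$, using the completeness of $C_j$ to reorganise the body freely before the crossing begins.
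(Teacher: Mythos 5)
Your strategy is correct and takes essentially the same route as the paper's own proof: both confine the snake to one complete side until it occupies all of $G_j+v$ at length $m+1$, then bring the head to $v$ and eat every remaining apple in the other side directly, the tail never moving (the paper packages this second phase as the observation that a complete ``head graph'' guarantees a win). The edge cases you defer ($v$ interior, $\ell=2$, position repetition) are all benign, and the paper explicitly waives the no-repeat rule when formulating winning strategies, so no genuine gap remains.
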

\begin{proof}[sketch]
The graph~$G$ is depicted in Figure~\ref{fig: connectivity: 1c}. We give a sketch of the snake's strategy, full details can be found in Appendix \ref{appendix: connectivity}. The snake can ensure that once it grows to length $m+1$, it occupies exactly all of the vertices in either $G_1+v$ or $G_2+v$, with its tail on $v$. Since both $G_1+v$ and $G_2+v$ are complete, it can move to the tail vertex, placing its head on $v$. The remaining unoccupied vertices are either $V(G_1)$ or $V(G_2)$, and thus, all the remaining unoccupied vertices are adjacent to $v$. Regardless of where the next apple is placed, the snake can directly move to it from $v$ and the tail remains in place. The snake repeats this until all of the remaining vertices are occupied.
\qed
\end{proof}

It turns out the graph from Lemma~\ref{lemma: 1c win} is the only type of snake-winnable graph with vertex-connectivity 1. To show this, we use the following two lemmas that show the limitations of the snake on graphs with vertex-connectivity 1.

\begin{lemma}\label{lemma: too long 1c}
    Let~$G=(V, E)$ be a graph with~$\kappa(G)=1$ and let~$v$ be a cut vertex of~$G$. Let~$G_1$ and~$G_2$ be two different connected components of~$G-v$, with~$|V(G_1)|=m$. Suppose the head of the snake is in~$G_1$ and the snake has a length of at least~$m+2$. If there is an unoccupied vertex in~$G_2$, then the snake will lose.
\end{lemma}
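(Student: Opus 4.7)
The plan is to show that the head of the snake is permanently trapped inside $V(G_1)$, so that the apple placer wins by placing the next apple in $G_2$.

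First I would establish that $v$ lies on the snake at every future time step at which the head is in $V(G_1)$. Since the snake's length $\ell$ is nondecreasing and currently satisfies
\[
\ell \;\geq\; m+2 \;>\; m+1 \;=\; |V(G_1)\cup\{v\}|,
\]
the snake cannot fit inside $V(G_1)\cup\{v\}$, so at least one snake vertex must lie in some other component of $G-v$. Because the snake forms a simple path and $v$ is a cut vertex separating $V(G_1)$ from the rest of $G-v$, the subpath connecting the head in $V(G_1)$ to any such outside vertex must pass through $v$; hence $v \in S$.

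Next I would rule out any move of the head to $v$. A legal move into $v$ requires $v$ to be unoccupied (impossible by the previous paragraph) or equal to the tail $s_\ell$. In the latter case, deleting the endpoint $v$ from the snake leaves a simple subpath $s_1,\ldots,s_{\ell-1}$ in $G-v$; since $s_1 \in V(G_1)$ and $v$ is a cut vertex, this subpath lies entirely in $V(G_1)$, so the whole snake sits in $V(G_1)\cup\{v\}$, again contradicting $\ell \geq m+2$. Hence the head never moves to $v$, and since $v$ is the only neighbour of $V(G_1)$ outside $V(G_1)$, the head remains in $V(G_1)$ forever.

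The apple placer now plays as follows: whenever a new apple must be placed, put it on some unoccupied vertex of $V(G_2)$. Such a vertex exists initially by hypothesis, and because the head never enters $V(G_2)$ while snake vertices in $V(G_2)$ can only drop off as the tail retracts through $v$, unoccupied vertices of $V(G_2)$ stay unoccupied, so the strategy is always well-defined. The snake can never reach an apple placed in $G_2$, so its length cannot grow past the moment the apple first sits in $V(G_2)$. Since there are only finitely many snake configurations of a fixed length inside the finite set $V(G_1)\cup\{v\}$, the snake must eventually either have no legal move or repeat a previous position, and in both cases it loses by the rules of Section~\ref{sec: win conditions}. The main obstacle is the tail case $v=s_\ell$: it is here that the cut-vertex structure is essential, forcing the whole snake back into $V(G_1)\cup\{v\}$ and colliding with the bound $\ell \geq m+2$.
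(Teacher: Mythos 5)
Your proof is correct and rests on the same key observation as the paper's: a snake of length at least $m+2$ cannot fit inside the $m+1$ vertices of $G_1+v$, so its head can never pass through the cut vertex $v$ to reach the unoccupied vertex in $G_2$. You spell out in more detail why this forces a loss (the head can never legally move to $v$, and a snake that cannot grow must eventually get stuck or repeat a position), but the argument is essentially the paper's.
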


\begin{lemma}\label{lemma: too long apple 1c}
    Let~$G=(V, E)$ be a graph with~$\kappa(G)=1$ and let~$v$ be a cut vertex of~$G$. Let~$G_1$ and~$G_2$ be two different connected components of~$G-v$, with~$|V(G_1)|=m$. Suppose the head of the snake is in~$G_1$, the snake has a length of at least~$m+1$, and the apple is on some vertex~$a$ in~$G_1$. If there is some unoccupied vertex in~$G_2$, then the snake will lose.
\end{lemma}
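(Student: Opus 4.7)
The plan is to reduce to Lemma~\ref{lemma: too long 1c}. I will show that, from the hypothesized configuration, the snake has only two futures: either it loses outright (by being stuck or by repeating a position) or it eats the apple at $a$, and in the latter case the hypotheses of Lemma~\ref{lemma: too long 1c} hold immediately after eating. The initial structure is already quite rigid. Because $a \in G_1$ is off the snake, the snake has at most $m-1$ vertices in $G_1$, and combined with $\ell \geq m+1$ this forces $v$ to lie on the snake and at least one snake-vertex to lie in $G_2$. Writing $v = s_i$, the body must have the shape ``head in $G_1$, further $G_1$-vertices, $v$, $G_2$-vertices, tail in $G_2$'', with $2 \leq i \leq \ell - 1$. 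Fix an originally unoccupied vertex $w \in G_2$.

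The main obstacle is to show that the head cannot reach $G_2$ before eating $a$. Since $v$ is a cut vertex, any move of the head from $G_1$ into $G_2$ must first place the head on $v$; because $v$ is on the body but not at the tail, this requires $v$ to first shift to the tail position, which takes exactly $\ell - i$ non-eating moves. Throughout those moves the head is trapped in $G_1$: the tail sits in $G_2$ and has no edge to $G_1$ in $G - v$, and $v$ is still an interior body vertex. So every such move must step to a new unoccupied $G_1$-vertex. However, the snake is permitted at most $m-1$ vertices in $G_1$ (since $a$ must remain off the snake), so starting from $i-1$ $G_1$-vertices the snake can make at most $m - i$ such moves before $a$ is the only unoccupied $G_1$-vertex. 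Since $\ell \geq m+1$ gives $\ell - i > m - i$, the snake exhausts its supply of non-eating moves before it can shift $v$ into the tail slot, and therefore cannot enter $G_2$ without first eating $a$.

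With the head confined to $G_1$ until the apple is eaten, the rest is short. After at most $m - i$ non-eating moves the only unoccupied $G_1$-vertex is $a$, and the tail, still in $G_2$, is not adjacent to the head; so within finitely many moves the snake must get stuck, repeat a previous position (both losing outcomes), or eat $a$. In the last case the snake's length becomes $\ell + 1 \geq m + 2$, the head is at $a \in G_1$, and because the snake never entered $G_2$ during the approach, the vertex $w \in G_2$ is still unoccupied. These are exactly the hypotheses of Lemma~\ref{lemma: too long 1c}, which then forces the snake to lose.
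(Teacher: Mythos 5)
Your proof is correct and takes essentially the same route as the paper's: you show the head cannot reach $v$ (and hence $G_2$) before eating the apple on $a$, because getting $v$ to the tail position requires more non-eating moves than there are fresh vertices available in $G_1$, and once $a$ is eaten the snake has length at least $m+2$ with its head in $G_1$, so Lemma~\ref{lemma: too long 1c} finishes the job. The only differences are cosmetic — you count moves explicitly where the paper argues from the configuration needed to place the head on $v$ (all $m$ vertices of $G_1$ occupied, hence $a$ already eaten) — plus one negligible imprecision: the snake's tail need not lie in $G_2$ specifically if $G-v$ has further components, but nothing in your argument actually depends on that.
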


We now show that if the two components of~$G-v$ have different sizes, then the apple placer has a winning strategy.

\begin{lemma}\label{lemma: 1c diff size}
Let~$G=(V,E)$ be a graph with~$\kappa(G)=1$ and let~$v$ be a cut vertex of~$G$. Let~$G_1$ and~$G_2$ be two different components of~$G-v$. If~$|V(G_1)|\neq|V(G_2)|$ then~$G$ is not snake-winnable.
\end{lemma}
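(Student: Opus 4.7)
The plan is to reduce to the case where $G-v$ has exactly two components (otherwise Observation~\ref{obs:too_many_comp} already finishes the lemma) and then exhibit an adversary strategy that drives the snake into the losing position of Lemma~\ref{lemma: too long 1c}. Write $m_i = |V(G_i)|$ and assume without loss of generality that $m_1 < m_2$, so $m_2 \geq m_1 + 1$.

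The apple-placer strategy I propose has three phases: place the starting apple at an arbitrary vertex $a_1 \in V(G_2)$; place each of the next $m_1$ apples at an arbitrary unoccupied vertex of $V(G_2)$; and finally place the $(m_1+2)$-th apple at an arbitrary unoccupied vertex of $V(G_1)$. If the snake loses before eating the last apple we are done; otherwise the position-repeat rule forces it to eat all $m_1+2$ apples, so after the final bite its head sits in $V(G_1)$ and its length is exactly $m_1 + 2$.

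To justify the strategy I need two counting claims, both relying on the fact that the snake is a simple path and so must contain $v$ whenever it has vertices on both sides of $G-v$. When the $i$-th apple ($2 \leq i \leq m_1 + 1$) is to be placed in $V(G_2)$, the snake has length $i-1 \leq m_1 < m_2$, so $V(G_2)$ still has at least $m_2 - (i-1) \geq 1$ unoccupied vertex. When the $(m_1+2)$-th apple is to be placed in $V(G_1)$, the snake has length $m_1+1$ with head in $V(G_2)$ (where it just ate the previous apple); so if it contains $k$ vertices of $V(G_1)$ it must also contain $v$ plus at least the head in $V(G_2)$, giving $k+2 \leq m_1+1$ and hence $k \leq m_1-1$. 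Thus $V(G_1)$ still has an unoccupied vertex.

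The same path-through-$v$ argument, applied one step later, delivers the losing condition. At length $m_1+2$ with head in $V(G_1)$, the snake cannot fit inside $V(G_1) \cup \{v\}$, so it must contain $v$ and at least one vertex of $V(G_2)$. Writing the intersection sizes with $V(G_1)$, $\{v\}$, $V(G_2)$ as $a+1+b=m_1+2$ with $a,b\geq 1$, I get $b \leq m_1 \leq m_2-1$, leaving an unoccupied vertex in $V(G_2)$. Lemma~\ref{lemma: too long 1c} then forces the snake to lose, proving that $G$ is not snake-winnable. The main obstacle I expect is precisely this bookkeeping: one has to combine the simple-path structure with the strict inequality $m_1 < m_2$ both to keep every adversary placement legal and, crucially, to guarantee that an unoccupied vertex of $V(G_2)$ survives into the position where Lemma~\ref{lemma: too long 1c} can be invoked.
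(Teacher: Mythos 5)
Your proposal is correct and follows essentially the same route as the paper: both construct an apple-placer strategy whose sole purpose is to maneuver the snake to length $m_1+2$ with its head in $G_1$ while an unoccupied vertex survives in $G_2$, and then both invoke Lemma~\ref{lemma: too long 1c} to finish. The only difference is bookkeeping — you feed all early apples into $G_2$ before switching once to $G_1$, whereas the paper alternates $G_1$, $G_2$, $G_1$ starting at length $m_1-1$ — and your counting arguments for the legality of each placement and for the surviving unoccupied vertex in $G_2$ match the paper's.
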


\begin{proof}
    First note that if either~$|V(G_1)|=1$ or~$|V(G_2)|=1$, then~$G$ has a vertex of degree 1 and is not snake-winnable by Observation~\ref{obs: degree_one}. Hence, we may assume that both~$G_1$ and~$G_2$ have at least two vertices.
    
    Let~$|V(G_1)|=m_1$ and~$|V(G_2)|=m_2$ and suppose~$m_1\neq m_2$. Without loss of generality, we assume that~$m_1<m_2$. 
    When the snake grows to length~$m_1-1$, there must be some vertex~$u_1$ in~$G_1$ that is unoccupied. The apple placer chooses~$u_1$ as the next apple location. When the snake eats the apple on~$u_1$, it grows to length~$m_1$. Since~$m_1<m_2$, there must be some unoccupied vertex~$u_2$ in~$G_2$. The apple placer chooses~$u_2$ as the next apple location. 
    
    When the snake eats the apple on~$u_2$, it is either entirely in~$G_2$, or both the cut vertex~$v$ and~$u_2$ are occupied by the snake. Thus, at least two of the occupied vertices are in~$G_2+v$. Since the snake now has length~$m_1+1$, it follows that there must be some unoccupied vertex~$v_1$ in~$G_1$. The apple placer chooses~$v_1$ as the next apple location. When the snake eats the apple on~$v_1$, it grows to length~$m_1+2$ with its head in~$G_1$. Furthermore, there must be an unoccupied vertex in~$G_2$ and by Lemma~\ref{lemma: too long 1c}, the snake will lose. \qed
\end{proof}

It remains to show that if either $G_1+v$ or $G_2+v$ is incomplete, then $G$ is not snake-winnable. We only give the result, the full proof can be found in Appendix~\ref{appendix: connectivity}.

\begin{lemma}\label{lemma: 1c not comp}
Let~$G=(V, E)$ be a graph with~$\kappa(G)=1$ and let~$v$ be a cut vertex of~$G$. Let~$G_1$ and~$G_2$ be two different connected components of~$G-v$ and suppose~$G_1+v$ and~$G_2+v$ are not both complete. Then~$G$ is not snake-winnable.
\end{lemma}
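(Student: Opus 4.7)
The plan is to combine Lemma~\ref{lemma: 1c diff size} with symmetry to reduce to the case $|V(G_1)|=|V(G_2)|=m\geq 2$, and WLOG assume $G_1+v$ is not complete. I would then split according to why $G_1+v$ is not complete: either (i) some $u\in V(G_1)$ has $uv\notin E$, or (ii) $v$ is adjacent to every vertex of $V(G_1)$ but there are non-adjacent $x,y\in V(G_1)$.

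In both cases the apple placer's strategy has two phases. In \emph{Phase~1} the apple placer forces the snake into a length-$(m+1)$ configuration in which $V(G_2)\cup\{v\}$ is completely occupied and the snake's head is at $v$. A clean way to achieve this is to place the first apple at some $c_1\in N(v)\cap V(G_2)$, continue along a Hamiltonian path of $G_2$ ending at a vertex adjacent to $v$, and finally place an apple at $v$; after the snake eats this last apple, its head is at $v$ and its body runs through all of $V(G_2)$. Alternatively, one can start with the apple at $v$, fill $G_2$ via a $v$-rooted Hamiltonian path, then place any apple in $V(G_1)$ and exploit the snake's forced head-to-tail move to bring the head onto $v$.

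In \emph{Phase~2} the apple placer exploits the non-completeness of $G_1+v$. In case~(i) the placer leaves the apple at $u$; since $vu\notin E$, the snake is forced to move its head into some $w\in N(v)\cap V(G_1)\setminus\{u\}$, which pops a $G_2$-vertex off the tail. Now the head is in $G_1$, the apple is in $G_1$, the length is $m+1$, and a vertex of $G_2$ is unoccupied, so Lemma~\ref{lemma: too long apple 1c} ends the game. If instead the snake moves its head onto the tail, it merely rotates inside $V(G_2)\cup\{v\}$, which must either repeat a position (losing immediately) or eventually return the head to $v$, when the same dilemma recurs. In case~(ii) the placer first places the apple at $y$; the snake eats it via the edge $vy$ and reaches length $m+2$ with head $y\in V(G_1)$. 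The placer then places the apple at $x$; since $xy\notin E$, the snake must move its head into some $w\in N(y)\cap V(G_1)\setminus\{x\}$, popping a $G_2$-vertex off the tail, so Lemma~\ref{lemma: too long 1c} finishes the argument.

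The main obstacle is making Phase~1 rigorous against every snake response, and in particular handling the case where $G_2+v$ does not possess the Hamiltonian-path structure used above. The cleanest resolution is to observe that if no such path exists, the snake itself cannot realize the corresponding length-$(m+1)$ configuration on the $G_2$-side, and any alternative length-$(m+1)$ configuration is one whose next growth step triggers Lemma~\ref{lemma: too long 1c} (by an argument parallel to the one in Lemma~\ref{lemma: 1c diff size}); if additionally $G_2+v$ is itself not complete, the same two-phase strategy can be executed with the roles of $G_1$ and $G_2$ swapped, so at least one of the two directions succeeds.
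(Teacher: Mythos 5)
Your Phase~2 is sound, but Phase~1 contains a genuine gap that the rest of the argument rests on. The apple placer controls only where apples appear, not how the snake moves between them, so it cannot ``continue along a Hamiltonian path of $G_2$'' or otherwise force the snake to realize the specific length-$(m+1)$ configuration you need (all of $V(G_2)\cup\{v\}$ occupied, head on $v$). The snake may approach each apple by a route of its own choosing, may end up with $v$ as an internal vertex of its path rather than its head, or may reach length $m+1$ straddling both components with unoccupied vertices on both sides --- a situation in which neither Lemma~\ref{lemma: too long 1c} nor Lemma~\ref{lemma: too long apple 1c} applies without further work. Your proposed repair (``if no such path exists, any alternative configuration triggers Lemma~\ref{lemma: too long 1c}\dots if additionally $G_2+v$ is not complete, swap roles'') is an assertion, not an argument, and it does not cover, for instance, the case where $G_2+v$ is complete while the snake simply declines to arrange itself as you intend.

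The paper avoids this entirely: it never forces a configuration, but instead uses counting to steer only the \emph{component of the head} at each growth step. At length $m-1$ some vertex of $G_2$ is free, so the apple goes there; at length $m$ at most $m-2$ vertices of $G_1$ are occupied, so the next apple goes into $G_1$; hence at length $m+1$ the head is in $G_1$ and $G_2$ has an unoccupied vertex. Then it splits on what the snake happened to do: if $G_1$ still has a free vertex, the apple goes there and Lemma~\ref{lemma: too long apple 1c} finishes; if not, the snake occupies exactly $G_1+v$, and the incompleteness of the \emph{other} side is exploited via the head-graph/circumference argument (Lemma~\ref{lemma: longer than circ}), with the apple placed on a $G_2$-vertex non-adjacent to $v$ when one exists. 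Note also that the paper takes the incomplete side to be the one the snake must still enter ($G_2+v$), whereas you exploit incompleteness of the side the head is leaving; the paper's orientation is what lets the circumference lemma close the ``snake fills one side exactly'' case. If you want to salvage your write-up, replace Phase~1 by this counting argument and keep your Phase~2 case~(i)/(ii) analysis only for the subcase where the snake ends up filling one side exactly.
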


\begin{theorem}\label{thm: 1c}
Let~$G=(V, E)$ have vertex-connectivity 1 and let~$v$ be a cut vertex of~$G$. Furthermore, let~$G_1$ and~$G_2$ be two different connected components of~$G-v$. Then~$G$ is snake-winnable if and only if~$G-v$ has two components,~$|V(G_1)|=|V(G_2)|\geq 2$, and~$G_1+v$ and~$G_2+v$ are both complete.
\end{theorem}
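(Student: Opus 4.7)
The plan is to treat Theorem~\ref{thm: 1c} as essentially a bookkeeping combination of the lemmas and observations proved earlier in the section. There is no new combinatorial work to do; the task is to cover all ways the stated conditions can fail and invoke the right result in each case.

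For the ``if'' direction, I would simply quote Lemma~\ref{lemma: 1c win}: if $G-v$ has exactly two components $G_1$ and $G_2$ with $|V(G_1)|=|V(G_2)|\geq 2$ and both $G_1+v$ and $G_2+v$ complete, then $G$ is snake-winnable.

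For the ``only if'' direction, I would prove the contrapositive by a case analysis on how the hypotheses on $G-v$ can fail, assuming throughout that $\kappa(G)=1$ so that $G-v$ has at least two components. First, if $G-v$ has at least three components, Observation~\ref{obs:too_many_comp} gives non-snake-winnability. So we may assume $G-v$ has exactly two components $G_1$ and $G_2$. Next, if one of them, say $G_1$, is a single vertex, then that vertex has degree $1$ in $G$, and Observation~\ref{obs: degree_one} applies; so we may assume $|V(G_1)|,|V(G_2)|\geq 2$. If $|V(G_1)|\neq |V(G_2)|$, Lemma~\ref{lemma: 1c diff size} yields non-snake-winnability. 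Finally, in the remaining case $|V(G_1)|=|V(G_2)|\geq 2$, if $G_1+v$ or $G_2+v$ fails to be complete, Lemma~\ref{lemma: 1c not comp} applies. In every subcase in which the stated characterization fails, we have produced a witness that $G$ is not snake-winnable, which is exactly what the contrapositive requires.

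The main obstacle, such as it is, is only to make sure the case split is exhaustive and that the ``both components have size $\geq 2$'' reduction is explicit before invoking Lemma~\ref{lemma: 1c not comp}, since that lemma is most naturally applied once degree-$1$ corner cases have been removed. All the genuine difficulty has already been absorbed into the earlier lemmas; the theorem itself is a one-paragraph assembly.
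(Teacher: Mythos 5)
Your proposal is correct and follows essentially the same route as the paper: the ``if'' direction is Lemma~\ref{lemma: 1c win}, and the ``only if'' direction is the same exhaustive case split handled by Observation~\ref{obs:too_many_comp}, Observation~\ref{obs: degree_one}, Lemma~\ref{lemma: 1c diff size}, and Lemma~\ref{lemma: 1c not comp}. Your explicit ordering of the size-$1$ corner case before invoking Lemma~\ref{lemma: 1c not comp} matches the paper's treatment as well.
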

\begin{proof}
By Lemma~\ref{lemma: 1c win}, if ~$G-v$ has two components,~$|V(G_1)|=|V(G_2)|\geq 2$, and~$G_1+v$ and~$G_2+v$ are both complete, then~$G$ is snake-winnable.

 By Observation~\ref{obs:too_many_comp},~$G$ is not snake-winnable if~$G-v$ has more than two components. If either~$|V(G_1)|=1$ or~$|V(G_2)|=1$, then~$G$ had a vertex of degree 1 and is not snake-winnable by Observation~\ref{obs: degree_one}. By Lemma~\ref{lemma: 1c diff size}, if~$|V(G_1)|\neq|V(G_2)|$, then~$G$ is not snake-winnable. By Lemma~\ref{lemma: 1c not comp},~$G$ is also not snake-winnable if either~$G_1+v$ or~$G_2+v$ is incomplete. \qed
\end{proof}

\section{Open problems}
Having characterized snake-winnable graphs for odd-sized bipartite graphs, a natural question is whether a similar characterization exists for bipartite graphs with an even number of vertices, or even just for grid graphs. Similarly, given the characterization of snake-winnable graphs with vertex-connectivity 1, we wonder whether the snake-winnable graphs with vertex-connectivity 2 can also be characterized. Finally, the question of whether the snake problem is in $\mathsf{NP}$ also remains open.

\bibliographystyle{splncs04}
\bibliography{refs}

\appendix
\renewcommand{\thesection}{\arabic{section}}
\setcounter{section}{1}

\section{Appendix: The game of Snake on a graph}\label{appendix: game_desc}
\setcounter{obsv}{0}
\begin{obsv}
Let~$a$ be the location of the apple at time~$t$, then~$a\notin S^t$.
\end{obsv}
\begin{proof}
First, we note that~$s_1^t \neq a$, since otherwise the apple would be eaten and no longer be on the graph.

Suppose~$a=s_i^t$ for some~$i\in\{2,\ldots,\ell\}$. Then the head must have been on~$a$ at some earlier point after which~$a$ has remained occupied by some part of the snake up until time~$t$. But we know that the apple on~$a$ could only have been placed at a time when~$a$ was unoccupied. Hence, the head must have reached~$a$ sometime after the apple was placed, which would imply that the apple was already eaten and is no longer on the graph. \qed
\end{proof}

\begin{obsv}
If~$G$ does not contain a Hamiltonian path, then~$G$ is not snake-winnable.
\end{obsv}
\begin{proof}
Since the snake must always form a simple path in~$G$, if it reaches length~$|V|$ it will form a Hamiltonian path.\qed
\end{proof}

\begin{obsv}
If~$G$ is Hamiltonian, then~$G$ is snake-winnable.
\end{obsv}
\begin{proof}
Since~$G$ is Hamiltonian, there is some simple cycle~$C$ in~$G$ that contains all vertices. The snake can keep moving along this cycle. Since any apple will be placed on~$C$, it will eventually be eaten by the snake. By following this strategy, the snake can keep growing until it covers the entire cycle, which contains all vertices.\qed
\end{proof}

\begin{obsv}
    Let~$G=(X\cup Y, E)$ be a bipartite graph. If~$\left||X|-|Y|\right|>1$, then~$G$ is not snake-winnable.
\end{obsv}
\begin{proof}
    Any path in~$G$ must alternate between~$X$ and~$Y$. Hence,~$G$ can only contain a Hamiltonian path if~$|X|=|Y|$ or~$\left||X|-|Y|\right|=1$. By Observation~\ref{obs: no Ham path}, it follows that if~$\left||X|-|Y|\right|>1$, then~$G$ is not snake-winnable.\qed
\end{proof}

\begin{obsv}
For a graph~$G=(V,E)$, if there is some~$v\in V$ with degree~$d(v)=1$, then~$G$ is not snake-winnable.
\end{obsv}
\begin{proof}
Suppose there is some vertex~$v\in V$ with~$d(v)= 1$ and let~$u$ be the only neighbor of~$v$. Since~$|V| \geq 3$, the first apple can be placed on some vertex that is not~$v$. The second apple is then placed on~$v$. When the snake eats the second apple, we must have~$s_1=v$ and~$s_2=u$. But since the snake has length 2, the head is not allowed to move to~$u$, which means there is no vertex the head can move to.\qed
\end{proof}

\section{Appendix: The complexity of Snake on a graph}\label{appendix: complexity}
\setcounter{theorem}{4}
\begin{theorem}\label{appendix: thm: odd bipartite}
    Let~$G=(V, E)$ be a bipartite graph with partition~$V=X\cup Y$ and $|V|$ odd. Then~$G$ is snake-winnable if and only if~$\Theta(|V|-3,2,2)$ is a spanning subgraph of~$G$.
\end{theorem}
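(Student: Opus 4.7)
The backward direction is immediate from Lemma~\ref{lemma: theta win}, so the substantive work lies in the forward direction. The plan is to argue the contrapositive: assuming $\Theta(|V|-3,2,2)$ is \emph{not} a spanning subgraph of $G$, exhibit a winning strategy for the apple placer. By Observation~\ref{obs: bipartite ham path} we may assume $|X|=|Y|+1$ without loss of generality, so that every Hamiltonian path in $G$ has both endpoints in~$X$.

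The heart of the argument is an endgame analysis. The apple placer will drive the game, by the moment the snake is about to eat the second-to-last apple, into the configuration depicted in Figure~\ref{fig: complex: odd bipartite theta}: the snake $(s_1,\ldots,s_{|V|-2})$ together with an off-snake vertex $w$ forms a cycle~$C$ of length $|V|-1$ in $G$ (so $w$ is adjacent on $C$ to both $s_1$ and $s_{|V|-2}$), the apple sits on the unique vertex $a\notin V(C)$, and $a$ is adjacent to $s_1$ so that the snake can eat it on its next move. After the snake moves $s_1\to a$, it has length $|V|-1$ with head~$a$, second vertex $s_1$, tail $s_{|V|-2}$, and $w$ is the only unoccupied vertex. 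Both $a$ and $w$ are neighbors of $s_1$ in the bipartite graph $G$, so they lie in the same part and $aw\notin E$; the only remaining candidate for the head's next move is therefore the tail $s_{|V|-2}$. But the edge $\{a,s_{|V|-2}\}$, together with $C$ and the edge $\{s_1,a\}$, would realize a spanning $\Theta(|V|-3,2,2)$ with apices $s_1$ and $s_{|V|-2}$ and internally disjoint paths of lengths $|V|-3$ (along the snake), $2$ (via $w$), and $2$ (via $a$). By hypothesis no such spanning theta exists, so this edge is absent, the head has no legal move, and the snake loses.

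The main obstacle is the forcing step: showing that the apple placer can indeed steer the game into the target configuration above against every snake strategy. The plan is induction on the snake's length, maintaining at each intermediate stage the invariant that the snake together with the remaining unoccupied vertices still admits a completion to the cycle-plus-pendant configuration above without forcing any spanning $\Theta(|V|-3,2,2)$ elsewhere in $G$. The delicate cases are when the snake is still short and the placer has many candidate apple locations but only weak short-term leverage over the snake's path; these should yield to case analysis leveraging the parity asymmetry $|X|=|Y|+1$ and the exclusion, by hypothesis, of every potential spanning $\Theta(|V|-3,2,2)$ subgraph. Once the invariant persists to length $|V|-2$, the endgame analysis above closes the argument.
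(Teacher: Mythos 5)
Your backward direction and your endgame analysis are correct and match the paper's: once the snake sits on a cycle of length $|V|-1$ and eats the second-to-last apple, the head and the last unoccupied vertex are both neighbors of the previous head, hence in the same part and non-adjacent, so the only escape is to the tail, and that edge would realize the forbidden spanning $\Theta(|V|-3,2,2)$. The problem is that you have deferred the entire substance of the proof to the ``forcing step,'' for which you offer only a plan (``induction on the snake's length, maintaining \ldots an invariant \ldots these should yield to case analysis''). As written this is a gap, not a proof: you never establish that the apple placer can reach the target configuration against an arbitrary snake strategy, and it is not at all clear that your proposed invariant can be maintained when the snake is short and essentially unconstrained.

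The missing idea is that no such global invariant is needed; parity does all the work in the last three apple placements, starting from \emph{whatever} position the snake happens to occupy at length $|V|-3$. With $|X|=|Y|+1$: at length $|V|-3$ (even) exactly one unoccupied vertex lies in $Y$; the placer puts the apple there, so at length $|V|-2$ (odd) the head and tail are both in $Y$ and the two remaining unoccupied vertices $x_1,x_2$ are both in $X$. If both were adjacent to both the head and the tail, the snake plus these two vertices would already form a spanning $\Theta(|V|-3,2,2)$, contrary to hypothesis; so some $x_1$ fails to be adjacent to both, and the placer puts the apple on $x_1$. Now the snake's moves are forced: it cannot eat $x_1$ immediately (then $x_1$ is adjacent to the head, hence not to the tail, and the only onward move would be to $x_2$, which lies in the same part $X$); it cannot move to its tail at length $|V|-2$ (that would close an odd cycle in a bipartite graph); so it must move to $x_2$ and thereafter repeatedly to the previous tail vertex, i.e.\ it circulates along a cycle of length $|V|-1$ consisting of the snake and the former tail. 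This is exactly how the configuration you assumed for your endgame arises --- as a consequence of the snake's own constraints, not of anything the placer maintains earlier in the game. You should replace your inductive plan with this parity argument; without it, the proposal does not constitute a proof of the forward direction.
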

\begin{proof}
    If~$\Theta(|V|-3,2,2)$ is a spanning subgraph of~$G$, then the snake can use the strategy from Lemma~\ref{lemma: theta win} to win.
    
     It remains to show that if there is no~$\Theta(|V|-3,2,2)$ spanning subgraph, then the apple placer has a winning strategy. We will approach this as follows. As the snake reaches length~$|V|-1$, there are at most two vertices the snake can move to: the only remaining unoccupied vertex and the tail vertex. If the head and the unoccupied vertex are in the same part, then the head can only move to the tail. By cleverly placing the apples, we will show that the apple placer can guarantee this is the case. Furthermore, due to the previous apple placement, the snake will create a~$\Theta(|V|-3,2,2)$ spanning subgraph if it moves to the tail. Hence, the snake cannot move to the unoccupied vertex nor the tail vertex, and will thus lose.
    
    Assume~$\Theta(|V|-3,2,2)$ is not a spanning subgraph of~$G$.
    Since~$G$ is odd-sized, by Observation~\ref{obs: bipartite ham path} we know~$G$ can only be snake-winnable if~$\left||X|-|Y|\right|=1$. Without loss of generality, we will assume that~$|X|=|Y|+1$. 
    
\begin{figure}[t]
\includegraphics[width = 5.5cm]{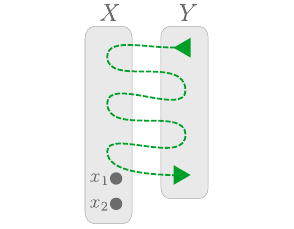}
\centering
\caption{The snake always alternates between~$X$ and~$Y$. When the snake reaches length~$|V|-2$ with its head in~$Y$, the two remaining unoccupied vertices are in~$X$.}
\label{appendix: fig: grid: odd bipartite two left}
\end{figure}

    Consider the moment the snake reaches length~$|V|-3$. Since~$|V|-3$ is even, the snake occupies the same number of vertices of~$X$ as of~$Y$. Hence, there is one vertex~$y\in Y$ that is unoccupied. The apple placer places the next apple on~$y$. When the snake eats the apple on~$y$, it reaches length~$|V|-2$, which is an odd number. Since the snake alternates between~$X$ and~$Y$,  both its head and tail must be in~$Y$ as depicted in Figure~\ref{appendix: fig: grid: odd bipartite two left}. Hence, the two remaining unoccupied vertices are both in~$X$.

    Suppose both unoccupied vertices are adjacent to the head and the tail. Then between the head and tail of the snake, we have the path of length~$|V|-3$ formed by the snake itself, and two paths of length 2, each through one of the unoccupied vertices. Thus, we obtain a~$\Theta(|V|-3,2,2)$ spanning subgraph of~$G$. Since we assumed such a spanning subgraph did not exist, we can conclude that at least one of the unoccupied vertices is not adjacent to both the head and the tail. Let this be~$x_1$ and let~$x_2$ be the other unoccupied vertex. The apple placer places the next apple on~$x_1$.

    Suppose the snake immediately eats~$x_1$, without moving to some other vertex first. Then the tail would remain in the same place. Furthermore, this requires~$x_1$ to be adjacent to the head, which means it is not adjacent to the tail. Hence, from~$x_1$ the snake can only move to another unoccupied vertex. But since~$x_2$ is the only remaining unoccupied vertex and~$x_1$ and~$x_2$ are both in~$X$, this is not possible. It follows that the snake must move to some other vertex first before eating the apple on~$x_1$. At length~$|V|-2$, the snake cannot move to its tail since this would create an odd cycle. Thus, the only move the snake can make is to~$x_2$.

    After moving to~$x_2$, the previous tail vertex becomes unoccupied, as depicted in Figure~\ref{appendix: fig: grid: odd bipartite move to prev}. From~$x_2$, the snake cannot move to~$x_1$ or the current tail vertex. Hence, it has to move to the previous tail vertex. Continuing this reasoning, we find that the only thing the snake can do until it eats the apple is repeatedly moving to the previous tail vertex. Of course, this type of move requires that the previous tail vertex is adjacent to the current head. Since the previous tail vertex is also adjacent to the current tail, the snake must be moving along a cycle of length~$|V|-1$, consisting of the snake itself and the previous tail vertex. 

    Let~$S$ be the position of the snake right before it moves to~$x_1$, which is depicted in Figure~\ref{appendix: fig: grid: odd bipartite theta}. Let~$s_1$ be the head vertex and~$s_{\ell}$ the tail vertex. Furthermore, let~$s_{\ell'}$ be the previous tail vertex, that is adjacent to both~$s_1$ and~$s_{\ell}$. Note that since the head moves from~$s_1$ to~$x_1$,~$s_1$ and~$s_\ell$ must both be in~$Y$, and~$s_{\ell'}$ in~$X$. After eating the apple on~$x_1$,~$s_\ell$ remains the tail vertex. From~$x_1$, the snake cannot move to~$s_{\ell'}$, since they are both in~$X$. If the snake can move to~$s_{\ell}$, then the paths~$(s_1, x_1, s_{\ell})$ and~$(s_1, s_{\ell'}, s_{\ell})$, together with~$S$ form a~$\Theta(|V|-3,2,2)$ spanning subgraph of~$G$. Since we assumed such a spanning subgraph did not exist, we can conclude that there is no vertex the snake can move to form~$x_1$, and will thus lose. \qed
\end{proof}

 \begin{figure}[t]
     \centering
     \subfigure[The snake has to move to~$x_2$ first. From~$x_2$, the snake can only move to the previous tail.]{\includegraphics[width = 0.56\textwidth]{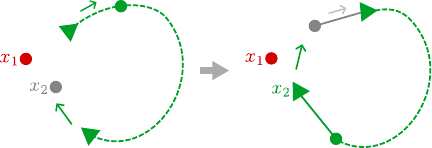}\label{appendix: fig: grid: odd bipartite move to prev}}\hspace{5mm}
     \subfigure[The position of the snake before eating the apple on~$x_1$.]{\includegraphics[width = 0.38\textwidth]{Figures/grid/odd_bipartite_theta.pdf}\label{appendix: fig: grid: odd bipartite theta}}
    \caption{After eating the apple on~$x_1$, the snake can only move to~$s_l$, but this move requires~$\Theta(|V|-3,2,2)$ spanning subgraph.}
    \label{appendix: fig: grid: odd bipartite}
\end{figure}

 \begin{figure}[t]
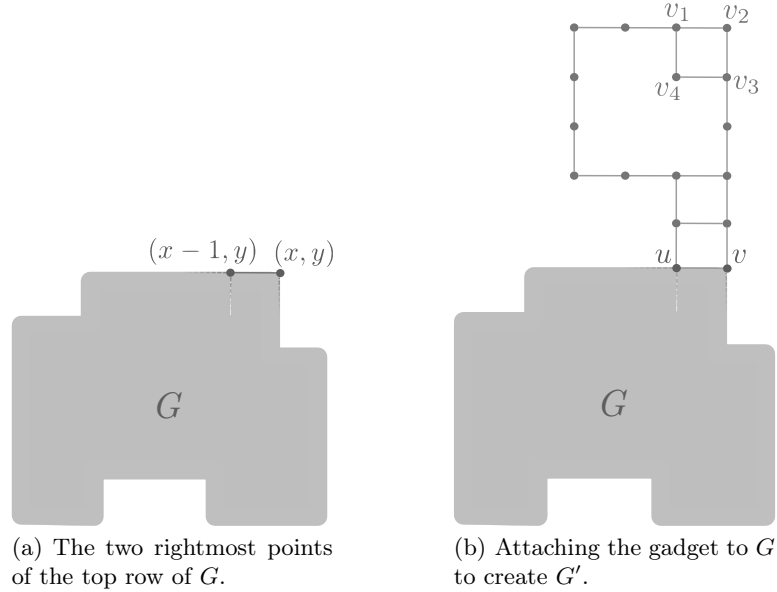

     \centering
     \subfigure[The two rightmost points of the top row of~$G$.]{\includegraphics[width = 0.35\textwidth]{Figures/complexity/reduction_grid_graph.pdf}\label{appendix: fig: complex: top two grid}}\hspace{15mm}
     \subfigure[Attaching the gadget to~$G$ to create~$G'$.]{\includegraphics[width = 0.35\textwidth]{Figures/complexity/reduction_attach_cycle.pdf}\label{appendix: fig: complex: attach gadget}}
    \caption{The reduction from a grid graph~$G$ to~$G'$ by using a gadget.}
    \label{appendix: fig: complex: reduction grid graph}
\end{figure}

\setcounter{theorem}{6}
\begin{theorem}
    The snake problem is $\mathsf{NP}$-hard, even when restricted to grid graphs.
\end{theorem}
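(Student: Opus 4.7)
The plan is to reduce from the Hamiltonian cycle problem on grid graphs, shown to be $\mathsf{NP}$-complete by Itai et al.~\cite{Itai1982}, and to invoke Corollary~\ref{cor: odd grid graph}: an odd-sized grid graph is snake-winnable if and only if it contains $\Theta(|V|-3,2,2)$ as a spanning subgraph. Given an instance $G$ of grid-Hamiltonicity, I would construct in polynomial time a grid graph $G'$ with $|V(G')|$ odd by attaching a constant-size gadget in empty cells adjacent to $G$, in such a way that $G'$ has $\Theta(|V(G')|-3,2,2)$ as a spanning subgraph if and only if $G$ has a Hamiltonian cycle. Combined with the corollary, this shows that snake-winnability of $G'$ is equivalent to Hamiltonicity of $G$, giving the hardness result.

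The gadget is attached to two selected boundary vertices $a,b$ of $G$ in a region of the grid that lies outside $G$ (which always exists by extending $G$'s bounding rectangle). It is designed to expose two internally disjoint length-$2$ paths between two distinguished vertices $u,v$ of $G'$, while the remaining gadget vertices serve as way-points on the long path of the theta. The forward direction of the equivalence is then straightforward: starting from a Hamiltonian cycle $C$ of $G$ and cutting it at an edge near the attachment site, the resulting Hamiltonian path of $G$ extends through the appropriate gadget vertices to form the long path of length $|V(G')|-3$, while the two remaining gadget vertices realize the two length-$2$ paths. Parity of $|V(G')|$ is controlled by the vertex count of the gadget; if the parity is wrong for a given $G$, a trivial one-vertex padding step or a minor variant of the gadget fixes it without affecting the equivalence.

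For the backward direction, I would argue that any spanning $\Theta(|V(G')|-3,2,2)$ subgraph of $G'$ is forced to place its two short length-$2$ paths inside the gadget. This leverages the rigidity of grid graphs: no grid vertex is adjacent to two mutually adjacent grid vertices, so the possible endpoints and routings of length-$2$ paths are severely constrained, and the gadget is tuned so that these constraints rule out all alternative placements. Consequently, the long path of length $|V(G')|-3$ must visit every vertex of $G$ between two specified attachment vertices, and undoing the attachment yields a Hamiltonian cycle of $G$.

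The main obstacle is engineering the gadget so that the backward direction is watertight: one must rule out spanning thetas whose short paths shortcut through $G$, and spanning thetas whose long path skips some vertex of $G$ by absorbing extra vertices into the gadget side of the theta. This amounts to a case analysis driven by grid-adjacency constraints near the attachment zone and is the technical heart of the reduction; the forward direction and the parity adjustment are essentially bookkeeping by comparison.
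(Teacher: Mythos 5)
Your overall plan coincides with the paper's: reduce from the Hamiltonian cycle problem on grid graphs and invoke Corollary~\ref{cor: odd grid graph} by attaching a gadget that realizes the two length-$2$ paths of the theta. However, as written the proposal has genuine gaps precisely at the points you flag as ``the technical heart.'' First, no concrete gadget is exhibited, and the backward direction (every spanning $\Theta(|V'|-3,2,2)$ must place its two short paths inside the gadget) is only asserted to be achievable by ``tuning''; the paper discharges this by exhibiting a specific gadget with vertices $v_1,v_2,v_3,v_4$ and observing that no cycle of length $|V'|-1$ in $G'$ contains both $v_2$ and $v_4$, which forces the short paths to be $(v_1,v_2,v_3)$ and $(v_1,v_4,v_3)$ and hence forces the long path to traverse $G$ as a Hamiltonian $uv$-path.

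Second, your forward direction silently assumes that a Hamiltonian cycle of $G$ can be ``cut at an edge near the attachment site,'' i.e.\ that some Hamiltonian cycle actually uses an edge between the two attachment vertices. An arbitrary choice of adjacent boundary vertices $a,b$ does not guarantee this, and then a Hamiltonian cycle of $G$ need not yield a Hamiltonian $ab$-path, breaking the equivalence. The paper resolves this by choosing $v$ to be the rightmost vertex of the top row, so that $d(v)\le 2$ and the edge $uv$ to its left neighbour $u$ is forced onto every Hamiltonian cycle; you need some such degree argument to make the forward direction (and the clean iff) go through. Finally, the parity fix via ``a trivial one-vertex padding step'' is not safe: adding a vertex changes both the Hamiltonicity structure and the theta structure. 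The paper instead observes that odd-sized grid graphs are never Hamiltonian (they are bipartite with unbalanced parts), so one may assume $G$ is even-sized and use an odd-sized gadget, making $G'$ odd-sized without any padding.
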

\begin{proof}
    Let~$G=(V,E)$ be an instance of the Hamiltonian cycle problem on grid graphs. Since odd-sized grid graphs are never Hamiltonian, we may assume~$G$ is even-sized. Let~$v=(x,y)$ be the vertex on the top row of~$G$ that is furthest to the right, as depicted in Figure~\ref{appendix: fig: complex: top two grid}. This means the points~$(x,y+1)$ and~$(x+1,y)$ are not vertices of~$G$. If~$(x-1,y)$ is also not a vertex of~$G$, then~$(x,y)$ has degree at most 1, which means~$G$ cannot be Hamiltonian. Hence we may assume that~$(x-1,y)$ is also a vertex of~$G$, and we denote~$u=(x-1,y)$. Since the degree of~$v$ is at most 2, the edge~$uv$ must be part of any Hamiltonian cycle.

    For our reduction, we create a new graph~$G'=(V',E')$ by attaching a gadget to~$u$ and~$v$, as depicted in Figure~\ref{appendix: fig: complex: attach gadget}. Since the gadget is odd-sized and we assumed~$G$ was even-sized,~$G'$ is an odd-sized grid graph. In~$G'$, let the vertices~$v_1,v_2,v_3,v_4$ be as depicted in Figure~\ref{appendix: fig: complex: attach gadget}. Note that there is no cycle in~$G'$ of length~$|V'|-1$ that contains both~$v_2$ and~$v_4$. Thus, any~$\Theta(|V|-3,2,2)$ spanning subgraph of~$G'$ must consist the following three paths:~$(v_1,v_2,v_3)$,~$(v_1,v_4,v_3)$, and some~$v_1v_3$-path that contains all vertices in~$V'\backslash\{v_2,v_4\}$. Since the gadget only connects to~$G$ at~$u$ and~$v$, to form the latter path~$G$ must have a Hamiltonian path from~$u$ to~$v$. But since the edge between~$uv$ is part of any Hamiltonian cycle, this Hamiltonian path from~$u$ to~$v$ exists if and only if~$G$ is Hamiltonian. It follows that~$G'$ has a~$\Theta(|V|-3,2,2)$ spanning subgraph if and only if~$G$ is Hamiltonian. By Corollary~\ref{cor: odd grid graph}, we obtain that~$G'$ is snake-winnable if and only if~$G$ is Hamiltonian, completing our reduction and proving the $\mathsf{NP}$-hardness of the snake problem on grid graphs.\qed
\end{proof}

\section{Appendix: The girth of snake-winnable graphs}\label{appendix: girth}
\setcounter{lemma}{2}

\begin{corollary}\label{appendix: cor: girth visit all}
Let~$G=(V,E)$ be a graph with~$g(G)> 2k$. Let~$\ell=|V|-k$ be the length of the snake, with~$k\geq 2$. Suppose~$G$ has a cycle~$C$ that contains the snake. If the head of the snake leaves~$C$, then it must visit all vertices in~$\overline{C}$ before returning to~$C$.
\end{corollary}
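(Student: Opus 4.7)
The plan is to reduce directly to Lemma~\ref{lemma: leave cycle} by a short contradiction argument. I would suppose, for contradiction, that the head leaves $C$ and returns after visiting only $m$ vertices of $\overline{C}$ with $m < |\overline{C}|$, and then derive $g(G) \leq 2k$, contradicting the hypothesis $g(G) > 2k$.

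First I would record two elementary bounds. Because $C$ contains the snake, $|C| \geq \ell = |V|-k$, so $|\overline{C}| = |V| - |C| \leq k$. Since $m$ is an integer strictly less than $|\overline{C}|$, we also have $m \leq |\overline{C}| - 1$. These two facts together with Lemma~\ref{lemma: leave cycle} suffice.

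Next I would feed everything into Lemma~\ref{lemma: leave cycle}, which gives $g(G) \leq |C| - \ell + 2m + 2$. Substituting $m \leq |\overline{C}| - 1$ yields $g(G) \leq |C| - \ell + 2|\overline{C}|$. Rewriting $2|\overline{C}| = 2(|V| - |C|)$, the right-hand side equals $2|V| - |C| - \ell$. Applying $|C| \geq \ell$ one more time, this is at most $2|V| - 2\ell = 2k$, producing the desired contradiction.

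There is no real conceptual obstacle: the corollary is a direct quantitative consequence of Lemma~\ref{lemma: leave cycle}, and the only care required is in two bookkeeping steps. One must convert the strict inequality $m < |\overline{C}|$ into $m \leq |\overline{C}|-1$ using integrality, and one must check that the ``slack'' $|\overline{C}| \leq k$ coming from ``$C$ contains the snake'' is exactly tight enough to push the bound from Lemma~\ref{lemma: leave cycle} past $2k$. Both are transparent once written out.
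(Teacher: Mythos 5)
Your proposal is correct and follows essentially the same route as the paper: apply Lemma~\ref{lemma: leave cycle}, bound $m \leq |V|-|C|-1$, and use $|C| \geq \ell = |V|-k$ to push the resulting bound down to $2k$, contradicting $g(G) > 2k$. The only cosmetic difference is the order in which you apply the inequalities $|C| \geq \ell$ and $\ell = |V|-k$ in the final algebra.
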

\begin{proof}
    Let~$m$ be the number of vertices that the head visits before returning to~$C$. By Lemma~\ref{lemma: leave cycle}, we know~$g(G)\leq |C|-\ell+2m+2$. Suppose the head does not visit all the vertices in~$\overline{C}$, in other words, we have~$m \leq |V|-|C|-1$. This gives us the following upper bound for~$g(G)$:
    \begin{align*}
        g(G)\leq |C|-\ell+2m+2 \leq 2|V|-|C|-\ell.
    \end{align*}
    Since we have~$\ell = |V|-k$, it follows that
    \begin{align*}
        g(G)\leq |V|-|C| +k.
    \end{align*}
    Since~$C$ contained the snake, we know that~$|C|\geq |V|-k$, and thus we have~$|V|-|C|\leq k$. This gives us~$g(G)\leq 2k$, which contradicts that~$g(G)> 2k$. We conclude that the head must visit all vertices in~$\overline{C}$ before returning to~$C$. \qed
\end{proof}

\begin{obsv} \label{appendix: obs: visit to win}
    Let~$\overline{S}$ be the current set of unoccupied vertices. To win, the snake will have to visit all vertices in~$\overline{S}$ at some point in the future.
\end{obsv}
\begin{proof}
    To win, the snake has to obtain a position that occupies all vertices. If a vertex is currently unoccupied, then it can only become occupied if the snake visits it.\qed
\end{proof}

\begin{lemma}\label{appendix: lemma: longer cycle lose}
    Let~$G=(V, E)$ be a graph with~$g(G)> 2k$. Let~$\ell=|V|-k$ be the length of the snake with~$k\geq 2$. Suppose~$G$ has a cycle~$C$ that contains the snake with~$|C| > \ell$. If the apple is on some vertex in~$\overline{C}$, then the snake will lose.
\end{lemma}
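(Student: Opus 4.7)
The plan is to use Corollary~\ref{cor: girth visit all} to force the snake to cover every vertex of~$\overline{C}$, and then show that the head is trapped: any move back to~$C$ would close a cycle of length at most~$2k$, contradicting~$g(G)>2k$. The apple placer's strategy is to place every apple that follows the given one on a vertex of~$C$, so that exactly one apple is eaten while the head is in~$\overline{C}$.

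Set $m:=|V|-|C|\leq k-1$ and $c:=|C|-\ell\geq 1$, so that $c+m=k$. Let $(s_1^0,\ldots,s_\ell^0)$ be the snake's position the instant it leaves~$C$. Since the body is a simple path on the cycle~$C$, it occupies a contiguous arc, and the originally unoccupied part of~$C$ is a complementary arc $w_1,\ldots,w_c$ with~$w_c$ adjacent to~$s_1^0$ and~$w_1$ adjacent to~$s_\ell^0$. The apple is in~$\overline{C}$, so the snake must enter~$\overline{C}$, and by Corollary~\ref{cor: girth visit all} it must visit every vertex $u_1,\ldots,u_m$ of~$\overline{C}$ before returning to~$C$. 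Since exactly one apple is eaten during these $m$ moves, the tail recedes by $m-1$ positions; thus after the traversal the snake has length $\ell+1$ with position $(u_m,\ldots,u_1,s_1^0,\ldots,s_{\ell-m+1}^0)$, tail~$s_{\ell-m+1}^0$, and $\overline{C}$ fully occupied.

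The head at~$u_m$ has no neighbor in~$\overline{C}$ that is unoccupied or equal to the tail, so if it can move, it must move to some $v\in C$ lying in the valid-return arc $\{s_{\ell-m+1}^0,\ldots,s_\ell^0\}\cup\{w_1,\ldots,w_c\}$. The short $C$-path from any such~$v$ to~$s_1^0$ runs through $w_c,w_{c-1},\ldots,w_1,s_\ell^0,\ldots$ and has length at most $c+m=k$. So the path $s_1^0\to u_1\to\cdots\to u_m\to v$ (of length $m+1$) concatenated with this short $C$-path would form a cycle of length at most $(m+1)+k\leq 2k$ in~$G$, contradicting $g(G)>2k$. Hence no legal~$v$ exists, the head is stuck, and the snake loses.

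The main obstacle is the bookkeeping. One has to verify that the tail position is~$s_{\ell-m+1}^0$ regardless of at which step of the traversal the apple is eaten (a single apple over~$m$ moves always costs exactly one tail recession), and that the far endpoint of the valid-return arc is within $C$-distance $c+m=k$ of~$s_1^0$ along the short side of~$C$, which uses the identity $c+m=k$ and the arc-structure of a simple path on a cycle. A minor subtlety is that Corollary~\ref{cor: girth visit all} is stated for length~$\ell$ before any apple is eaten; the same reasoning, with one fewer tail recession, carries over to the apple-eating case and yields an even smaller cycle, so the forcing argument still applies.
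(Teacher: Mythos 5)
Your proof is correct and follows essentially the same route as the paper's: invoke Corollary~\ref{cor: girth visit all} to force a full traversal of~$\overline{C}$, account for the single apple costing one fewer tail recession, and show that any re-entry into~$C$ closes a cycle of length at most~$k+m+1\leq 2k$. The only (cosmetic) difference is the final step --- you conclude that the head is stuck at the last vertex of~$\overline{C}$, whereas the paper concludes that some vertex of~$C$ stays unoccupied forever and appeals to Observation~\ref{obs: visit to win}; both are valid.
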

\begin{proof}
    Suppose the apple is on some vertex in~$\overline{C}$. Then at some point, the head of the snake will have to leave~$C$ to eat the apple. We will first show that after the head leaves~$C$, it can never return to~$C$.
    
    Suppose the head leaves~$C$ and returns to~$C$ at some later point. By Corollary~\ref{cor: girth visit all}, the snake has to visit all vertices in~$\overline{C}$ before it can return. Let~$m$ be the number of vertices the snake visits before returning to~$C$, then~$m=|V|-|C|$. Furthermore, one of these vertices must contain the apple.
    
   When the snake eats the apple, the tail will not move. So while the snake visits~$m$ vertices in~$\overline{C}$, the tail will only move~$m-1$ times. It follows that right before the head moves back to~$C$, the tail will be on~$s_{\ell-m+1}$. Let~$s_1^+$ be the position of the head right after it re-enters~$C$. We now use similar reasoning as for Lemma~\ref{lemma: leave cycle} to obtain that there is a path on~$C$ from~$s_1$ to~$s_1^+$ of length at most~$|C|-\ell+m+1$.
   
    Combined with the path of the head through~$\overline{C}$, this gives us a cycle of length at most~$|C|-\ell+2m+1$. Given that~$m=|V|-|C|$, we obtain that
    \begin{align*}
        |C|-\ell+2m+1 = 2|V|-|C|-\ell+1.
    \end{align*}
    Since~$|C|\geq \ell+1$, we have
    \begin{align*}
        2|V|-|C|-\ell+1 \leq 2(|V|-\ell),
    \end{align*}
    and since~$\ell\geq |V|-k$, it follows that~$G$ must contain a cycle of length at most~$2k$. This contradicts that~$g(G)> 2k$, and thus we can conclude that the head cannot return to~$C$.

    Since~$|C|\geq \ell+1$, there must be some vertex~$v$ on~$C$ that is unoccupied right before the head leaves~$C$. Since the snake can no longer return to~$C$, the snake can never visit~$v$ again, and~$v$ will remain unoccupied. By Observation~\ref{obs: visit to win}, the snake will lose. \qed
\end{proof}

\begin{corollary}\label{appendix: cor: longer cycle lose three}
    Let~$G=(V, E)$ be a graph with~$g(G)>6$. Consider the moment the snake grows to length~$|V|-3$. If there is a cycle~$C$ of length~$|V|-1$ or~$|V|-2$ that contains the snake, then the snake will lose.
\end{corollary}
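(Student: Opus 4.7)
The plan is to apply Lemma~\ref{lemma: longer cycle lose} directly with $k=3$. First I would verify that all hypotheses of that lemma are satisfied in our setting: the girth satisfies $g(G)>6=2k$; the snake's length is $\ell=|V|-3=|V|-k$ with $k=3\geq 2$; and by assumption there is a cycle $C$ containing the snake whose length lies in $\{|V|-1, |V|-2\}$, so in either case $|C|>|V|-3=\ell$, as required.

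Next I would confirm that the apple placer can actually force the situation required by the lemma, namely that the next apple is placed on some vertex of $\overline{C}$. Since $|C|\leq |V|-1 < |V|$, the set $V\setminus V(C)$ is non-empty, and because the snake lies entirely on $C$, every vertex in $V\setminus V(C)$ is unoccupied and thus a legal apple location. The apple placer selects one such vertex as the next apple, and Lemma~\ref{lemma: longer cycle lose} then guarantees that the snake loses.

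There is no real obstacle here: the corollary is essentially a direct specialization of Lemma~\ref{lemma: longer cycle lose} at $k=3$, combined with the trivial observation that when the snake is contained in a cycle strictly shorter than $|V|$, the apple placer always has at least one vertex off the cycle on which to place the next apple.
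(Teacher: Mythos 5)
Your proof is correct and matches the paper's argument exactly: both place the next apple on an unoccupied vertex off $C$ (which exists since $|C|<|V|$ and the snake lies entirely on $C$) and then invoke Lemma~\ref{lemma: longer cycle lose} with $k=3$. The additional verification of the lemma's hypotheses is accurate and harmless.
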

\begin{proof}
    Since the snake is contained in~$C$ and~$|C|<|V|$, there is some unoccupied vertex in~$\overline{C}$. The apple placer places the next apple on this vertex. We can now use Lemma~\ref{lemma: longer cycle lose} with~$k=3$ to obtain that the snake will lose.\qed
\end{proof}

\begin{figure}[t]
     \centering
     \subfigure[The snake can take~$P_u$ or~$P_v$ to~$S$.]{\includegraphics[width = 0.39\textwidth]{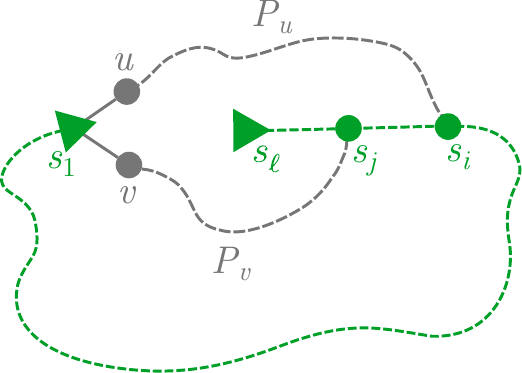}\label{appendix: fig: girth: no cycle paths}}\hspace{10mm}
     \subfigure[The snake in position~$S'$.]{\includegraphics[width = 0.39\textwidth]{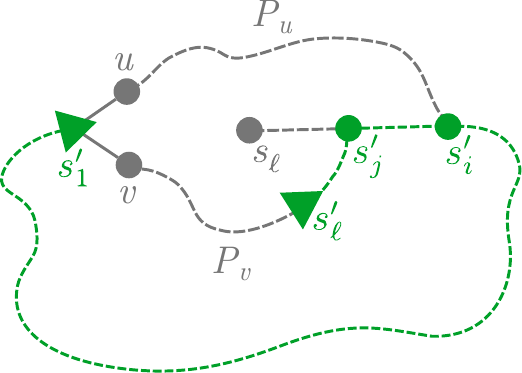}\label{appendix: fig: girth: no cycle shifted}}
    \caption{There is no cycle that contains the snake and the head can move to either~$u$ or~$v$ next, without ensuring a loss.}
    \label{appendix: fig: girth: no cycle}
\end{figure}

\setcounter{lemma}{9}
\begin{lemma}\label{appendix: lemma: girth no cycle}
   Let~$G=(V, E)$ be a graph with~$g(G)>2k$ and let~$\ell=|V|-k$ be the length of the snake with~$k\geq 2$. Suppose there is no cycle in~$G$ that contains the snake. Then, there is at most one vertex to which the snake can move next without ensuring a loss.
\end{lemma}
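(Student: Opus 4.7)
The plan is to argue by contradiction. Suppose the head $s_1$ has two distinct non-losing moves, to vertices $u$ and $v$. Since the snake is not contained in any cycle, $s_1 s_\ell \notin E$, so neither move can be to $s_\ell$, and both $u, v$ lie in $\overline{S}$. Combining the length-$2$ path $(u, s_1, v)$ with any $u$-$v$ path in $G - s_1$ yields a cycle in $G$, so the girth bound $g(G) > 2k$ immediately implies $uv \notin E$, rules out any common neighbor of $u$ and $v$ other than $s_1$, and more generally forces every $u$-$v$ path in $G - s_1$ to have length at least $2k - 1$.

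Next, I would exploit the apple placer's freedom. Immediately after the snake moves to $u$, place the apple on $v$. Since the move was assumed non-losing, the snake has a continuation that eats this apple, so its head must eventually reach $v$; let $u = x_0, x_1, \ldots, x_m = v$ be the resulting head trajectory, a walk of length $m$ in $G$. Because $s_1$ occupies position~$2$ of the new snake $S_u = (u, s_1, s_2, \ldots, s_{\ell-1})$, it stays on the body for the next $\ell - 2$ moves, and any revisit of a vertex by the head requires at least $\ell$ intervening steps (the time needed for a vertex to cycle off the tail). Thus, as long as $m \leq \ell - 2$, the trajectory is a simple $u$-$v$ path in $G - s_1$, and combined with $(u, s_1, v)$ it produces a cycle of length at most $m + 2$.

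The main obstacle is then to bound $m$ and to handle the case $m > \ell - 2$. For the bound, I would argue in the spirit of Lemma~\ref{lemma: leave cycle} and Corollary~\ref{cor: girth visit all}: at each step the head moves either into the unoccupied set (of constant size $k$) or to the current tail, and the girth condition rules out shortcuts that would let the head roam for many steps without creating a short cycle, forcing $m + 2 \leq 2k$. For the case $m > \ell - 2$, I would look at the prefix of the trajectory up to just before the first visit to $s_1$ and apply the same cycle-extraction trick to obtain a cycle of length at most $2k$, again contradicting $g(G) > 2k$. A symmetric argument swapping the roles of $u$ and $v$ then shows that both moves cannot simultaneously be non-losing, proving the lemma.
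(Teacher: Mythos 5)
There is a genuine gap, and it sits exactly where you flag ``the main obstacle'': you never establish the bound $m+2\leq 2k$ on the head's trajectory from $u$ to $v$, and in fact no such bound holds. Your argument implicitly treats the head as confined to the $k$ originally unoccupied vertices, but every non-eating move frees the former tail vertex, so the head may step onto vacated body vertices and chase the tail along a long stretch of the original snake (e.g.\ $u,w,s_{\ell},s_{\ell-1},s_{\ell-2},\ldots$ once some unoccupied $w$ neighbours $s_{\ell}$). This yields a simple $u$--$v$ path in $G-s_1$ of length comparable to $\ell$, and closing it through $s_1$ produces a \emph{long} cycle, which is perfectly consistent with $g(G)>2k$; the girth hypothesis forbids short cycles, not long excursions, so the sentence ``the girth condition rules out shortcuts that would let the head roam for many steps'' has the logic reversed. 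The fallback for $m>\ell-2$ inherits the same problem: the prefix before the first visit to $s_1$ is not a cycle and there is no short cycle to extract from it. (A minor additional slip: the apple placer cannot insert an apple on $v$ at an arbitrary moment, but Observation~\ref{obs: visit to win} already forces the head to reach $v$ eventually, so that part is repairable.)

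The paper's proof supplies precisely the idea your sketch is missing. It first shows that a head trajectory from $u$ to $v$ staying inside $\overline{S}$ would give a cycle of length at most $k+1$ through $s_1$, so before reaching $v$ the head must re-enter the original body $S$; this defines internally disjoint escape paths $P_u$ and $P_v$ through $\overline{S}$ ending at $s_i$ and $s_j$ with $i,j<\ell$ (they cannot end at $s_{\ell}$, since that would put the snake on a cycle). It then lets the snake actually traverse $P_v$, which places it on a cycle $C$ consisting of $s_1,\ldots,s_j$ and $P_v$ --- a cycle that omits $s_{\ell}$ --- rotates the snake around $C$ until the head is back at $s_1$, and observes that taking $P_u$ from this position would leave and re-enter $C$ without visiting $s_{\ell}\in\overline{C}$, contradicting Corollary~\ref{cor: girth visit all}. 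The interplay between the \emph{two} escape paths and the stranded tail vertex $s_{\ell}$ is the engine of the proof; a single-trajectory girth count, as in your proposal, cannot be made to work.
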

\begin{proof}
    The idea behind the proof is as follows. If there are two different vertices the snake can move to, then this will give us two different paths through~$\overline{S}$. We will see that these paths must both lead to~$S$ and cannot meet before they reach~$S$. With one of these paths, we then construct a cycle that contains a big portion of~$S$. Using this cycle and the remaining path, we will show that the snake can move in a way that contradicts Corollary~\ref{cor: girth visit all}.
    
    Let~$S$ be the current position of the snake with~$s_1$ the head position, and~$s_{\ell}$ the tail position. Let~$u$ and~$v$ be two unoccupied vertices that are both adjacent to the head. Suppose the snake can move to either~$u$ or~$v$ next, without ensuring a loss. First, suppose the snake moves to~$u$. Then it cannot move to~$v$ before the head moves to~$S$ again, otherwise we would have a cycle of length at most~$k+1$. By Observation~\ref{obs: visit to win} the snake will eventually have to visit~$v$ to win. It follows that the snake has to move to~$S$ at some point. With the same argument, this is also the case if the snake moves to~$v$ first. 
    
    Let~$P_u$ and~$P_v$ be the paths the head can take from~$u$ and~$v$ to~$S$ respectively.  The two paths are depicted in Figure~\ref{appendix: fig: girth: no cycle paths}. We know~$P_u$ and~$P_v$ must be internally disjoint, otherwise, we would have a cycle of length at most~$k+1$. Let~$s_i$ be the endpoint of~$P_u$ and~$s_j$ of~$P_v$. Without loss of generality, we may assume that~$i\leq j$. Since there is no cycle that contains~$S$, we must have~$i,j< \ell$. This means~$s_{\ell}$ lies on neither of the two paths.

    Suppose the snake takes the path~$P_v$. When the head moves to~$s_j$, the cycle~$C= (s_{j-1},\ldots s_1, P_u)$ will contain the snake. The snake can now keep moving along~$C$ until the head is on~$s_1$ again. Let~$S'$ be the position of the snake at this moment, which is depicted in Figure~\ref{appendix: fig: girth: no cycle shifted}. Note that~$C$ still contains~$S'$. This new position is similar to the original position~$S$, but the last section of the snake, from~$s_{j+1}'$ to~$s_l'$, now lies on~$P_v$. Since we assumed the snake could also take~$P_u$ from~$u$ to~$s_i$, it follows that from the position~$S'$, the snake can take the path~$P_u$ to leave and re-enter~$C$. However, we know~$s_{\ell}$ neither lies on this path, nor on the cycle~$C$. By Corollary~\ref{cor: girth visit all}, this is not possible. It follows that from the original position~$S$, either moving to~$u$ or moving to~$v$ must result in a loss. \qed
\end{proof}

By Lemma~\ref{appendix: lemma: girth no cycle}, we obtain that if~$G$ has a girth of at least 7, the snake has length~$|V|-3$, and there is no cycle containing the snake, then we can predict its movement. We can use this to our advantage when describing a strategy for the apple placer. The following lemma will be useful for this purpose.

\begin{lemma} \label{appendix: lemma: girth two left}
Let~$G=(V, E)$ be a non-Hamiltonian graph with~$g(G)>4$. When the snake grows to length~$|V|-2$, if the two unoccupied vertices are not adjacent, then the snake will lose.
\end{lemma}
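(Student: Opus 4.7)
Plan. I show the apple placer has a choice of next apple that forces the snake to lose. Let $s_1$ and $s_\ell$ denote the current head and tail.

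\emph{Structural observation.} The four potential edges $s_1u,\,s_1v,\,s_\ell u,\,s_\ell v$ cannot all lie in $E$: together they would form the $4$-cycle $s_1\,u\,s_\ell\,v\,s_1$, contradicting $g(G)>4$. Hence at least one is absent. The apple placer picks $x\in\{u,v\}$ (setting $y:=\{u,v\}\setminus\{x\}$) so that either (i) $s_1x\in E$ and $s_\ell x\notin E$, or (ii) $s_1x\notin E$; the observation guarantees at least one of these branches is always available.

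Suppose for contradiction the snake still wins. Every snake move before eating $x$ is a \emph{rotation} (head moves to the tail, occupied set unchanged) or a \emph{swap} (head moves to the current non-apple unoccupied vertex, changing the occupied set). Just after eating $x$, the snake has length $|V|-1$ with head $x$, tail $t$, and a unique unoccupied vertex $w$; to avoid immediate loss it needs $xw\in E$ (Case A) or $xt\in E$ (Case B).

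\emph{No-swap case.} Then $w=y$, so Case A demands $xy=uv\in E$, contradicting the hypothesis. For Case B after $k\geq 1$ rotations, the head and tail at eating time are consecutive vertices $s_i,s_{i-1}$ of the original snake path; combined with $xs_i,xs_{i-1}\in E$, the path edge $s_is_{i-1}\in E$ closes the triangle $xs_is_{i-1}$, forbidden by $g(G)>4$. For $k=0$, branch (i) kills Case B because $s_\ell x\notin E$ and Case A fails as above, while in branch (ii) the snake cannot eat $x$ at $k=0$ at all because $s_1x\notin E$.

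\emph{Swap case.} If the snake performs at least one swap, then $w\neq y$; the first swap uses $s_1y\in E$. For a single swap followed by $j$ rotations, the snake at the eating moment has head $s_{\ell-j}$, tail $s_{\ell-j-1}$, and $s_\ell$ as the remaining non-apple unoccupied vertex: Case B closes the triangle $xs_{\ell-j}s_{\ell-j-1}$, and Case A closes the cycle $x,s_{\ell-j},s_{\ell-j+1},\ldots,s_\ell$ of length $j+2$, a triangle or $4$-cycle when $j\in\{1,2\}$; all forbidden by $g(G)>4$. For $j\geq 3$ or multiple swaps, the accumulated swap, rotation, and eating edges either close a shorter forbidden cycle or splice the remaining unoccupied vertex into the derived length-$(|V|-1)$ cycle, yielding a Hamiltonian cycle of $G$ and contradicting non-Hamiltonicity; an induction on the number of swaps handles the general case. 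The hard part is precisely this inductive bookkeeping of swap-induced edges, but in every branch one of $g(G)>4$ or non-Hamiltonicity produces the needed contradiction. Hence the snake loses. \qed
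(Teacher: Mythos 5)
Your structural observation and the no-swap analysis are fine, but the proof breaks exactly where you flag it: the clause ``for $j\geq 3$ or multiple swaps, the accumulated edges either close a shorter forbidden cycle or \dots{} yield a Hamiltonian cycle'' is not just unproven, it is false for the apple-placement rule you give. Concretely, take the snake on $(s_1,\ldots,s_\ell)$ with unoccupied $u,v$, $uv\notin E$, and add to the path only the edges $s_1v$, $vs_{\ell-1}$, $us_{\ell-j}$, $us_\ell$ for some $3\leq j\leq \ell-2$. Every cycle in this graph has length $\ell$, $j+2\geq 5$, or $\ell-j+4$, so $g(G)>4$; the degree-$2$ vertices force three edges at $s_{\ell-j}$ in any Hamiltonian cycle, so $G$ is non-Hamiltonian. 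Since $s_1u\notin E$, your rule permits the placer to choose $x=u$ via branch (ii). The snake then swaps to $v$, rotates around the cycle $v\,s_1\cdots s_{\ell-1}\,v$ until its head reaches $s_{\ell-j}$, eats $u$, and finally moves $u\to s_\ell$ to occupy all of $V$. No short cycle and no Hamiltonian cycle ever appears; the union of the used edges is only a Hamiltonian \emph{path}, which is perfectly consistent with your hypotheses. The underlying problem is that the correct apple vertex cannot be determined from the four adjacencies $s_1u,s_1v,s_\ell u,s_\ell v$ alone (in the example above the placer must choose $v$, which your rule does not force).

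The paper avoids this by splitting on a global property of the position rather than on local adjacencies: either the snake lies on a cycle of length $|V|-1$ (then Lemma~\ref{lemma: longer cycle lose} applies), or it lies on a cycle of length $|V|-2$ (then Corollary~\ref{cor: girth visit all} forces the snake to visit both unoccupied vertices consecutively, impossible since they are non-adjacent), or no cycle contains it. In the last case the key tool is Lemma~\ref{appendix: lemma: girth no cycle}: when no cycle contains the snake and $g(G)>2k$, there is \emph{at most one} first move that does not already ensure a loss, and the placer puts the apple precisely on that vertex. That lemma is what identifies the correct apple location (it is $v$, not $u$, in the counterexample), and it is the ingredient your argument is missing. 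If you want to salvage your approach, you would need to prove a statement of that strength; the swap/rotation bookkeeping alone cannot substitute for it.
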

\begin{proof}
Since~$G$ is non-Hamiltonian, there is either a cycle of length~$|V|-2$, a cycle of length~$|V|-1$, or no cycle that contains the snake. By Lemma~\ref{lemma: longer cycle lose}, the snake will lose if there is a cycle of length~$|V|-1$ that contains the snake.

First, suppose that there is a cycle of length~$|V|-2$ that contains the snake. In other words, the head of the snake is adjacent to the tail, and the snake itself forms a cycle. Let~$C$ be this cycle. Both unoccupied vertices are in~$\overline{C}$, and the apple has to be placed on one of these two vertices. The snake cannot eat the apple by only moving along~$C$. Hence, at some point, it must move to one of the unoccupied vertices. When it does so, by Corollary~\ref{cor: girth visit all}, the snake has to visit the other unoccupied vertex next. However, this is not possible, since the unoccupied vertices are not adjacent. It follows that the snake will lose.

Now suppose there is no cycle that contains the snake. By Lemma~\ref{appendix: lemma: girth no cycle} there is only one unoccupied vertex the head can move to without losing. The apple is placed on this vertex, forcing the snake to eat the apple on its next move. From there, the head cannot move to its tail, since there was no cycle that contained the snake. It can also not move to the remaining unoccupied vertex since they are not adjacent. It follows that the snake will lose. \qed
\end{proof}

We now return to our scenario where~$G$ is non-Hamiltonian with~$g(G)>6$, the snake grows to length~$|V|-3$, and there is no cycle containing the snake. By Lemma~\ref{appendix: lemma: girth two left}, the apple placer only needs to ensure that when the snake eats the next apple, the remaining unoccupied vertices are not adjacent. This allows us to prove the following.

\setcounter{lemma}{6}
\begin{lemma}\label{appendix: lemma: three left no cycle}
    Let~$G=(V,E)$ be a non-Hamiltonian graph with~$g(G)>6$. Consider the moment the snake grows to length~$\ell = |V|-3$ and suppose there is no cycle in~$G$ that contains the snake. Then, the snake will lose.
\end{lemma}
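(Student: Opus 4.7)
The plan is to reduce to Lemma~\ref{appendix: lemma: girth two left}, which states that if the snake reaches length $|V|-2$ with its two unoccupied vertices non-adjacent, then it loses. The apple placer's goal is thus to force exactly this configuration, starting from the length-$(|V|-3)$ state with no containing cycle.

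Let $\overline{S^0}=\{a,b,c\}$ denote the three unoccupied vertices at the moment the snake reaches length $|V|-3$. Since $g(G)>6$, the graph contains no triangle, so the induced subgraph on $\{a,b,c\}$ has at most two edges and there always exists a vertex $x\in\{a,b,c\}$ whose removal leaves the other two non-adjacent. Moreover, since no cycle contains the snake, Lemma~\ref{appendix: lemma: girth no cycle} applied with $k=3$ (valid because $g(G)>6$) shows that the snake has at most one safe next move; call it $u\in\{a,b,c\}$. If no such safe move exists, the snake loses on its next move and we are done.

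The main strategy of the apple placer is to place the first apple on $u$. The snake, having no other safe option, is forced to move to $u$, eats the apple, and reaches length $|V|-2$ with unoccupied set $\{a,b,c\}\setminus\{u\}$. Whenever this remaining pair is non-adjacent, Lemma~\ref{appendix: lemma: girth two left} immediately concludes that the snake loses. In particular, this directly handles the cases where $\{a,b,c\}$ is independent, where $\{a,b,c\}$ has exactly one edge $ab$ and $u\in\{a,b\}$, and where $\{a,b,c\}$ has two edges $ab,bc$ (sharing the vertex $b$) and $u=b$.

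The main obstacle is the residual case in which $u$ is not a valid $x$, i.e., $\{a,b,c\}\setminus\{u\}$ is an edge: concretely, $\{a,b,c\}$ has one edge $ab$ with $u=c$, or $\{a,b,c\}$ has two edges $ab,bc$ with $u\in\{a,c\}$. Here the apple placer instead places the apple on a valid $x\neq u$, forcing the snake to first move to $u$ without eating and shift to a new position $S^1$ with head $u$ and unoccupied set $(\{a,b,c\}\setminus\{u\})\cup\{s_\ell^0\}$. The analysis then continues inductively: either no cycle contains $S^1$, in which case Lemma~\ref{appendix: lemma: girth no cycle} is re-applied to identify the next forced move, or a containing cycle of length $|V|-2$ or $|V|-1$ has emerged from the shift, in which case Lemma~\ref{appendix: lemma: longer cycle lose} finishes off the snake once the apple is moved to a vertex outside that cycle. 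Verifying that this chain of forced non-eating moves always terminates in either a dead-end position or a length-$(|V|-2)$ configuration with a non-adjacent unoccupied pair, so that Lemma~\ref{appendix: lemma: girth two left} can be invoked, is the technical heart of the proof.
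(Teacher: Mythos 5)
Your overall skeleton matches the paper's: use Lemma~\ref{appendix: lemma: girth no cycle} to pin down the snake's (at most one) safe move and reduce everything to Lemma~\ref{appendix: lemma: girth two left}. The half you complete is fine: if the unique safe vertex $u$ satisfies that $\{a,b,c\}\setminus\{u\}$ is a non-edge, placing the apple on $u$ forces a type-$\alpha$ move and Lemma~\ref{appendix: lemma: girth two left} finishes. But the residual case --- which is exactly the paper's main case, namely when the snake can traverse the three unoccupied vertices in a forced order $u,v,w$ with $v\sim w$ --- is where the real work lies, and you leave it as an unverified ``inductive continuation.'' Two concrete problems with that continuation: first, you write that if a long cycle emerges after the forced $\beta$-move, Lemma~\ref{appendix: lemma: longer cycle lose} applies ``once the apple is moved to a vertex outside that cycle,'' but the apple has already been placed and cannot be relocated; the lemma only helps if the apple happens to lie in $\overline{C}$, which you cannot guarantee at that point. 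Second, the termination of the chain of forced non-eating moves is precisely the difficulty, and nothing in your argument bounds it.

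The missing idea is the paper's: in this case the apple placer puts the apple on the \emph{second} vertex $v$ of the forced order, not on $u$ and not merely on ``a valid $x$.'' The snake is then forced to make a type-$\beta$ move to $u$, which frees the old tail $s_\ell$, and then a type-$\alpha$ move onto $v$. The two remaining unoccupied vertices are $w$ and $s_\ell$, and their non-adjacency comes not from the girth but from \emph{non-Hamiltonicity}: an edge $w s_\ell$ would close the Hamiltonian cycle $(S,u,v,w)$. That single observation collapses the entire residual case into one application of Lemma~\ref{appendix: lemma: girth two left}, with no induction needed. (You also over-route: in your sub-case where $\{a,b,c\}$ has the single edge $ab$ and $u=c$, placing the apple on $u$ already wins for the apple placer, since after eating $c$ the snake has no unoccupied neighbor and cannot move to its tail without creating a cycle containing $S$; your framework misses this because you only ever invoke Lemma~\ref{appendix: lemma: girth two left}.) As written, the proof is incomplete.
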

\begin{proof}
    Let~$S$ be the current snake position with~$s_1$ and~$s_{\ell}$ being the head and tail positions respectively. We will distinguish between two cases. In the first case, the snake is able to visit all three unoccupied vertices without moving to~$S$ in between. In the second case, this is not possible.

    \subsubsection*{Case 1: the snake can visit all vertices in~$\overline{S}$ without moving to~$S$ in between.}
    First, we note that by Lemma~\ref{appendix: lemma: girth no cycle}, the snake can visit the vertices in~$\overline{S}$ in only one order. Let~$u$ be the vertex the snake has to move to first,~$v$ the second, and~$w$ the third. The apple is placed on~$v$. First, the snake has to move to~$u$, after which~$s_{\ell}$ becomes unoccupied. Next, the snake has to eat the apple on~$v$. After eating the apple,~$w$ and~$s_{\ell}$ are the two remaining unoccupied vertices. But~$w$ cannot be adjacent to~$s_{\ell}$ since this would give us the cycle~$(S,u,v,w)$, contradicting that~$G$ is non-Hamiltonian. Hence, the two remaining unoccupied vertices are not adjacent and by Lemma~\ref{appendix: lemma: girth two left}, the snake will lose.

    \subsubsection*{Case 2: the snake cannot visit all vertices in~$\overline{S}$ without moving to~$S$ in between.}
    By Lemma~\ref{appendix: lemma: girth no cycle}, there is at most one vertex the head can move to without guaranteeing a loss. Let this be vertex~$u$. The apple is placed on~$u$, meaning the snake has to eat the apple on its next move. After eating the apple on~$u$, we know the snake cannot move to~$s_{\ell}$, since then the cycle~$(S,u)$ would then have contained~$S$. Thus, one of the remaining unoccupied vertices has to be adjacent to~$u$, otherwise the snake will lose. Let~$v$ be an unoccupied vertex that is adjacent to~$u$ and let~$w$ be the other unoccupied vertex. Since we assumed the snake cannot visit all vertices in~$\overline{S}$ without moving to~$S$ in between, the snake should not be able to move from~$u$ to~$v$ and then to~$w$. It follows that the two remaining unoccupied vertices~$v$ and~$w$ cannot be adjacent and by~\ref{appendix: lemma: girth two left}, the snake will lose \qed
\end{proof}

From our four possible scenarios, we have now shown three will result in a loss for the snake. It remains to show that if there is a cycle of length~$|V|-3$ that contains the snake, then the apple placer has a winning strategy. Note that in this scenario, the snake itself forms is cycle and the head is adjacent to the tail.

\begin{lemma}\label{appendix: lemma: girth head to tail}
    Let~$G=(V, E)$ be a non-Hamiltonian graph with~$g(G)>6$. Consider the moment the snake grows to length~$\ell = |V|-3$ and suppose there is a cycle~$C$ of length~$|V|-3$ that contains the snake. Then, the snake will lose.
\end{lemma}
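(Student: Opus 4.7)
The plan is to have the apple placer place the next apple on one specific vertex of $\overline{C}=\{u_1,u_2,u_3\}$ and then show that every way the snake can eat it leads to a losing configuration at length $|V|-2$, using Lemma~\ref{lemma: girth two left} in all but one residual sub-case, which is settled by a direct short-cycle argument at the snake's head.

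Since $g(G)>6$ in particular rules out triangles, at least one of the three pairs in $\overline{C}$ is non-adjacent; re-label so that $u_2u_3\notin E$ and let the apple placer put the apple on $u_1$. The snake still has length $|V|-3$ and is contained in the cycle $C$, so Corollary~\ref{cor: girth visit all} (with $k=3$) governs its pre-eating behaviour: any excursion off $C$ that returns to $C$ before the apple is eaten would have to visit all three vertices of $\overline{C}$, and any walk of length two entirely inside $\overline{C}\setminus\{u_1\}$ would need the missing edge $u_2u_3$. Consequently the snake can reach $u_1$ only by (i) a direct move $s_a\to u_1$ for some $s_a\in C$, or (ii) a single detour $s_a\to v_1\to u_1$ with $v_1\in\{u_2,u_3\}$ and $v_1u_1\in E$; here $s_a$ denotes the head at the moment the snake leaves $C$, after an arbitrary rotation of the snake along $C$.

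In case (i), the eating move leaves the snake at length $|V|-2$ with unoccupied set $\{u_2,u_3\}$, which is non-adjacent, and Lemma~\ref{lemma: girth two left} forces a loss. In case (ii), the eating move puts the snake in position $(u_1,v_1,s_a,s_{a+1},\ldots,s_{a-2})$ (indices on $C$ taken cyclically), with tail $s_{a-2}$ and unoccupied set $\{v',s_{a-1}\}$, where $v'$ is the remaining vertex of $\overline{C}$ and $s_{a-1}$ is the $C$-vertex vacated by the tail during the detour step. If $v's_{a-1}\notin E$, Lemma~\ref{lemma: girth two left} once more forces a loss.

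The main obstacle is the residual sub-case $v's_{a-1}\in E$, where Lemma~\ref{lemma: girth two left} does not apply directly. The plan is to show that the head $u_1$ has no legal move, so the snake loses immediately. The only candidate destinations are $v'$, $s_{a-1}$, and the tail $s_{a-2}$; any of the edges $u_1s_{a-1}$, $u_1s_{a-2}$, or $u_1v'$, combined with the already-present edges $u_1v_1$, $v_1s_a$, the $C$-edges $s_as_{a-1}$ and $s_{a-1}s_{a-2}$, and the hypothesised edge $v's_{a-1}$, would close a cycle of length $4$ (through $s_{a-1}$) or $5$ (through $s_{a-2}$ or $v'$), each contradicting $g(G)>6$. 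Hence $u_1$ has no neighbour in the allowed set, the snake is stuck independently of the apple placer's subsequent choices, and the lemma follows.
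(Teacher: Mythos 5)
Your proof is correct, and its skeleton matches the paper's: place the apple on a carefully chosen vertex of $\overline{C}$, use Corollary~\ref{cor: girth visit all} to pin down how the snake can reach the apple, and finish with Lemma~\ref{appendix: lemma: girth two left}. The differences are local but genuine. The paper places the apple on the (at most one) off-cycle vertex that is adjacent to both others, argues that this vertex must be the \emph{second} off-cycle vertex visited, and rules out adjacency of the two leftover unoccupied vertices by noting that such an edge would close a Hamiltonian cycle. You instead place the apple on a vertex $u_1$ whose complementary pair $u_2u_3$ is a non-edge, split explicitly into the ``eat directly'' and ``one-detour'' cases, and in the residual sub-case where the leftover pair $\{v',s_{a-1}\}$ \emph{is} adjacent you show the head at $u_1$ is immediately stuck, since each candidate edge $u_1s_{a-1}$, $u_1s_{a-2}$, $u_1v'$ would close a cycle of length $4$ or $5$. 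This buys you two things: outside of Lemma~\ref{appendix: lemma: girth two left} you never invoke non-Hamiltonicity, only the girth bound; and the line in which the snake eats the apple on its very first off-cycle move is handled explicitly, whereas the paper excludes it only implicitly through its ``the apple must be visited second'' argument. Conversely, the paper's Hamiltonian-cycle observation disposes of your residual sub-case in one line. Both arguments are sound, so this is a matter of taste rather than of a gap on either side.
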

\begin{proof}
    Since~$C$ has the same length as the snake, the vertices of~$C$ are exactly those that are occupied by the snake. This also means all three unoccupied vertices are in~$\overline{C}$. Note that the three unoccupied vertices cannot form a cycle by themselves, since~$G$ does not contain any cycles of length three. Hence, there is at most one unoccupied vertex that is adjacent to both of the other unoccupied vertices. If such a vertex exists, the apple is placed on it. We will refer to the vertex with the apple as vertex~$a$. 
    
    The snake cannot eat the apple by only moving along~$C$. Hence, at some point, it must move to an unoccupied vertex. By Corollary~\ref{cor: girth visit all}, we know that once the snake leaves~$C$, it cannot return to~$C$ without visiting all unoccupied vertices first. Furthermore, by Observation~\ref{obs: visit to win}, the snake cannot win without visiting all three unoccupied vertices. But since~$a$ is the only vertex that can be adjacent to both other unoccupied vertices, it has to be the second unoccupied vertex the snake visits.

    The scenario when the snake leaves~$C$ is depicted in Figure~\ref{appendix: fig: girth: head to tail}. Let~$u$ be the unoccupied vertex the snake moves to before moving to~$a$, and let~$v$ be the remaining unoccupied vertex. Let~$s_{\ell}$ be the tail position right before the head moves to~$u$. When the head moves to~$u$,~$s_{\ell}$ becomes unoccupied. When the snake eats the apple on~$a$,~$s_{\ell}$ remains unoccupied, and~$v$ and~$s_\ell$ are the two remaining unoccupied vertices. By Lemma~\ref{cor: girth visit all}, the head has to move to~$v$ next, and thus~$v$ must be adjacent to~$a$. But then,~$v$ cannot be adjacent to~$s_\ell$, since this would create the cycle~$(S,u,a,v)$ contradicting that~$G$ is non-Hamiltonian. Hence,~$v$ and~$s_\ell$, the two remaining unoccupied vertices, are not adjacent, and by Lemma~\ref{appendix: lemma: girth two left} the snake will lose. \qed
\end{proof}

  \begin{figure}[t]
\includegraphics[width = 0.7\textwidth]{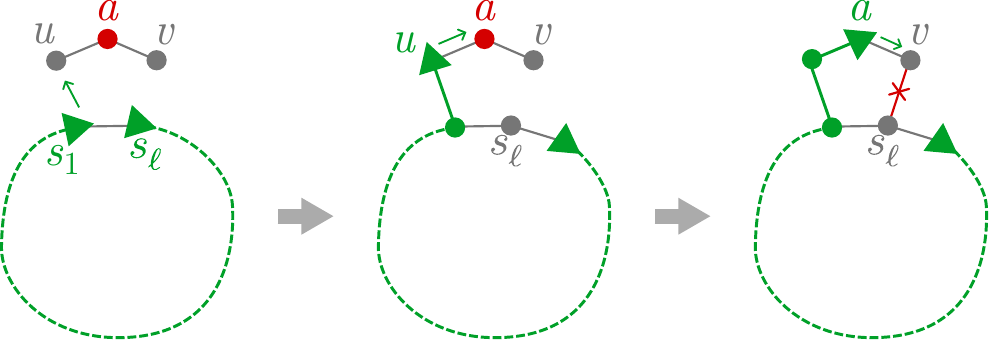}
\centering
\caption{From a cycle of length~$|V|-3$, the snake moves to~$u$ before eating the apple on~$a$. If the snake can move to~$v$, the remaining two unoccupied vertices cannot be adjacent.}
\label{appendix: fig: girth: head to tail}
\end{figure}

\begin{figure}[t]
\includegraphics[width = 6.6cm]{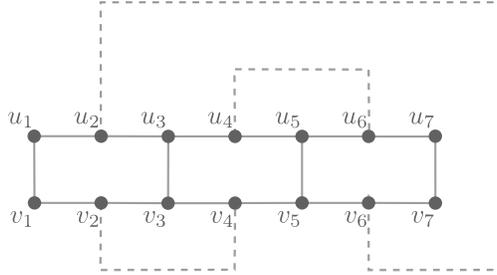}
\centering
\caption{A non-Hamiltonian graph with a girth of 6 that is snake-winnable.}
\label{appendix: fig: girth: partial grid win}
\end{figure} 

\setcounter{obsv}{11}
\begin{obsv}
    The graph in Figure~\ref{appendix: fig: girth: partial grid win} is non-Hamiltonian.
\end{obsv}
\begin{proof}
For any vertex with degree 2, both incident edges must be included in any Hamiltonian cycle. It follows that if the graph in Figure~\ref{appendix: fig: girth: partial grid win} would have a Hamiltonian cycle, then the dashed paths, as well as~$(u_2, u_1, v_1, v_2)$ and~$(u_6, u_7, v_7, v_6)$ must be part of it. Combined, this gives us a path from~$u_4$ to~$v_4$ that contains all vertices except for~$u_3$,~$v_3$,~$u_5$, and~$v_5$. To form a Hamiltonian cycle, we must find a~$u_4 v_4$-path with exactly these remaining four as internal vertices.
But this is not possible, and thus the graph in Figure~\ref{appendix: fig: girth: partial grid win} is not Hamiltonian. \qed
\end{proof}

\begin{obsv}\label{obs: grid two left adjacent}
    Suppose the snake has length~$|V|-2$ and is positioned on some cycle~$C$. Furthermore, suppose the snake covers all vertices on~$C$, and the two remaining unoccupied vertices are adjacent to each other. If both unoccupied vertices are adjacent to some vertex in~$C$, then the snake will win.
\end{obsv}
\begin{proof}
    Let~$u$ and~$v$ be the two unoccupied vertices and suppose they are adjacent to the vertices~$u'$ and~$v'$ on~$C$, respectively. If the next apple is on~$u$, then the snake moves along~$C$ until it reaches~$u'$. From there, it moves to~$u$. It can then move to~$v$ to eat the final apple. Similarly, if the next apple is on~$v$ then the snake moves along~$C$ until it reaches~$v'$.  From there, it moves to~$v$, and then~$u$ to eat the final apple. \qed
\end{proof}

\begin{lemma}\label{appendix: lemma: girth 6 winnable}
    The graph in Figure \ref{appendix: fig: girth: partial grid win} is snake-winnable.
\end{lemma}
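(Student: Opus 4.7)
The plan is to exhibit an explicit winning strategy for the snake on the given graph. The first step is to exploit the evident symmetries of the graph: a ``horizontal'' reflection swapping the $u$-row with the $v$-row, and a ``vertical'' reflection swapping index $i$ with $8-i$. Together these reduce the first-apple possibilities to only a small number of equivalence classes, and every subsequent configuration inherits whatever symmetries survive the moves made so far, so I can repeatedly quotient the case analysis.

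For each equivalence class of starting apple locations I would describe the snake's movement move by move. The high-level idea is to steer the snake so that, once its length approaches $|V|$, it lies on a cycle (not a Hamiltonian cycle, which does not exist) whose length is $|V|-2$, and whose two complementary vertices are adjacent to one another and each adjacent to some vertex of that cycle. Once this configuration is reached, Observation~\ref{obs: grid two left adjacent} immediately finishes the game, regardless of which of the last two vertices the apple placer chooses as the penultimate apple. The two length-$4$ end-caps $u_2 u_1 v_1 v_2$ and $u_6 u_7 v_7 v_6$ are especially convenient for this, since they are forced substructures of every long path in the graph and leave the snake with flexibility in the middle.

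To handle the early and middle phases, the snake's strategy is reactive: after each apple placement, the snake takes the unique shortest detour that (i) eats the apple, and (ii) preserves the invariant that after eating, the remaining unoccupied vertices still admit a Hamiltonian extension of the snake's path ending at a prescribed ``anchor'' vertex inside one of the end-caps. Maintaining such an invariant is possible because the middle hexagonal region of the graph has enough local flexibility (every interior vertex has at least two unoccupied neighbors until the snake is very long) that the snake can route around the apple without becoming trapped, while the end-caps provide the anchor needed for the final Observation~\ref{obs: grid two left adjacent} setup.

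The main obstacle is the sheer combinatorial branching: the apple placer chooses adversarially at every step, so a clean proof needs either a complete enumeration of apple sequences (feasible here only because of the symmetry reduction) or a small set of invariants that simultaneously cover all branches. I expect the trickiest branches to be those where an early apple is placed deep in the middle hexagonal region, forcing the snake to commit to one side of the graph before it is long enough to cycle back; there one must check carefully that the snake can still reach the anchored end-cap configuration without stranding any vertex. Once this check is done for each symmetry class, the conclusion follows by invoking Observation~\ref{obs: grid two left adjacent} to finish the last two apples.
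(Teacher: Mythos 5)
Your endgame target is the right one---you aim to reach a cycle of length $|V|-2$ whose two complementary vertices are adjacent to each other and to the cycle, and then close with Observation~\ref{obs: grid two left adjacent}, exactly as the paper does. But the body of your argument is a plan rather than a proof, and the gap is precisely where the difficulty lies: you never specify a concrete strategy for the early and middle game. Your proposed ``reactive'' rule (take the shortest detour that eats the apple while preserving the invariant that the unoccupied vertices admit a Hamiltonian extension of the snake's path to an anchor in an end-cap) is not shown to be maintainable; the claim that ``every interior vertex has at least two unoccupied neighbors until the snake is very long'' is asserted, not verified, and it is exactly the kind of claim that fails for a snake on a sparse graph once its body starts separating regions. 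Likewise, you flag the hard branches (an early apple deep in the middle) but do not resolve them, so the proof as written does not go through.

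The paper avoids all of this branching with one structural observation you are missing: the graph contains two cycles $C_1$ and $C_2$, each of length $|V|-2$, which differ only in the pair $\{u_3,v_3\}$ versus $\{u_5,v_5\}$ and share the vertex $u_4$ where the snake can choose which cycle to continue on. The snake simply patrols its current cycle and, whenever the apple lands on the two vertices outside it, switches to the other cycle at $u_4$ (possible while its length is at most $|V|-4$). This reduces the entire game up to length $|V|-3$ to a two-line argument and leaves only three explicit endgame cases (apple on $C_1$, on $v_5$, or on $u_5$), each of which hands off to Observation~\ref{obs: grid two left adjacent}. To repair your proof you would either need to supply and verify this two-cycle patrol (or an equivalent concrete strategy), or actually carry out the invariant-maintenance check you defer; as it stands, the middle game is unproven.
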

\begin{proof}
    In Figure~\ref{appendix: fig: girth: partial grid win cycles} we can see the cycles~$C_1$ and~$C_2$ of~$G$, both have length~$|V|-2$. The snake moves along~$C_1$ or~$C_2$ and changes to the other cycle if the apple is placed outside of its current cycle. Note that while the snake has length at most~$|V|-4$, it can choose which cycle to move along each time it reaches~$u_4$. By employing this strategy, the snake will be positioned on one of the two cycles when it reaches length~$|V|-3$. By symmetry, we may assume this is~$C_1$.

    If the next apple is on~$C_1$, then the snake can eat the apple by moving along~$C_1$. After doing so, it covers the entire cycle, and the two remaining unoccupied vertices are~$u_5$ and~$v_5$. By Observation~\ref{obs: grid two left adjacent}, the snake will win.

    If the next apple is not on~$C_1$, then it is either on~$u_5$ or~$v_5$. If it is on~$v_5$, then the snake moves along~$C_1$ until it reaches~$u_4$. It then first moves to~$u_5$, and then to~$v_5$ to eat the apple. By doing so, the snake will be positioned on~$C_2$, with~$u_3$ and~$v_3$ the remaining unoccupied vertices. By Observation~\ref{obs: grid two left adjacent}, the snake will win.

    If the next apple is on~$u_5$, then the snake moves along~$C_1$ until it reaches~$v_6$. It then first moves to~$v_5$, and then to~$u_5$ to eat the apple. The snake will now be positioned on a new cycle, depicted in Figure~\ref{appendix: fig: girth: even grid new cycle}. The two remaining vertices are~$u_7$ and~$v_7$. By Observation~\ref{obs: grid two left adjacent}, the snake will win. \qed
\end{proof}

\begin{figure}[t]
\includegraphics[width = 0.95\textwidth]{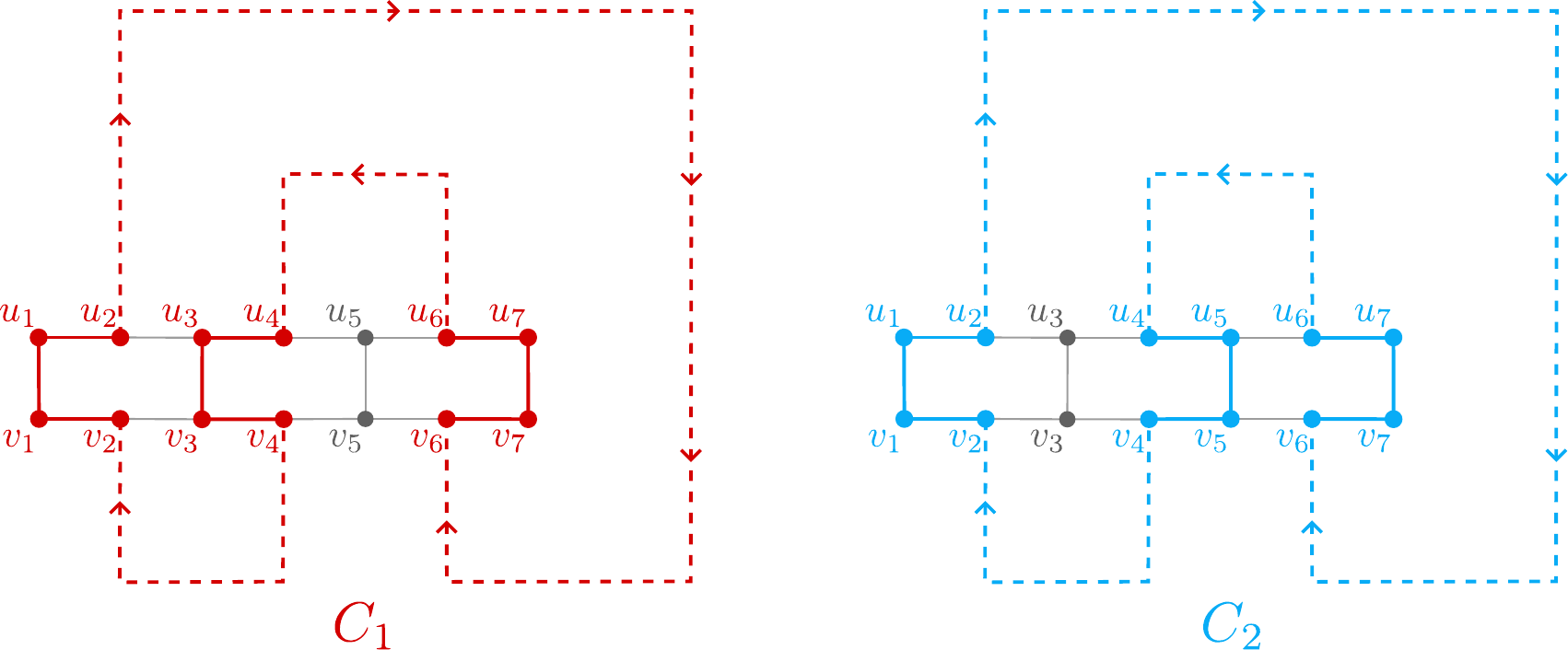}
\centering
\caption{The cycles~$C_1$ and~$C_2$.}
\label{appendix: fig: girth: partial grid win cycles}
\end{figure} 

\begin{figure}[t]
\includegraphics[width = 5.6cm]{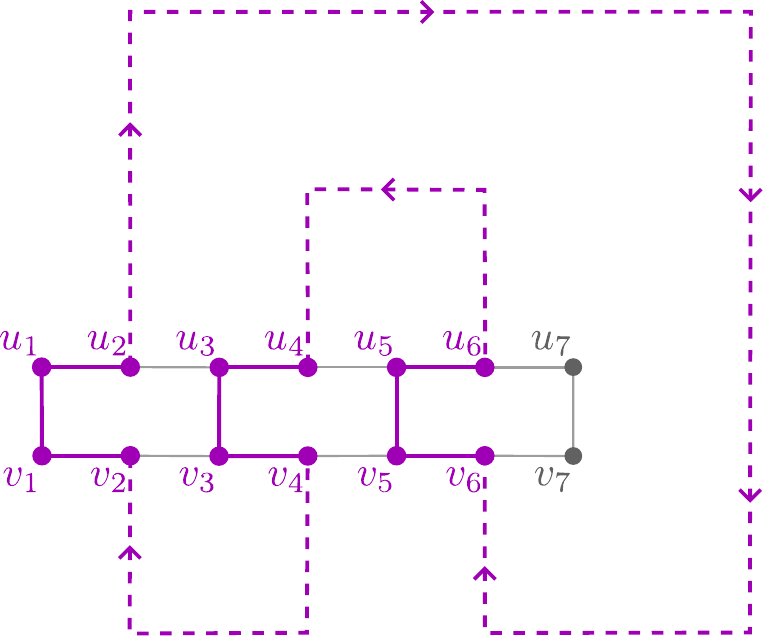}
\centering
\caption{If the apple appears on~$u_5$, the snake moves to a new cycle.}
\label{appendix: fig: girth: even grid new cycle}
\end{figure}

\section{Appendix: Snake-winnable graphs with vertex-connectivity 1}\label{appendix: connectivity}
\setcounter{obsv}{1}
\begin{obsv}
    Let~$v$ be a cut vertex of~$G$. If~$G-v$ has at least 3 components, then~$G$ is not snake-winnable.
\end{obsv}
\begin{proof}
    If~$G-v$ has at least 3 components, then~$G$ cannot have a Hamiltonian path. By Observation \ref{obs: no Ham path},~$G$ is not snake-winnable.
\end{proof}

\setcounter{definition}{8}
\begin{definition}
    Let~$S^t=(s_1^t,\ldots, s_\ell^t)$ be the position of the snake on ~$G$ at time~$t$. The \textbf{head graph} at time~$t$, denoted~$H^t$, is the subgraph of~$G$ induced by~$\overline{S^t}\cup\{s_1^t\}$.
\end{definition}
An example of the head graph can be seen in Figure~\ref{fig: structure: head graph}.

\begin{figure}[t]
\includegraphics[width = 0.33\textwidth]{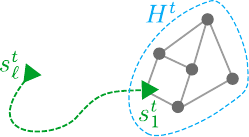}
\centering
\caption{The head graph~$H^t$ is the subgraph induced by~$\overline{S^t}\cup{s_1^t}$.}
\label{fig: structure: head graph}
\end{figure} 

\begin{obsv}\label{appendix: obs: complete is win}
    Suppose at some time~$t$,~$H^t$ is complete. Then, the snake will win.
\end{obsv}
\begin{proof}
    Let~$s_1^t$ be the head of the snake. Since~$H^t$ is complete, the apple~$a$ must be adjacent to~$s_1^t$. The snake can immediately move to~$a$, and for the next head graph, we get~$H^{t+1}=H^t-s_1^t$. Hence, the head graph~$H^{t+1}$ remains complete. The snake can continue to repeat this strategy until it reaches length~$|V|$.
\end{proof}

\begin{figure}[t]
\includegraphics[width = 1\textwidth]{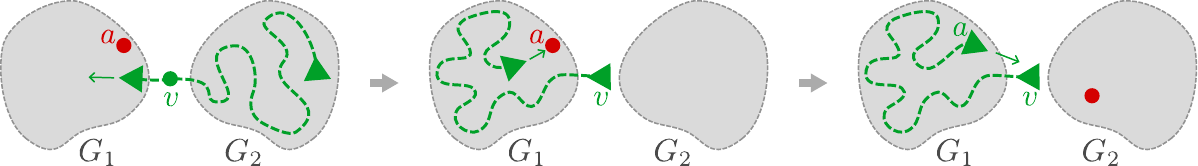}
\centering
\caption{At length~$m$, the snake moves into~$G_1$ until its tail is on~$v$. It then eats the apple on~$a$, after which it occupies every vertex in~$G_1$ and can move to~$v$.}
\label{fig: connectivity: 1c strategy}
\end{figure} 

\setcounter{lemma}{2}
\begin{lemma}
Let~$G=(V, E)$ be a graph with~$\kappa(G)=1$ and let~$v$ be a cut vertex of~$G$. Suppose that~$G_1$ and~$G_2$ are the only two connected components of~$G-v$ with~$|V(G_1)|=|V(G_2)|=m$ and~$m\geq2$. Furthermore, let~$G_1+v$ and~$G_2+v$ both be complete. Then~$G$ is snake-winnable.
\end{lemma}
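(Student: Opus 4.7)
The plan is to exhibit an explicit winning strategy for the snake in two phases. Phase~1 steers the snake from length~$1$ to length~$m+1$ into the canonical configuration $(s_1, \ldots, s_m, v)$ where $\{s_1, \ldots, s_m\} = V(G_i)$ for some $i \in \{1, 2\}$ and $v$ is the tail; Phase~2 then uses the completeness of $G_j + v$ (with $j \neq i$) to eat the remaining $m$ apples.

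Phase~2 is short. By symmetry take $i = 1$. Since $m + 1 \geq 3$, the head may move onto the tail~$v$, producing the position $(v, s_1, \ldots, s_m)$. The head is now on~$v$, the unoccupied set is exactly $V(G_2)$, and the head graph is $G_2 + v$, which is complete by hypothesis. By the observation that a complete head graph guarantees a win (Observation~\ref{appendix: obs: complete is win}), the snake eats the remaining $m$ apples directly and reaches length $|V| = 2m+1$.

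Phase~1 is the main work. The strategy exploits that a snake contained in the complete subgraph $G_i + v$ has full routing freedom inside it: the head can step to any adjacent unoccupied vertex, and (once the length is at least~$3$) head-to-tail rotations cycle the snake through every cyclic shift of its current path. Using this, the snake maintains between apple-eats the invariant that it lies entirely inside one of $G_i + v$ with~$v$ at an endpoint. If the next apple is in the current side, the snake walks to it directly and then uses extra rotations to put~$v$ back at the tail. If the next apple is in the opposite side, the snake first brings~$v$ to the head by a single head-to-tail move, steps across into the opposite side to eat the apple, and then rotates inside the new $G_j + v$ to re-park~$v$ at the tail. The main obstacle is a careful case analysis verifying that these rearranging moves can always be executed without violating the no-repeated-position rule and that the receiving side has enough room to rotate; the hypothesis $m \geq 2$ is precisely what makes this possible. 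After~$m$ apples the snake is at length~$m+1$ in the canonical configuration required for Phase~2.
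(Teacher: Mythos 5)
Your two-phase architecture is the same as the paper's (reach length $m+1$ occupying all of some $G_i+v$ with the tail on $v$, rotate the head onto $v$, and finish because the head graph $G_j+v$ is complete), and Phase~2 is correct. The genuine gap is the step of Phase~1 that you defer to ``a careful case analysis'': the transition at length $m$ when the apple lies in the opposite component. Suppose the snake has length $m$ and lies in $G_1+v$ with $v$ at the tail, so exactly one vertex $u_1\in V(G_1)$ is unoccupied, and the apple is on $a\in V(G_2)$. Your strategy rotates $v$ to the head, steps to $a$, and eats. The new position is $(a,v,s_1,\ldots,s_{m-1})$ of length $m+1$: the head is in $G_2$, the cut vertex is buried at position~2, the tail is in $G_1$, and $u_1$ is still unoccupied. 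The apple placer now puts the apple on some $b\in V(G_2)\setminus\{a\}$, which exists because $m\ge 2$. To ever occupy $u_1$ the head must re-enter $G_1$ through $v$, which requires $m-1$ non-eating moves inside $G_2$ to push the tail onto $v$; but $V(G_2)\setminus\{a,b\}$ has only $m-2$ vertices, so the snake is forced to eat $b$ along the way, grows to length $m+2$ with its tail still in $G_1$, and is then stuck with every neighbour of its head occupied and $v$ not the tail. (This is exactly the obstruction formalized in Lemmas~\ref{lemma: too long 1c} and~\ref{lemma: too long apple 1c}.) Note also that your proposed repair, ``rotates inside the new $G_j+v$,'' is not executable here: after crossing, the snake is not contained in $G_j+v$ and its head and tail lie in different components of $G-v$, so no head-to-tail move is available; the only way to free $v$ is to walk forward, which is where the loss occurs.

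The fix, and what the paper's proof actually does, is to reverse the order of operations at this critical length: when the apple is on $a$ in the opposite component, first move the head across $v$ and walk through $V(G_j)\setminus\{a\}$ (possible since $G_j+v$ is complete) until the tail sits on $v$, and only then eat $a$. Eating last rather than first ensures the growth step happens exactly when the snake already fills $\left(V(G_j)\setminus\{a\}\right)\cup\{v\}$, so afterwards it occupies all of $G_j+v$ with its tail on $v$ --- your canonical configuration --- and Phase~2 applies.
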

\begin{proof}
While the snake is shorter than~$m$, it can always move between~$G_1$ and~$G_2$ by moving into the current component until~$v$ becomes unoccupied, and then moving to the other component through~$v$. We consider the moment the snake grows to length~$m$. 

First, suppose the next apple is placed on~$v$. This means~$v$ is currently unoccupied and thus the snake must be entirely in~$G_1$ or entirely in~$G_2$. By symmetry, we may assume the snake is entirely in~$G_1$. Since the snake has length~$m$, it occupies all the vertices in~$G_1$. Furthermore, since~$G_1+v$ is complete the head must be adjacent to~$v$. The snake can eat the apple on~$v$ on the next move, after which the head graph will be~$G_2+v$, which is complete. By Observation~\ref{appendix: obs: complete is win} the snake will win.

Next, suppose after the snake grows to length~$m$, the next apple is not placed on~$v$. By symmetry, we may assume the apple is placed on some vertex~$a$ in~$G_1$, as depicted in Figure~\ref{fig: connectivity: 1c strategy}. If the head is in~$G_2$, then the snake keeps moving into~$G_2$ until~$v$ is unoccupied. It then first moves its head to~$v$, and then into~$G_1$. With its head in~$G_1$ the snake can now repeatedly move to vertices in~$G_1$ that are not~$a$, until its tail is on~$v$. Since the snake has length~$m$, it now occupies all vertices in~$G_1$, except for~$a$. The snake then eats the apple on~$a$ and since~$G_1+v$ is complete, its head on~$a$ will be adjacent to its tail on~$v$. The head moves to~$v$, after which the head graph is~$G_2+v$, which is complete. By Observation~\ref{appendix: obs: complete is win}, the snake will win.
\end{proof}

\begin{lemma}
    Let~$G=(V, E)$ be a graph with~$\kappa(G)=1$ and let~$v$ be a cut vertex of~$G$. Let~$G_1$ and~$G_2$ be two different connected components of~$G-v$, with~$|V(G_1)|=m$. Suppose the head of the snake is in~$G_1$ and the snake has a length of at least~$m+2$. If there is an unoccupied vertex in~$G_2$, then the snake will lose.
\end{lemma}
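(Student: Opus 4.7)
The apple placer's strategy is to place the next apple on some unoccupied vertex $a \in V(G_2)$, which exists by hypothesis. The plan is to show that the head of the snake can never reach $a$, so by Observation~\ref{obs: visit to win} the snake cannot win and must therefore lose.

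The key observation is that $v$ is currently occupied by the snake and will remain occupied as long as the head stays in $G_1$. For the first part, since the snake has length $\ell \geq m+2 > m = |V(G_1)|$ and its head lies in $G_1$, not all snake vertices can fit inside $G_1$; as $v$ is the only vertex of $V \setminus V(G_1)$ adjacent to $G_1$, the simple snake path starting from the head must contain $v$, so $v$ is occupied. For the second part, observe that a vertex of the snake can only become unoccupied by first being the tail and then being dropped during a normal move. It thus suffices to rule out $v = s_\ell$ while the head is in $G_1$. But if $v$ were the tail and the head were in $G_1$, then the entire snake path from head to tail would lie inside $G_1 \cup \{v\}$, forcing the length to be at most $m+1$. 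Since the length never decreases and is at least $m+2$, this is a contradiction.

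Since $v$ therefore remains occupied and is never the tail while the head stays in $G_1$, the head cannot move to $v$. But $v$ is the only neighbour any vertex of $G_1$ has outside $G_1$, so the head is permanently confined to $G_1$ and never reaches $a \in V(G_2)$. Consequently, the snake never eats the apple, its length remains fixed at $\ell$, and since there are only finitely many snake positions of length $\ell$ and repeating a position loses by the rules, after finitely many moves the snake either has no legal move or is forced to repeat an earlier position. Either way the snake loses. The one subtlety to handle carefully is the bookkeeping for ``$v$ becomes unoccupied''--that only a tail-dropping move can vacate $v$--but this follows directly from the definition of normal snake movement and the fact that eating an apple only grows the snake.
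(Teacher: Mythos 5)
Your proof is correct and takes essentially the same route as the paper's: both rest on the counting argument that a snake of length at least~$m+2$ cannot fit inside~$G_1$ together with~$v$, so the head can never pass through the cut vertex to reach the unoccupied vertex in~$G_2$, and the loss then follows from Observation~\ref{obs: visit to win}. Your invariant formulation (that~$v$ stays occupied and is never the tail while the head is in~$G_1$) is just a slightly more explicit rendering of the same idea.
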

\begin{proof}
    By Observation~\ref{obs: visit to win}, the snake will have to move to the unoccupied vertex in~$G_2$ at some point in the future. Hence, at some point, the head will have to move from~$G_1$ to~$G_2$. To do so, the head will have to move through~$v$. Thus, the snake must first keep moving into~$G_1$ until it is either entirely in~$G_1$, or only its tail outside of~$G_1$, namely on~$v$. But this is impossible since~$G_1$ only contains~$m$ vertices and the snake has a length of at least~$m+2$.
\end{proof}

\begin{lemma}
    Let~$G=(V, E)$ be a graph with~$\kappa(G)=1$ and let~$v$ be a cut vertex of~$G$. Let~$G_1$ and~$G_2$ be two different connected components of~$G-v$, with~$|V(G_1)|=m$. Suppose the head of the snake is in~$G_1$, the snake has a length of at least~$m+1$, and the apple is on some vertex~$a$ in~$G_1$. If there is some unoccupied vertex in~$G_2$, then the snake will lose.
\end{lemma}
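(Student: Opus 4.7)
The plan is to reduce this lemma to Lemma~\ref{lemma: too long 1c}. As long as the apple on~$a$ has not been eaten, the snake's length stays at~$\ell \geq m+1$; once eaten, it jumps to~$\ell+1 \geq m+2$, which is exactly the hypothesis of the previous lemma. So if I can argue that at the moment just after the snake eats~$a$ its head lies in~$G_1$ and some vertex of~$G_2$ is still unoccupied, Lemma~\ref{lemma: too long 1c} will finish the job. The routine sub-case is when the snake never eats~$a$: since its length and hence the number of reachable configurations is finite, it must either repeat a position or run out of legal moves, and either way it loses.

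The key step is to show that from now until immediately after the snake eats~$a$, the head of the snake never visits~$v$. I would argue this by contradiction. Let~$t'$ be the earliest time at which the head lies on~$v$; since the head starts in~$G_1$ and every path from~$G_1$ to~$G_2$ uses the cut vertex~$v$, the head at time~$t'-1$ must have been at some neighbor~$w \in G_1$ of~$v$. Writing the snake at time~$t'$ as~$(v, w, s'_3, \ldots, s'_\ell)$, the segment~$(w, s'_3, \ldots, s'_\ell)$ is a simple path that avoids~$v$, so by the cut-vertex property (and~$w \in G_1$) it lies entirely inside~$G_1$. Hence the snake occupies all~$\ell$ of its vertices inside~$G_1 \cup \{v\}$; but~$\ell \geq m+1 = |G_1 \cup \{v\}|$ and~$a \in G_1$ is unoccupied by Observation~\ref{obs: apple on snake}, a contradiction.

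Once the head is confined to~$G_1$ throughout this interval, the set of unoccupied vertices in~$G_2$ can only grow: only tail-removal can change the status of a~$G_2$ vertex during the non-eating moves, and the eating move at~$a \in G_1$ does not shrink the tail either. Consequently, the given unoccupied vertex in~$G_2$ remains unoccupied at the moment~$a$ is eaten, at which point the snake has head~$a \in G_1$, length at least~$m+2$, and an unoccupied vertex in~$G_2$; Lemma~\ref{lemma: too long 1c} then delivers the loss. I expect the main obstacle to be the cut-vertex contradiction showing the head cannot reach~$v$ before eating~$a$, which relies sharply on the tight count~$\ell \geq m+1 = |G_1 \cup \{v\}|$ together with the fact that~$a$ stays unoccupied until it is eaten.
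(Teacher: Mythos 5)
Your proof is correct and follows essentially the same route as the paper: both arguments reduce to Lemma~\ref{lemma: too long 1c} by showing that the head cannot cross to $G_2$ before eating the apple, since reaching $v$ would force the snake to occupy all $m+1$ vertices of $G_1+v$, including the still-unoccupied $a$. Your version just spells out more explicitly the cut-vertex contradiction and the fact that the unoccupied vertex in $G_2$ persists, which the paper leaves implicit.
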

\begin{proof}
    Similar to Lemma~\ref{lemma: too long 1c}, we will show that the head cannot return to~$G_2$. To move from~$G_1$ to~$G_2$, the snake has to move into~$G_1$ until it is either entirely in~$G_1$, or only its tail outside of~$G_1$. Hence, it must occupy at least~$m$ vertices in~$G_1$. But then the head also has to visit~$a$, at which point it will grow to length~$m+2$. By Lemma~\ref{lemma: too long 1c}, the snake will lose.
\end{proof}

It remains to show that if at least one of the two components is incomplete, then the snake will lose. For this, we first distinguish three different types of moves the snake can make, which are depicted in Figure~\ref{fig: structure: move types}. 

In the first type of move, the head moves to the apple and the tail remains in place. This means one vertex is added to the snake and, consequentially, one vertex is removed from the unoccupied set. We will refer to this as a \textit{type~$\alpha$} move. 

In the second type of move, the snake moves to an unoccupied without the apple. In this case, the new head position is added to the snake, and the former tail is removed. This also means one vertex is removed from the unoccupied set and a different vertex is added. We will refer to this as a \textit{type~$\beta$} move. 

In the third move type, the head moves to the tail vertex and the unoccupied set remains unchanged. We will refer to this as a \textit{type~$\gamma$} move.

\begin{figure}[t]
     \centering
     \subfigure[A type~$\alpha$ move.]{\includegraphics[width = 0.44\textwidth]{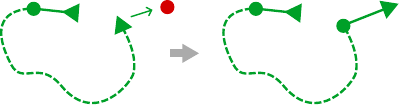}\label{fig: structure: type alpha move}}\hspace{10mm}
     \subfigure[A type~$\beta$ move.]
     {\includegraphics[width = 0.44\textwidth]{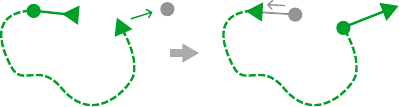}\label{fig: structure: type beta move}}\\
     \vspace{5mm}
     \subfigure[A type~$\gamma$ move.]
     {\includegraphics[width = 0.44\textwidth]{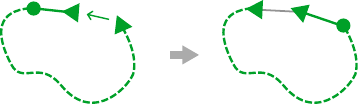}\label{fig: structure: type gamma move}}
        \caption{The three different types of moves.}
        \label{fig: structure: move types}
\end{figure}

\setcounter{obsv}{10}
\begin{obsv}\label{obs: after beta}
    After the snake makes a type~$\beta$ move, there is an unoccupied vertex that is adjacent to the tail. Furthermore, the apple cannot be on this unoccupied vertex.
\end{obsv}
\begin{proof}
    After a type~$\beta$ move, the former tail vertex becomes unoccupied. By Observation~ref{obs: apple on snake}, the apple cannot be on this vertex.
\end{proof}

\begin{lemma} \label{lemma: not comp after beta}
    Let~$G$ be non-Hamiltonian. Then the head graph can never be complete after a type~$\beta$ move.
\end{lemma}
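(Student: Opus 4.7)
The plan is a direct proof by contradiction: assume the head graph $H^{t+1}$ is complete right after a type~$\beta$ move, and then exhibit a Hamiltonian cycle in $G$, contradicting the hypothesis that $G$ is non-Hamiltonian.

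I would set up notation as follows. Let $S^t=(s_1^t,\ldots,s_\ell^t)$ be the snake before the move, and suppose the type~$\beta$ move goes to a vertex $v\in\overline{S^t}\setminus\{s_\ell^t\}$ that does not hold the apple. Then $S^{t+1}=(v,s_1^t,\ldots,s_{\ell-1}^t)$, the old tail $s_\ell^t$ becomes unoccupied, and $\overline{S^{t+1}}=(\overline{S^t}\setminus\{v\})\cup\{s_\ell^t\}$. Since a type~$\beta$ move requires at least two unoccupied vertices before the move (the destination and the apple), we have $\ell\leq |V|-2$, and hence $|\overline{S^{t+1}}|\geq 2$. From the assumption that $H^{t+1}$ is complete it follows that $\overline{S^{t+1}}$ induces a clique and that $v$ is adjacent to every vertex of $\overline{S^{t+1}}$; in particular $vs_\ell^t\in E$.

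The Hamiltonian cycle I construct stitches together four pieces: the current snake path, the edge from the snake's new tail $s_{\ell-1}^t$ to the former tail $s_\ell^t$, a Hamiltonian path through the remaining unoccupied vertices, and the edge returning to the new head $v$. Writing $\overline{S^{t+1}}\setminus\{s_\ell^t\}=\{u_1,\ldots,u_k\}$ with $k\geq 1$, the cycle is
\[
v\to s_1^t\to s_2^t\to\cdots\to s_{\ell-1}^t\to s_\ell^t\to u_1\to u_2\to\cdots\to u_k\to v.
\]
All required edges exist: $vs_1^t$ is exactly the edge the head just traversed; each $s_i^ts_{i+1}^t$ was an edge of the original snake path $S^t$; each edge $s_\ell^tu_1$ and $u_iu_{i+1}$ uses the clique on $\overline{S^{t+1}}$; and the closing edge $u_kv$ follows from $v$ being adjacent to every vertex of $\overline{S^{t+1}}$. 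The listed vertices are pairwise distinct because the $s_i^t$ are distinct, $v\notin S^t$, and $s_\ell^t,u_1,\ldots,u_k$ are exactly the elements of $\overline{S^{t+1}}$. Hence this is a Hamiltonian cycle in $G$, contradicting the assumption that $G$ is non-Hamiltonian.

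I would also note the degenerate case $\ell=1$ separately: here $s_1^t=s_\ell^t$, the snake $S^{t+1}=(v)$ has length one, and the head graph $H^{t+1}$ has vertex set $V$, so $H^{t+1}=G$; completeness of $H^{t+1}$ then makes $G$ itself complete on at least three vertices, hence Hamiltonian. I do not foresee a real obstacle in this proof: the only subtlety is the bookkeeping that guarantees $k\geq 1$ (equivalently, $|\overline{S^{t+1}}|\geq 2$), so that a genuine cycle, rather than a degenerate closed walk, is produced.
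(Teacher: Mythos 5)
Your proof is correct and follows essentially the same route as the paper's: both arguments close a Hamiltonian cycle by running along the snake's body to the former tail (which becomes unoccupied after the type~$\beta$ move) and then through the unoccupied vertices, whose completeness in the head graph supplies all the needed edges back to the head. You simply spell out the bookkeeping (distinctness, the bound $|\overline{S^{t+1}}|\geq 2$, the $\ell=1$ case) that the paper leaves implicit.
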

\begin{proof}
    Suppose the head graph would be complete after a type~$\beta$ move. Then by Observation~ref{obs: after beta}, there would be a path from the head to the tail through all of the unoccupied vertices. This contradicts that~$G$ is non-Hamiltonian.
\end{proof}

\begin{corollary}\label{cor: not comp after beta alpha}
    Let~$G$ be non-Hamiltonian and suppose the snake makes a type~$\beta$ move immediately followed by a type~$\alpha$ move. Then after the type~$\alpha$ move, the head graph cannot be complete.    
\end{corollary}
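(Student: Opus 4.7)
The plan is to mimic the argument of Lemma~\ref{lemma: not comp after beta}, but carry its key structural ingredient (an unoccupied vertex adjacent to the tail) across the subsequent type~$\alpha$ move. Let $S^t$ be the position immediately before the type~$\beta$ move, $S^{t+1}$ the position after it, and $S^{t+2}$ the position after the type~$\alpha$ move. By Observation~\ref{obs: after beta}, right after the $\beta$ move there is an unoccupied vertex $w$ adjacent to the tail of $S^{t+1}$, and $w$ is not the apple location $a$. (Concretely, $w$ is the old tail $s_\ell^t$ and the new tail is $s_{\ell-1}^t$.)

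Next I would check that the pertinent structure survives the $\alpha$ move. In a type~$\alpha$ move the head advances to $a$ and the tail stays fixed, so the tail of $S^{t+2}$ equals the tail of $S^{t+1}$. Since $w \neq a$, the vertex $w$ is not absorbed into the snake by this move and therefore remains unoccupied in $S^{t+2}$; furthermore, $w$ is still adjacent to the (unchanged) tail. Thus in $H^{t+2}$ the vertex $w$ lies in the unoccupied set and is adjacent, in $G$, to the tail of $S^{t+2}$.

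Now I would derive the contradiction exactly as in Lemma~\ref{lemma: not comp after beta}. Suppose $H^{t+2}$ were complete. The head of $S^{t+2}$ is $a$ and the head graph contains both $a$ and $w$, so being complete it admits a Hamiltonian path from $a$ to $w$ which visits all vertices of $\overline{S^{t+2}}\cup\{a\}$ exactly once. Appending the edge $w$--$\text{tail}$ and then the snake path from tail back to head $a$ (along $S^{t+2}$) yields a cycle in $G$ that visits every vertex exactly once, i.e.\ a Hamiltonian cycle. This contradicts the assumption that $G$ is non-Hamiltonian, so $H^{t+2}$ cannot be complete.

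The only subtle step is step~(2)--(3): verifying that Observation~\ref{obs: after beta}'s witness vertex $w$ is preserved through the $\alpha$ move, and in particular that the apple's position could not be $w$. That follows directly because the apple lies outside the snake at all times (Observation~\ref{obs: apple on snake}) together with the explicit guarantee in Observation~\ref{obs: after beta} that the apple is not on the post-$\beta$ unoccupied neighbour of the tail. After that, the Hamiltonian-cycle construction is essentially the same one used in Lemma~\ref{lemma: not comp after beta}, and no further work is required.
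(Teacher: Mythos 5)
Your proposal is correct and follows the same route as the paper: carry the witness vertex $w$ (the former tail, unoccupied, adjacent to the new tail, and not the apple) through the type~$\alpha$ move, and then repeat the Hamiltonian-cycle contradiction from Lemma~\ref{lemma: not comp after beta}. The paper's own proof is just a terser statement of exactly this argument, so no further comparison is needed.
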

\begin{proof}
    By Observation~ref{obs: after beta}, there must be an unoccupied vertex after the type~$\beta$ move. Since the apple cannot be on this vertex, it remains unoccupied after the type~$\alpha$ move. By the same reasoning as for Lemma~ref{lemma: not comp after beta}, the head graph cannot be complete after the type~$\alpha$ move.
\end{proof}

\begin{definition}
    The \textbf{circumference} of a graph~$G$ is the length of the longest simple cycle in~$G$ and is denoted as is denoted~$\text{circ}(G)$.
\end{definition}

\begin{lemma}\label{lemma: longer than circ}
    Suppose the snake is longer than~$\text{circ}(G)$ and the head graph is not complete. Then the snake will lose.
\end{lemma}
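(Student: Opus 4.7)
The plan is to equip the apple placer with a strategy that keeps the head graph non-complete throughout the remainder of the game, and then observe that winning would force the head graph to become complete at some point, a contradiction. Since $\ell \le |V|$, the hypothesis $\ell > \text{circ}(G)$ forces $\text{circ}(G) < |V|$, so $G$ is non-Hamiltonian; this is what will let me invoke Lemma~\ref{lemma: not comp after beta} and Corollary~\ref{cor: not comp after beta alpha}. Moreover, the head can never be adjacent to the tail at any point from now on, because an edge between head and tail together with the snake path would form a cycle of length $\ell > \text{circ}(G)$. So the snake can never perform a type-$\gamma$ move; since $\alpha$- and $\beta$-moves only preserve or increase the length, the constraint $\ell > \text{circ}(G)$ stays in force and this prohibition is permanent.

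The apple placer's strategy will be as follows. Whenever a new apple is placed, use the non-completeness of the current head graph to split into two cases. In case P1, the head $s_1$ has a non-neighbor $u \in \overline{S}$; place the apple on $u$, which forces the snake to begin with a $\beta$-move. In case P2, $s_1$ is adjacent to every vertex of $\overline{S}$, so the non-completeness of the head graph must be witnessed by a non-edge $uv$ lying entirely inside $\overline{S}$; place the apple anywhere in $\overline{S}$. To verify invariance I would argue: every $\beta$-move keeps the head graph incomplete by Lemma~\ref{lemma: not comp after beta}; any $\alpha$-move preceded by a $\beta$ keeps it incomplete by Corollary~\ref{cor: not comp after beta alpha}; and the only remaining case, an $\alpha$-move done immediately after placement (possible only in P2), leaves the new head graph equal to the subgraph of $G$ induced by the entire old unoccupied set $\overline{S}$, which still contains both endpoints of the non-edge $uv$ and is therefore not complete.

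From the preserved invariant I would conclude that the snake cannot win: winning requires reaching length $|V|$, which in turn requires passing through a state of length $|V|-1$ whose only unoccupied vertex is the apple adjacent to the head; at that moment the head graph is $K_2$, complete, contradicting the invariant. Since the set of snake positions is finite and repeating any previous position is a losing condition, the snake must eventually lose (either by getting stuck or by repeating a position). The main subtlety in the argument will be the immediate-$\alpha$ case in P2: I need to check that both $u$ and $v$ survive in the new head graph regardless of whether the eaten apple equals one of them, which works precisely because the new head graph is induced by the entire old $\overline{S}$, so the non-edge $uv$ is always inherited (with one of its endpoints possibly now being the new head).
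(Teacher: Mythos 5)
Your proposal is correct and follows essentially the same route as the paper: the same apple-placer strategy (place on a non-neighbor of the head whenever possible), the same exclusion of type~$\gamma$ moves via the circumference bound, and the same reliance on Lemma~\ref{lemma: not comp after beta} and Corollary~\ref{cor: not comp after beta alpha}, leaving only the ``immediate type~$\alpha$'' case. Your forward-invariant treatment of that last case (the non-edge inside $\overline{S}$ survives into the new head graph) is just the contrapositive of the paper's argument that $H$ incomplete with $H-s_1$ complete would force a non-edge at the head, so the two proofs coincide in substance.
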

\begin{proof}
    The apple placer uses the following strategy: If possible, the apple is placed on a vertex that is not adjacent to the head. We will show the head graph will remain incomplete for the rest of the game. Hence, the snake either loses before it reaches length~$|V|-1$, or after it reaches length~$|V|-1$, the final apple is not adjacent to the head, causing it to lose.

    Note that the snake can no longer make type~$\gamma$ moves, as it would create a cycle longer than~$\text{circ}(G)$. Suppose the head graph does become complete, and consider the two moves right before this happens. By Lemma~ref{lemma: not comp after beta} and Corollary~ref{cor: not comp after beta alpha}, these must have been two type~$\alpha$ moves. This means that after the the first type~$\alpha$ move, the next apple is placed adjacent to the head. since the apple placer always tries to place the apple on a vertex that is not adjacent to the head, it follows that after the first type~$\alpha$ move, all unoccupied vertices must be adjacent to the head.
    Let~$s_1$ be the head position and~$H$ the head graph after the first type~$\alpha$ move. Then~$H$ is not complete, but~$H-s_1$ must be a complete graph. It follows that there must be some unoccupied  vertex~$v$ that is not adjacent to~$s_1$. This is a contradiction since we previously found that all unoccupied vertices must be adjacent to the head. We conclude that the head graph will remain incomplete.
\end{proof}

\setcounter{lemma}{6}
\begin{lemma}\label{appendix: lemma: 1c not comp}
Let~$G=(V, E)$ be a graph with~$\kappa(G)=1$ and let~$v$ be a cut vertex of~$G$. Let~$G_1$ and~$G_2$ be two different connected components of~$G-v$ and suppose~$G_1+v$ and~$G_2+v$ are not both complete. Then~$G$ is not snake-winnable.
\end{lemma}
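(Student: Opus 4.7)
Proof plan:

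Since $v$ is a cut vertex, every cycle of $G$ is contained in $G_1+v$ or $G_2+v$, and so $\text{circ}(G)\leq m+1$, where $m=|V(G_1)|=|V(G_2)|\geq 2$ (by Observation~\ref{obs: degree_one} and Lemma~\ref{lemma: 1c diff size}). The overall plan is to apply Lemma~\ref{lemma: longer than circ} or Lemma~\ref{lemma: too long apple 1c}: once the snake reaches length $\geq m+2$ with a non-complete head graph, it loses by the former; and if it reaches length $m+1$ with head in $V(G_i)$, an unoccupied vertex in $V(G_{3-i})$, and the apple in $V(G_i)$, it loses by the latter. Without loss of generality assume $G_1+v$ is not complete.

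At length $m+1$, pigeonhole forces $v$ into the snake (each $V(G_i)$ has only $m$ vertices), so the snake either \emph{straddles} $v$, meaning it has vertices in both $V(G_1)$ and $V(G_2)$, or it equals $V(G_i)\cup\{v\}$ for some $i$. In Phase~I (reaching length $m+1$) the apple placer maintains the invariant that after each $\alpha$ move the snake contains at least one vertex of $V(G_2)$, by placing the next apple in $V(G_2)$ whenever this invariant would be violated; this rules out the snake equaling $V(G_1)\cup\{v\}$ at length $m+1$.

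In Phase~II (at length $m+1$) the apple placer responds to the two remaining configurations. If the snake straddles with head in $V(G_i)$, place the apple at any unoccupied $u\in V(G_i)$: the snake cannot expel $v$ by any sequence of non-eating moves (that would leave it inside a single $V(G_j)$, which has only $m<m+1$ vertices) and cannot add $u$ by a $\beta$ move (such a move would be an $\alpha$), so its pre-$\alpha$ state is still straddling when $\alpha$ occurs; the post-$\alpha$ head graph equals the pre-$\alpha$ unoccupied set, which contains vertices of both components and is therefore not a clique, triggering Lemma~\ref{lemma: longer than circ}. If the snake equals $V(G_2)\cup\{v\}$, the apple placer exploits the incompleteness of $G_1+v$: either (i) some $u\in V(G_1)$ satisfies $u\not\sim v$, and placing the apple at $u$ forces the snake (after any $\gamma$-rotations that bring the head to $v$, lest it otherwise get stuck or repeat a state) to $\beta$ into $V(G_1)\setminus\{u\}$, producing a state that triggers Lemma~\ref{lemma: too long apple 1c} (head in $V(G_1)$, length $m+1$, apple in $V(G_1)$, and the expelled tail in $V(G_2)$ is unoccupied); or (ii) $v$ is adjacent to all of $V(G_1)$ but $G_1$ itself is not complete as an induced subgraph, and placing the apple at any $u\in V(G_1)$ leads via $\alpha$ to a length-$(m+2)$ state whose head graph equals the non-complete $G_1$, again triggering Lemma~\ref{lemma: longer than circ}. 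The main obstacle is justifying Phase~I's invariant rigorously: one must show the snake cannot drain all its $V(G_2)$ vertices while navigating to the $V(G_2)$ apple, which requires bounding the $\beta$-moves needed to cross $v$ against the single $V(G_2)$ vertex gained per $\alpha$.
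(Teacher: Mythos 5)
Your overall architecture is viable and genuinely different from the paper's: you classify the snake's possible configurations at length $m+1$ (straddling, $V(G_1)\cup\{v\}$, $V(G_2)\cup\{v\}$) and eliminate each, whereas the paper never needs such a classification --- it simply places apples at lengths $m-1$ and $m$ (first on an unoccupied vertex of $G_2$, then on one of $G_1$) to \emph{force} the snake into a single configuration at length $m+1$: head in $G_1$ and an unoccupied vertex in $G_2$. From there the paper splits only on whether $G_1$ still has an unoccupied vertex, citing Lemma~\ref{lemma: too long apple 1c} in the first branch and Lemma~\ref{lemma: longer than circ} in the second. Your Phase~II is essentially sound: in the straddling case your observation that the snake cannot shed all vertices of the opposite component (because the apple vertex keeps one vertex of the head's component unoccupied) is correct, though the detour through the head graph is unnecessary --- Lemma~\ref{lemma: too long apple 1c} applies verbatim to that configuration. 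Your sub-case (ii) should also spell out the branch where the snake makes a $\beta$ move into $V(G_1)\setminus\{u\}$ instead of eating immediately; that branch is again Lemma~\ref{lemma: too long apple 1c}, exactly as in your sub-case (i).

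The genuine gap is the one you flagged yourself: Phase~I. As written, the rule ``place the apple in $V(G_2)$ whenever the invariant would be violated'' is not a strategy the apple placer can execute, because the placer commits to an apple location before seeing how the snake navigates, and a snake heading for an apple in $V(G_1)$ can indeed drop every one of its $V(G_2)$ vertices off its tail en route. However, the obstacle dissolves once you notice that you do not need the invariant after \emph{every} $\alpha$ move --- only after the one that takes the snake from length $m$ to length $m+1$. At length $m$, either some vertex of $V(G_2)$ is unoccupied, in which case the placer puts the apple there and the post-$\alpha$ \emph{head} is itself the required $V(G_2)$ vertex (no draining argument needed); or the snake occupies exactly $V(G_2)$, in which case the placer puts the apple on $v$ and the snake becomes exactly $V(G_2)\cup\{v\}$, which is one of your two admissible Phase~II configurations. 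With that repair, and the small additions above, your argument goes through.
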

\begin{proof}
     If~$G_1$ and~$G_2$ do not have the same number of vertices, then~$G$ is not snake-winnable by Lemma~\ref{lemma: 1c diff size}. Hence, we may assume that~$G_1$ and~$G_2$ have the same number of vertices. Let~$|V(G_1)|=|V(G_2)|=m$. If~$m=1$, then~$G$ has a vertex of degree 1 and by Observation~\ref{obs: degree_one}~$G$ is not snake-winnable. Hence, we may assume that~$m\geq 2$.
     
     Suppose that~$G_1+v$ and~$G_2+v$ are not both complete subgraphs of~$G$. By symmetry, we assume~$G_2+v$ is incomplete.

     When the snake reaches length~$m-1$, there must be some unoccupied vertex~$u_2$ in~$G_2$. The next apple is placed on~$u_2$. When the snake eats the apple on~$u_2$ and grows to length~$m$, there will be at most~$m-2$ vertices in~$G_1$ that are occupied. Hence, the apple placer can place the next apple on some~$u_1$ in~$G_1$, guaranteeing that the head of the snake is in~$G_1$ when it grows to length~$m+1$. Furthermore, there must be an unoccupied vertex in~$G_2$.
     
     First, suppose that there is some unoccupied vertex~$v_1$ in~$G_1$. Then the apple placer places the next apple on~$v_1$. Since the snake has length~$m+1$ and both its head and the apple are in~$G_1$, by Lemma~\ref{lemma: too long apple 1c}, the snake will lose.

     Now suppose that there are no unoccupied vertices in~$G_1$. Recall that we assumed~$G_2+v$ is incomplete. If there is some vertex in~$G_2$ that is not adjacent to~$v$, then the apple placer places the next apple there. This means that if the snake moves to~$v$ and then immediately eats the apple, then after eating the apple, the head graph is incomplete. Since the snake now has length~$m+2$ and the circumference of~$G$ is at most~$m+1$, by Lemma~\ref{lemma: longer than circ}, the snake will lose.

     If the snake first moves to~$v$ and then to a vertex in~$G_2$ without the apple, then a vertex in~$G_1$ will become unoccupied. Since the snake has length~$m+1$ and both its head and the apple are in~$G_2$,  by Lemma~\ref{lemma: too long apple 1c} the snake will lose. \qed
\end{proof}

\end{document}